\documentclass[%
 reprint,
 amsmath,amssymb,
 aps,
]{revtex4-2}
\usepackage[utf8]{inputenc}
\usepackage{graphicx}
\usepackage{dcolumn}
\usepackage{bbm}
\usepackage{amssymb}
\usepackage{MnSymbol}
\usepackage{amsfonts}
\usepackage{mathtools}
\usepackage{amsmath}
\usepackage{amsthm}
\usepackage{xcolor}
\usepackage{hyperref}
\usepackage{comment}
\usepackage{physics}
\usepackage{wasysym}
\usepackage{array,booktabs}
\usepackage{subcaption}

\usepackage{xcolor}

\definecolor{cA}{HTML}{D62728}
\definecolor{cB}{HTML}{1F77B4}
\definecolor{cC}{HTML}{2CA02C}
\definecolor{cD}{HTML}{FF7F0E}
\definecolor{cE}{HTML}{9467BD}
\definecolor{cF}{HTML}{8C564B}
\definecolor{cG}{HTML}{E377C2}
\definecolor{cH}{HTML}{17BECF}
\definecolor{cI}{HTML}{7F7F00}
\definecolor{cJ}{HTML}{7F7F7F}

\usepackage{tikz}
\usetikzlibrary{matrix,positioning}

\newcommand\mc[1]{\mathcal{#1}}
\newcommand{\cA}{{\mc{A}}}

\newcommand{\cC}
{{\mc{C}}}
\newcommand{\CH}{{\cC(\cH)}}

\newcommand{\CO}
{{\cC(\cO)}}

\newcommand{\cH}{{\mc{H}}}

\newcommand{\cL}{{\mc{L}}}

\newcommand{\cO}{{\mc{O}}}
\newcommand{\cP}{{\mc{P}}}
\newcommand{\cQ}{{\mc{Q}}}

\newcommand{\aP}{{\widetilde{\mc{P}}}}

\newcommand{\cS}{{\mc{S}}}

\newcommand{\ta}{{\widetilde{a}}}

\newcommand{\tx}{{\widetilde{x}}}
\newcommand{\ty}{{\widetilde{y}}}

\newcommand{\tA}{{\widetilde{A}}}

\newcommand{\tU}{{\widetilde{U}}}

\newcommand{\hf}{{\widehat{f}}}
\newcommand{\hA}{{\widehat{A}}}

\newcommand{\oL}{{\overline{L}}}

\newcommand{\ox}{{\overline{x}}}
\newcommand{\oy}{{\overline{y}}}

\newcommand{\LH}{{\mc{L}(\cH)}}
\newcommand{\LHsa}{{\mc{L}_\mathrm{sa}(\cH)}}

\newcommand{\StabS}{{\cS^\mathrm{stab}_n}}
\newcommand{\StabSS}{{\cS^\mathrm{symp}_n}}

\newcommand{\PH}{{\cP(\cH)}}
\newcommand{\PHone}{{\cP_1(\cH)}}
\newcommand{\PO}{{\cP(\cO)}}

\newcommand{\Pm}{{\cP_\mathrm{min}}}
\newcommand{\R}{{\mathbb R}}
\newcommand{\B}{{\mathbb{F}_2}}
\newcommand{\C}{{\mathbb C}}
\newcommand{\Q}{{\mathbb Q}}
\newcommand{\CP}{{\cC(\cP)}}
\newcommand{\CPm}{{\cC_\mathrm{max}(\cP)}}

\newcommand{\one}{{\mathbbm{1}}}

\newcommand{\zz}{{\mathbb{Z}}}
\newcommand{\ra}{{\rightarrow}}

\newcommand{\Sym}{{\mathrm{Sym}}}

\newcommand{\id}{{\mathrm{id}}}

\newcommand{\Hol}{{\mathrm{Hol}}}
\newcommand{\Stab}{{\mathrm{Stab}}}

\newcommand{\Lag}{{\mathrm{Lag}}}
\newcommand{\stab}{{\mathrm{stab}}}

\newcommand{\Iso}{{\mathrm{Iso}}}

\newcommand{\ConP}{{\cC(\aP_n)}}
\newcommand{\ConPP}{{\cC(\overline{\cP}_n)}}

\newcommand{\ConqP}{{\cC_q(\aP_n)}}
\newcommand{\ConS}{{\cC(\cP^\mathrm{stab}_n)}}
\newcommand{\pV}{{\cP^\mathrm{symp}_n}}
\newcommand{\ConSS}{{\cC(\pV)}}
\newcommand{\ConQ}{{\cC(\Q^{2^{n-1}})}}

\newtheorem{definition}{Definition}
\newtheorem{lemma}{Lemma}
\newtheorem{corollary}{Corollary}
\newtheorem{proposition}{Proposition}
\newtheorem{theorem}{Theorem}

\begin{document}
\preprint{APS/123-QED}

\title{Contextuality in the $n$-qubit Pauli group}
\author{Markus Frembs}
\email{markus.frembs@itp.uni-hannover.de}
\affiliation{Institut f\"ur Theoretische Physik, Leibniz Universit\"at Hannover, Appelstraße 2, 30167 Hannover, Germany}

\begin{abstract}
    The $n$-qubit Pauli group is an essential ingredient to most quantum applications, from computing and error correction to benchmarking and simulation. Despite comprising merely a discrete set of operators, it exhibits many quintessential features of quantum theory, including contextuality, which has been identified as a key resource to quantum advantage in a variety of different flavours. Here, we extend this analysis, introducing the notion of a `noncontextual property' whose nonexistence proves the Kochen-Specker theorem, similarly to and generalising common arguments based on the nonexistence of valuations. We relate this notion formally to the existence of Boolean-valued frame functions, and characterise all such frame functions in the case of the $n$-qubit Pauli group.
    
    For two qubits, we show that the Pauli group admits noncontextual properties, despite admitting no valuations. We then establish this as the only nontrivial such case with $n\geq 2$, by proving that any Boolean-valued frame function on stabiliser states is constant for more than two qubits. We also perform a similar analysis for the symplectic theory underlying the $n$-qubit Pauli group, for which nonconstant Boolean-valued frame functions exist for all $n$, yet only in restricted form. By comparison, this shows that contextuality in the n-qubit Pauli group is not only a consequence of the projective nature of the Pauli group as a representation of its underlying symplectic vector space, but of the geometry of symplectic polar spaces itself. In geometric terms, our result determines all Cameron–Liebler sets of maximal totally isotropic flats in the binary affine-symplectic space.
\end{abstract}

\maketitle

\renewcommand{\topfraction}{0.95}
\renewcommand{\textfraction}{0.05}
\renewcommand{\floatpagefraction}{0.95}
\setcounter{topnumber}{3}
\setcounter{totalnumber}{5}

\section{Introduction}

Contextuality is a key indicator for nonclassicality. The Kochen-Specker (KS) theorem \cite{Specker1960,KochenSpecker1967} asserts that (a subset of) the observables of a quantum system cannot be represented in terms of measurable functions on an underlying (classical) phase space. This foundational no-go theorem also admits a positive reading: contextuality acts as a resource in quantum computation and quantum information processing \cite{HowardEtAl2014,BudroniEtAl2022}. This motivates to analyse this resource, in particular, in settings that are relevant for applications of future quantum hardware. A natural candidate is the $n$-qubit Pauli group, which is of paramount importance to modern architectures for quantum computing, error correction, simulation etc. While contextuality of the Pauli group has been analysed before, here we substantially extend this analysis. More precisely, we define a generalised notion of noncontextual property that arises naturally from comparing the recent reformulation of the KS theorem in Ref.~\cite{Frembs2024,Frembs2025} with the common (necessary but not sufficient) criterion for noncontextuality: the existence of a valuation, that is, a map assigning operators unique spectral values in a way that respects algebraic relations between compatible, i.e. commuting ones. With respect to this generalisation, we fully characterise noncontextual properties in the $n$-qubit Pauli group (and associated stabiliser theory) as well as in its underlying symplectic theory. By comparing the two, we find that contextuality in the $n$-qubit Pauli group, in the sense of Kochen and Specker \cite{KochenSpecker1967,Frembs2024,Frembs2025}, is not solely a consequence of the group-cohomological characterisation of the Pauli group as a central extension of its underlying symplectic vector space \cite{Raussendorf2019,RaussendorfEtAl2017,OkayTyhurstRaussendorf2018,OkaySheinbaum2019,RaussendorfEtAl2023}.

\section{Noncontextual properties and frame functions}\label{sec: NC properties}

We begin by distinguishing Kochen--Specker (KS) noncontextuality from the existence of valuations. Let $\cP$ be a partial Boolean algebra (more generally, an event algebra \cite{Frembs2025}) and denote by $\CP$ its \emph{context poset}, whose objects are its finite Boolean subalgebras, ordered and identified along their overlaps.\footnote{We regard $\CP$ not merely as a poset, but as the order-preserving diagram of Boolean algebras over it; this distinction will matter in Sec.~\ref{sec: NC symplectic properties}, and is explained in more detail in App.~\ref{app: frame functions}.} We write $\CPm$ for the maximal contexts and $\Pm$ for the atoms of $\cP$. Throughout, maximal contexts are assumed to contain equally many atoms and $\cP$ to admit a dimension function $\dim:\cP\to\mathbb N$; the precise setup, and its relation with observable algebras in quantum mechanics, is reviewed in App.~\ref{app: frame functions}.

Following Refs.~\cite{Frembs2024,Frembs2025}, the obstruction to Kochen-Specker noncontextuality can be expressed in terms of identifications between different contexts.

\begin{definition}\label{def: context connection}
    Let $\cP$ be an event algebra with context poset $\CP$. A context connection $\nabla=\{\nabla_{C'C}\}_{C,C'\in\CPm}$ on $\CP$ is a family of isomorphisms, $\nabla_{C'C}:C\ra C'$, between maximal Boolean algebras $C,C'\in\CPm$ such that $\nabla_{CC'}=\nabla^{-1}_{C'C}$ and
    \begin{align*}
        \nabla_{C'C}|_{C\cap C'}=\mathrm{id}\; .
    \end{align*}
\end{definition}

Next, we evaluate context connections along context cycles, tuples of maximal contexts $\gamma=(C_0,\cdots,C_{n-1},C_n)\in(\CPm)^{n+1}$, where cyclicity means $C_n=C_0$.

\begin{definition}\label{def: contextual holonomy}
    Let $\cP$ be an event algebra, $\CP$ its context poset, and let $\nabla$ be a context connection on $\CP$. Then its \emph{holonomy group at $C_0\in\CPm$} is defined as,
    \begin{align}
        \Hol_{C_0}(\nabla)
        :=\langle P_\gamma(\nabla)\mid\gamma\ \mathrm{context\ cycle}\rangle
        \subset\Sym(C_0)\; ,
    \end{align}
    where $P_\gamma(\nabla)=\circ_{i=0}^{n-1}\nabla_{C_{i+1}C_i}$ for $\gamma=(C_0,\cdots,C_{n-1},C_n=C_0)$ and $\Sym(C_0):=\Sym(\Pm(C_0))\cong S_{|\Pm(C_0)|}$ with $S_n$ the permutation group on $n$ elements.
\end{definition}

As shown in Refs.~\cite{Frembs2024,Frembs2025}, KS noncontextuality is equivalent to the existence of a context connection satisfying
\begin{align}\label{eq: flatness}
    \Hol_{C_0}(\nabla)=\{\id\}\; .
\end{align}
By contrast, a valuation on $\cP$ requires only that one atom $p_0\in\Pm$ can be identified consistently, equivalently that the holonomy fixes this atom (see App.~\ref{app: frame functions}), that is,
\begin{align*}
    \Hol_{C_0}(\nabla)\subseteq\Stab(p_0)\; .
\end{align*}
This observation suggests relaxing the latter condition from atoms to arbitrary events.

\begin{definition}\label{def: noncontextual property}
    Let $\cP$ be an event algebra with context poset $\CP$. Then $(0,\one\neq) p_0\in C_0\in\CPm$ is called a \emph{(nontrivial) noncontextual property of $\CP$} if there exists a context connection $\nabla$ on $\CP$ such that
    \begin{align}\label{eq: generalised stabiliser condition}
        \Hol_{C_0}(\nabla)
        \leq\Stab(p_0)\; .
    \end{align}
\end{definition}
Valuations are the special case where $p_0$ is an atom.

We define a Boolean-valued frame function of weight $k$ as a finitely additive map $f:\cP\to\mathbb N_0$ such that
\begin{align*}
    f(\Pm)\subseteq\{0,1\}\; ,\qquad
    f(\one)=k\; ,
\end{align*}
where $\one$ denotes the unit in $\cP$. Equivalently, we may regard it as a map $f:\Pm\ra\{0,1\}$ selecting exactly $k$ atoms in every maximal context. Noncontextual properties and Boolean-valued frame functions are equivalent.

\begin{theorem}\label{thm: NP=2FP}
    Let $\cP$ be an event algebra and $\CP$ its context poset. Then $\CP$ admits a nontrivial noncontextual property $p\in\cP$ if and only if there exists a nonconstant Boolean-valued frame function of weight $\dim(p)$ on $\cP$.
\end{theorem}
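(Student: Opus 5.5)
Both directions go through the correspondence between a context connection and the atoms it transports. Throughout, $\dim(p)$ denotes the number of atoms below $p$ in a maximal context. Since all maximal contexts contain equally many atoms, every $\nabla_{C'C}$ is a bijection $\Pm(C)\to\Pm(C')$ that fixes the atoms of $C\cap C'$.

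\emph{($\Rightarrow$)} Let $p_0\in C_0$ satisfy \eqref{eq: generalised stabiliser condition}, with $p_0\neq 0,\one$. Assuming $\CP$ is connected through overlapping contexts (otherwise argue componentwise), for each $C\in\CPm$ I choose a path $\gamma_C$ from $C_0$ to $C$. I then set $p_C:=P_{\gamma_C}(\nabla)(p_0)\in C$. By the holonomy condition, $p_C$ does not depend on the chosen path: two paths differ by a context cycle, and every cycle's holonomy stabilises $p_0$. I define $f(q)=1$ for an atom $q\in\Pm(C)$ exactly when $q\le p_C$.

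The key step is showing that $f$ is well defined on atoms shared by two contexts. Let $q\in C\cap C'$. Then $\nabla_{C'C}$ fixes $q$, and by path independence $\nabla_{C'C}(p_C)=p_{C'}$. Hence $q\le p_C$ if and only if $q=\nabla_{C'C}(q)\le p_{C'}$. Adjacent contexts may share an atom without $\nabla$ being applied along a direct step, but path independence covers this, since $(C_0,\dots,C,C',\dots)$ is an admissible path. Each context then contains exactly $\dim(p_0)$ selected atoms, so $f$ extends additively to a Boolean-valued frame function of weight $\dim(p_0)$. It is nonconstant because $0<\dim(p_0)<\dim(\one)$: in $C_0$ some atom has value $1$ and some has value $0$.

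\emph{($\Leftarrow$)} Let $f$ be nonconstant of weight $k$. Then $0<k<\dim\one$, since weight $0$ or full weight forces $f$ to be constant. For each $C$, let $p_C:=\bigvee\{q\in\Pm(C): f(q)=1\}$, so $\dim p_C=k$. I construct $\nabla_{C'C}$ by mapping $\Pm(C)\cap p_C$ bijectively onto $\Pm(C')\cap p_{C'}$ and the complementary atoms onto the complementary atoms. Both sets have matching cardinalities because the weight is constant, and atoms of $C\cap C'$ are sent to themselves. The identity on the overlap is compatible with this split because $f$ takes the same value on a shared atom in both contexts. I choose these bijections for one orientation of each pair and set $\nabla_{CC'}=\nabla_{C'C}^{-1}$. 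Every $\nabla_{C'C}$ then maps $p_C$ to $p_{C'}$, so every context cycle returns $p_{C_0}$ to itself. This gives $\Hol_{C_0}(\nabla)\le\Stab(p_{C_0})$, and $p_{C_0}$ is a nontrivial noncontextual property of dimension $k$.

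I expect the main obstacle to be bookkeeping rather than mathematics. In the forward direction the point is the path-independence argument and the resulting overlap consistency. One also has to check that the setup of App.~\ref{app: frame functions} guarantees that $\Pm(C\cap C')\subseteq\Pm(C)$, i.e.\ that atoms of the intersection are atoms of each context. If this fails, I would instead work with the minimal elements of $C\cap C'$ and apply the same splitting argument blockwise.
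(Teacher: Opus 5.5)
Your overall route matches the paper's. You transport $p_0$ to obtain $p_C$ and read off $f$; conversely, you split each maximal context into $f$-selected and unselected atoms and match them. However, in the direction ``$\Leftarrow$'' the step that carries the actual content is not established.

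The condition you propose to check, $\Pm(C\cap C')\subseteq\Pm(C)$, never holds for $C\neq C'$. The atoms of the Boolean subalgebra $C\cap C'$ partition $\one$, so if they were all atoms of $C$ they would exhaust $\Pm(C)$, forcing $C\subseteq C'$, i.e.\ $C=C'$. In $\ConP$, for instance, distinct Lagrangian contexts share no atoms at all, since a stabiliser state determines its Lagrangian; the overlap $C\cap C'$ consists only of coarse-grained codespace projectors.

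So your fallback is the generic case, and it does not follow from ``the same splitting argument''. Constant weight only matches the \emph{total} numbers of selected and unselected atoms in $C$ and $C'$. By contrast, $\nabla_{C'C}|_{C\cap C'}=\mathrm{id}$ requires, for every atom $r$ of $C\cap C'$, that the atoms of $C$ below $r$ be mapped onto the atoms of $C'$ below $r$, selected to selected. For this you need two equalities: $|\{q\in\Pm(C)\mid q\leq r\}|=|\{q\in\Pm(C')\mid q\leq r\}|$ and $|\{q\in\Pm(C)\mid q\leq r,\ f(q)=1\}|=|\{q\in\Pm(C')\mid q\leq r,\ f(q)=1\}|$. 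These are exactly Eqs.~(\ref{eq: dim constraint}) and (\ref{eq: f constraint}) in the paper's proof. The first comes from the dimension function (maximally mixed state). The second comes from $f$ being finitely additive on all of $\cP$, so that $f(r)$ is a single number independent of the context in which it is computed. A map on atoms that merely selects $k$ atoms in every maximal context would not suffice. With both counts in hand you biject blockwise; the resulting isomorphism fixes $C\cap C'$ pointwise and sends $p_C$ to $p_{C'}$, after which your holonomy conclusion goes through.

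The direction ``$\Rightarrow$'' is correct, with two remarks. First, the connectivity and path-independence detour is unnecessary. A context connection is defined on all pairs of maximal contexts and any tuple is a context cycle, so $\gamma_C=(C_0,C)$ is already admissible. Your $p_C=\nabla_{CC_0}(p_0)$ then reproduces the paper's $f_\nabla$, with $f_\nabla(q)=1$ if and only if $\nabla_{C_0C}(q)\leq p_0$. Second, ``extends additively'' hides the same coarse-graining issue as above. For a non-atomic $p\in C\cap C'$ you must check that the number of selected atoms below $p$ does not depend on the context. This is one line with your tools: $\nabla_{C'C}$ fixes $p$ and maps $p_C$ to $p_{C'}$, hence maps the atoms of $C$ below $p\wedge p_C$ bijectively onto those of $C'$ below $p\wedge p_{C'}$. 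The paper states this step explicitly.
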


\begin{proof}
    We provide the details in App.~\ref{app: frame functions}.
\end{proof}

Characterising noncontextual properties is thus equivalent to classifying Boolean-valued frame functions. The generalisation beyond valuations is nontrivial in both directions: the two-qubit Mermin--Peres square \cite{Mermin1990,Peres1991,Mermin1993} admits no valuation, yet it admits the rank-$2$ noncontextual properties illustrated in Fig.~\ref{fig: the noncontextual MP-square}; conversely, Thm.~\ref{thm: symplectic Boolean frame functions} below shows that the existence of valuations does not imply the existence of arbitrary noncontextual properties.

\begin{figure}[ht]
    \centering
    
    \begin{minipage}[t]{0.45\textwidth}
    \centering
    \scalebox{.8}{
    \begin{tikzpicture}[
        mpcell/.style={
            draw,
            minimum width=1.9cm,
            minimum height=1.15cm,
            align=center,
            font=\large
        },
        rowlabel/.style={font=\large\bfseries},
        collabel/.style={font=\large\bfseries},
        bluecell/.style={mpcell, fill=blue!12, draw=blue!65!black, line width=0.9pt}
    ]
    
    \node[collabel] at (0,1) {$C_1$};
    \node[collabel] at (1.9,1) {$C_2$};
    \node[collabel] at (3.8,1) {$C_3$};
    
    \node[rowlabel] at (-1.4,0) {$R_1$};
    \node[rowlabel] at (-1.4,-1.15) {$R_2$};
    \node[rowlabel] at (-1.4,-2.30) {$R_3$};
    
    \node[mpcell]    at (0,0)       {$X_1$};
    \node[bluecell]  at (1.9,0)     {$X_2$};
    \node[mpcell]    at (3.8,0)     {$X_1X_2$};
    
    \node[mpcell]    at (0,-1.15)   {$Y_2$};
    \node[mpcell]    at (1.9,-1.15) {$Y_1$};
    \node[bluecell] at (3.8,-1.15) {$Y_1Y_2$};
    
    \node[bluecell] at (0,-2.30)   {$X_1Y_2$};
    \node[mpcell]     at (1.9,-2.30) {$Y_1X_2$};
    \node[mpcell]     at (3.8,-2.30) {$Z_1Z_2$};
    
    \end{tikzpicture}
    }
    
    \vspace{0.6em}
    
    \small
    
    \end{minipage}
    \hfill
    \caption{The highlighted mutually anti-commuting Pauli operators determine the rank-$2$ projectors $\Pi_P=\frac{1}{2}(\one+P)$ onto their $+1$-eigenspaces.}
    \label{fig: the noncontextual MP-square}
\end{figure}

Motivated by this distinction, we turn to analyse Def.~\ref{def: noncontextual property} using the $n$-qubit Pauli group as a case study. More precisely, we classify all Boolean-valued frame functions for the $n$-qubit Pauli group and its stabiliser theory, and compare the result with the underlying symplectic theory. The results are summarised in Tab.~\ref{tab: summary}.

\begin{table}[t!]
\caption{Nonconstant Boolean-valued frame functions $f:\cP_{\min}\to\{0,1\}$, equivalently nontrivial noncontextual properties (Thm.~\ref{thm: NP=2FP}), for the $n$-qubit Pauli group and its underlying symplectic theory (Thm.~\ref{thm: stabiliser frame functions} and Thm.~\ref{thm: symplectic Boolean frame functions}).}
\label{tab: summary}
\footnotesize
\setlength{\tabcolsep}{3pt}

\begin{subtable}{\columnwidth}
\centering
\begin{tabular}{c|ccc}
    \toprule
    & $n=1$ & $n=2$ & $n\geq3$ \\
    \midrule
    weights & $\{1\}$ & $\{2\}$ & none \\
    $f(\neq 0,1)$ & valuations & $f_{q,s}$ & none \\
    &  & (Prop.~\ref{prop: noncontextual MP-square}) & (Thm.~\ref{thm: stabiliser frame functions}) \\
    \bottomrule
\end{tabular}
\caption{stabiliser theory, $\cP=\cP^\mathrm{stab}_n$} 
\end{subtable}\par\medskip\begin{subtable}{\columnwidth}
\centering
\begin{tabular}{c|ccc}
    \toprule
    & $n=1$ & $n=2$ & $n\geq3$ \\
    \midrule
    weights & $\{1\}$ & $\{1,2,3\}$ & $\{1,2,2^n-2,2^n-1\}$ \\
    $f(\neq 0,1)$ & valuations & $f_{\alpha,c}$, $f_{q,s}$ & $f_{\alpha,c}$, $f_{q,\alpha}$ \\
    & & (Lm.~\ref{lm: compatible characters}, Lm.~\ref{lm: noncontextual property in symplectic stabiliser poset}) & (Thm.~\ref{thm: symplectic Boolean frame functions}) \\
    \bottomrule
\end{tabular}
\caption{symplectic theory, $\cP=\cP^\mathrm{symp}_n$}
\end{subtable}
\end{table}

\section{Noncontextual properties in the $n$-qubit Pauli group}\label{sec: NC properties Pauli group}

We first analyse Def.~\ref{def: noncontextual property} for Hermitian $n$-qubit Pauli operators $\cO=\pm\aP_n$, where $\aP_n=\{\one,X,Y,Z\}^{\otimes n}$. Explicitly, we will use the Weyl representation (see e.g. Ref.~\cite{deBeaudrap2013}),
\begin{align}\label{eq: Weyl operators}
    W_v
    :=i^{a\cdot b}\otimes_{i=1}^nZ^{a_i}_iX^{b_i}_i\; ,
\end{align}
for all $v=(a,b)\in V=\zz^{2n}_2$ and $q_W(v):=a\cdot b:=\sum_{i=1}^na_ib_i$. It follows that $\omega(v,v'):=a\cdot b'-a'\cdot b$ defines a symplectic (that is, nondegenerate, alternating, bilinear) form $\omega:V\times V\ra\zz_2$ on $V$ such that $W_vW_w=(-1)^{\omega(v,w)}W_wW_v$ for all $v,w\in V$, and $W_vW_w=(-1)^{\beta_W(v,w)}W_{v+w}$ whenever $\omega(v,w)=0$, in which case $\beta_W=\frac{\gamma_W}{2}$ for $\gamma_W:V\times V\ra\zz_4$ the $2$-cocycle of the Weyl section $v\ra W_v$ in Eq.~(\ref{eq: Weyl operators}) (of the central extension $1\ra\zz_4\ra\cP_n\ra V\ra 1$).\footnote{The Hermitian Weyl section determines the $\zz_4$-valued cocycle $\gamma_W$ of the central extension. On compatible pairs, it is even and determines the $\zz_2$-valued multiplication phase $\beta_W$, which is itself a $2$-cocycle of the cochain complex of the commutativity structure of the full Pauli group, as used in contextuality proofs 
\cite{Raussendorf2019,RaussendorfEtAl2017,OkayTyhurstRaussendorf2018,OkaySheinbaum2019,RaussendorfEtAl2023}; the two cohomology obstructions are distinct but closely related, in particular, for $n>1$ both $[\gamma_W]\neq0\neq[\beta_W]$; moreover, $\beta_W$ determines $\gamma_W$ up to complex conjugation, upon which $\beta_W\mapsto\beta_W$ yet $\gamma_W\mapsto\gamma_W+2\omega$. Finally, note that the choice of cocycle $\beta_W/\gamma_W$ is without loss of generality: re-phasing $W_v\mapsto (-1)^{s(v)}W_v$ by $s:V\ra\zz_2$, together with a simultaneous change $\lambda_U\mapsto\lambda'_U=\lambda_U+s|_U$, leaves the projectors in Eq.~(\ref{eq: symplectic codespace projectors}) invariant.}

We denote by $\cA=\cA(\aP_n):=\{A<\cP_n\mid-\one\notin A\subset\pm\aP_n, A\ \mathrm{Abelian}\}$ the set of stabiliser subgroups in $\cP_n$, where $\cP_n=\langle\aP_n\rangle=Z(\cP_n)\cdot\aP_n$ and $Z(\cP_n)=\{\pm i,\pm 1\}$.\\

\textbf{Contexts from stabiliser subgroups.} In order to evaluate Eq.~(\ref{eq: generalised stabiliser condition}) on the poset of stabiliser subgroups, we need to change perspective from generating Pauli operators to the projections spanning a context. To do so, for every Abelian stabiliser subgroup $A\in\cA$ and linear character\footnote{Let $A^*:=\{\chi:A\ra\zz_2\mid\chi\mathrm{\ additive}\}$, and note that $A^*\cong\hA:=\{\chi:A\ra U(1)\mid\chi\mathrm{\ multiplicative}\}$ for $\hA$ the space of characters.} $\chi:A\ra\zz_2$, define the (codespace) projector
\begin{align}\label{eq: codespace projectors}
    \Pi_{A,\chi}
    :=\frac{1}{|A|}\sum_{P\in A}(-1)^{\chi(P)}P\; .
\end{align}
$\Pi_{A,\chi}$ is rank-$1$ and corresponds to a stabiliser state if and only if $A$ is maximal Abelian. More generally, if $\tA\in\cA$ has dimension $k$, then $\Pi_{\tA,\chi}$ has rank $2^{n-k}$. It is useful to translate Eq.~(\ref{eq: codespace projectors}) into the symplectic setting of $(V=\zz^{2n}_2,\omega)$. Recall that a subspace $W\subset V$ is called \emph{isotropic}, if the restriction of the symplectic form vanishes on $W$, that is, $\omega(w,w')=0$ for all $w,w'\in W$. We denote the space of isotropic subspaces of $V$ by $\Iso(V)$. Maximal isotropic subspaces are called \emph{Lagrangian}, have dimension $n$ and are denoted by $\Lag(V)$. With respect to the Weyl representation in Eq.~(\ref{eq: Weyl operators}), Eq.~(\ref{eq: codespace projectors}) then becomes
\begin{align}\label{eq: symplectic codespace projectors}
    \Pi_{U,\chi}
    :=\frac{1}{|U|}\sum_{v\in U}(-1)^{\chi(v)+\lambda_U(v)}W_v\; ,
\end{align}
where $\lambda_U$ is any solution to the equation $\lambda_U(v+w)=\lambda_U(v)+\lambda_U(w)+\beta_W(v,w)$ for $\beta_W$ the $2$-cocycle of the Weyl representation in Eq.~(\ref{eq: Weyl operators}).\footnote{In group-cohomological terms, $\lambda$ has coboundary the $2$-cocycle of the Weyl section, that is, $\delta\lambda_U=\beta_W|_{U\times U}$. We will mostly suppress the label $\lambda$ in our notation.} In particular, $\{\chi+\lambda_U\}_{\chi\in U^*}$ is an affine torsor over $U^*$. In other words, $\Pi_{U,\chi}$ is the projector onto the joint $\chi(v)+\lambda_U(v)$-eigenspaces of Paulis $W_v$ with $v\in U$. With respect to the Weyl representation, we may thus identify stabiliser subgroups $A^{\lambda_L}=\{(-1)^{\lambda_L(v)}W_v\mid v\in L\}\cong L$ with their uniquely associated Lagrangian subspaces $L$.

Now, consider the set of contexts of the form
\begin{align}\label{eq: Pauli context}
    C_U
    :=\mathrm{span}_\B\big\{\{\Pi_{U,\chi}\}_{\chi\in U^*}\big\}
\end{align}
for every isotropic subspace $U\in\Iso(V)$. Under inclusion, these define a partial order,
\begin{align}\label{eq: Pauli context poset}
    \ConP(=\cC(\cP(\aP_n)))
    :=\Big(\{C_U\}_{U\in\Iso(V)},\subseteq\Big)\; .
\end{align}
By construction, contexts in $\ConP$ correspond bijectively with isotropic subspaces, and $\tU\subset U$ if and only if $C_\tU\subset C_U$ for $\tU,U\in\Iso(V)$, that is, $\ConP$ (is the order-preserving diagram that) assigns a Boolean algebra $C_U$ to every element of its underlying poset $(\Iso(V),\subseteq)$.

Note that nontrivial minimal contexts of $\ConP$ are of the form $C_v=\mathrm{span}_\B(\Pi^\pm_v,\one)$ (with $\mathrm{span}_\C(C_v)=\{W_v,\one\}''$), for $\Pi^\pm_v$ the rank-$2^{n-1}$ projector onto the $\pm 1$-eigenspace of $W_v$. Notably, $\ConP$ contains far fewer contexts than those generated by all stabiliser states. For instance, $C=\mathrm{span}_\B(\dyad{00},\dyad{01},\dyad{1+},\dyad{1-})$, is generated by stabiliser states, but not contained in $\ConP$, as the first two projections have stabiliser $\langle Z_1,Z_2\rangle$, while the latter have stabiliser group $\langle Z_1,X_2\rangle$.\\

\textbf{Stabiliser frame functions.} We also consider the context poset generated by all stabiliser states $\StabS\subset\cP_1(\C^{2^n})\cong\mathbb{CP}^{2^n-1}$. Since every stabiliser state is uniquely defined by its stabiliser group $A\in\cA$, equivalently its associated Lagrangian subspace $L\in\Lag(V)$ and character $\chi\in L^*$, we have (with Eq.~(\ref{eq: symplectic codespace projectors})),
\begin{align}\label{eq: stabiliser projectors}
    \StabS
    =\{\Pi_{L,\chi}\mid L\in\Lag(V),\chi\in L^*\}\; .
\end{align}
More generally, stabiliser states generate an event algebra $\cP^\mathrm{stab}_n$, whose maximal Boolean algebras are of the form
\begin{align}\label{eq: stab contexts}
    \mathrm{span}_\B\big\{\{\Pi_{\psi_i}\}_{i=1}^{2^n}\mid\psi_i\in\StabS,\ \psi_i\perp\psi_j\ \mathrm{if}\ i\neq j\big\}\; .
\end{align}
As we show in Ref.~\cite{Frembs2026_unextendibility}, $\cP^\mathrm{stab}_n$ is not a partial Boolean algebra. In particular, it differs from the partial Boolean algebra generated by all stabiliser projectors for $n\geq 4$, since there exist sets of orthogonal stabiliser states that cannot be extended to a stabiliser basis.

Clearly, we also have $\ConP\subset\ConS$; in turn, frame functions are more constrained if finite additivity is imposed with respect to $\ConS$ as opposed to $\ConP$ only. The various diagrams are summarised in Tab.~\ref{tab: posets}.

\begin{table}[t]
    \centering
    \renewcommand{\arraystretch}{1.3}
    \begin{tabular}{l|c|c}
      & \begin{tabular}{@{}c@{}}contexts from\\ isotropic subspaces\end{tabular}
      & \begin{tabular}{@{}c@{}}maximal contexts \\ from orthogonal \\ stabiliser states\end{tabular} \\[2pt]
    \hline
    Pauli group
      & $\ConP$ & $\subset\ConS$ \\
    symplectic theory
      & $\ConPP$   & $\subset\ConSS$ \\
    \hline
    underlying poset & $\cong(\Iso(V),\subseteq)$ &
\end{tabular}
\caption{The four context posets we consider. Horizontally, the posets on the right contain strictly more contexts (for $n\geq 2$), hence impose strictly more constraints on frame functions. Vertically, the underlying posets coincide on the left (with $(\Iso(V),\subseteq$)), yet differ as diagrams for $n>1$.}
\label{tab: posets}
\end{table}

Our first main result rules out general noncontextual properties in the $n$-qubit Pauli group.

\begin{theorem}\label{thm: stabiliser frame functions}
    For $n\geq 3$, $\ConP$ and thus $\ConS$ admit no nontrivial noncontextual properties. In particular, every frame function $f:\StabS\ra\{0,1\}$ for $n\geq 3$ is constant.
\end{theorem}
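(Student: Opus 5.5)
By Thm.~\ref{thm: NP=2FP} it suffices to prove the second claim for the coarser poset $\ConP$: every Boolean-valued frame function on the atoms $\Pi_{L,\chi}$ ($L\in\Lag(V)$, $\chi\in L^*$), additive on all contexts $C_U$, $U\in\Iso(V)$, is constant for $n\geq3$. Since $\ConP\subset\ConS$, this then also covers $\ConS$.

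\textbf{Step 1 (Fourier reformulation).} For a fixed Lagrangian $L$, regard $f_L(\chi):=f(\Pi_{L,\chi})$ as a $\{0,1\}$-valued function on the torsor $L^*$. Additivity on a coarser context $C_U\subset C_L$ says that the marginal of $f_L$ along $U^*$ depends only on $U$. Apply this to the one-dimensional contexts $C_v$, $v\in L$. Then every Fourier coefficient of $f_L$ at $v\neq0$ equals, up to the sign $(-1)^{\lambda_L(v)}$, the $L$-independent number
\[
h(v):=\tfrac12\bigl(f(\Pi^+_v)-f(\Pi^-_v)\bigr).
\]
The zero coefficient is $k/2^n$. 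Hence $f$ is encoded by a single function $h:V\setminus\{0\}\to\tfrac12\zz$. The constraint on $h$ is that, for every $L$, the function
\[
f_L(\chi)=2^{-n}\Bigl(k+2\sum_{v\in L\setminus0}(-1)^{\chi(v)+\lambda_L(v)}h(v)\Bigr)
\]
takes values in $\{0,1\}$. Nonconstancy of $f$ is equivalent to $h\not\equiv0$.

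\textbf{Step 2 (reduction to two qubits).} Fix $w\in V$ and a sign, say $\Pi^+_w$. Restrict $f$ to the atoms $\Pi_{L,\chi}$ with $w\in L$ and $\chi$ compatible with the $+1$-eigenspace. This set is in bijection with the stabiliser states of $w^\perp/\langle w\rangle\cong\zz_2^{2(n-1)}$. Since additivity is inherited, the restriction is a frame function on $n-1$ qubits of weight $f(\Pi^+_w)$. Iterating on a commuting pair, the restriction to any $2$-qubit symplectic "slice" $U^\perp/U$, with $U$ isotropic of dimension $n-2$, is a two-qubit frame function. By Prop.~\ref{prop: noncontextual MP-square} each such restriction is constant or of the form $f_{q,s}$, with weight $2$. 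In Fourier terms, the induced $h$ on the slice is either zero or supported on a single anticommuting triple $\{a,b,a+b\}$, with values $\pm\tfrac12$ as in Fig.~\ref{fig: the noncontextual MP-square}.

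\textbf{Step 3 (gluing, the main obstacle).} Now suppose $h(v)\neq0$ for some $v$, and propagate this information through overlapping slices. Every $v\neq0$ lies in many slices $U^\perp/U$ with $U\subset v^\perp$, and the value $h(v)$ is shared among all of them. By Step 2, $h(v)\neq0$ forces, in each slice containing $v$, an anticommuting partner $b$ with $h(b),h(v+b)\neq0$. For $n\geq3$ I would pick $U$ so that $v$ survives in the slice while a candidate partner $b$ becomes trivial there, namely $b\in U$. This is possible since $U$ only needs to be isotropic and to avoid $v$. Varying $U$ over the transitive action of the stabiliser of $v$ in $Sp(2n,\zz_2)$ then forces $h$ to be nonzero on too many vectors, namely more than one triple through $v$ in some slice. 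This contradicts Step 2, so $h\equiv0$.

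I expect the delicate part to be controlling the signs $\lambda_L$, that is, the Weyl cocycle $\beta_W$, when passing between slices. If the counting argument leaves a residual sign-consistent configuration, I would fall back on checking directly that such a configuration is incompatible with a three-qubit Mermin-star context cycle. There the product of the $\beta_W$-signs around the cycle is nontrivial, so no assignment of $\pm\tfrac12$ values can make all $f_L$ Boolean. By Step 2 this three-qubit case then lifts to all $n\geq3$.
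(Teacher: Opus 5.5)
There is a genuine gap. Step~3, which carries all of the content, does not work, and it builds on an incorrect Step~2.

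\textbf{Step 2 misdescribes two-qubit frame functions.} A nonconstant two-qubit frame function has weight $2$. By Lm.~\ref{lm: Fourier coefficients} and Parseval, $\sum_{0\neq v\in L}a_v^2=4$ with $a_v\in\{0,\pm2\}$, so each of the $15$ Lagrangians contains exactly one support point. The support of $h$ is therefore a $1$-ovoid $q^{-1}(0)\setminus\{0\}$ of five mutually anticommuting labels (e.g.\ $X_2,Z_2,X_1Y_2,Y_1Y_2,Z_1Y_2$), and $|h|=1$ on it. It is not a triple $\{a,b,a+b\}$ with values $\pm\tfrac12$. Such a triple cannot lie in $q^{-1}(0)$ at all, since $q(a+b)=q(a)+q(b)+1$. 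It also meets only $9$ of the $15$ Lagrangians, so on the other six $f_L\equiv\tfrac12$. Fig.~\ref{fig: the noncontextual MP-square} shows only the three support points that happen to lie in the Mermin--Peres square.

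\textbf{Step 3 fails at its key moves.} First, a slice $U^\perp/U$ does not see $h(v)$. Its coefficient at $\overline{v}$ is a signed average of $a$ over the coset $v+U$; for $n=3$ and $U=\langle u\rangle$ it is $\tfrac12(\pm a_v\pm a_{v+u})$. So the value $h(v)$ is not ``shared'' among slices. Second, the central move is impossible. $v$ survives in $U^\perp/U$ only if $U\subset v^\perp$, so a partner $b$ with $\omega(v,b)=1$ can never lie in $U$. Third, $f$ carries no $Sp(2n,\zz_2)$-symmetry, so transitivity transports no constraint.

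\textbf{The Mermin-star fallback does not close the gap.} The Mermin star obstructs valuations, but an even-weight $f$ does not induce a valuation on $\aP_n$. For weight $4$ the actual obstruction is the global nonexistence of $1$-ovoids in $W(5,2)$.

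\textbf{How to repair it.} Your framework (Fourier coefficients plus restriction to eigenspaces) is essentially the paper's. It closes if you use only the \emph{weights} of the slices rather than the detailed shape of $h$ on them.
\begin{itemize}
\item Odd weight yields a valuation $s(P)=(-1)^{f(\Pi^-_P)}$ (Lm.~\ref{lm: no odd-weight frame functions}), which is impossible for $n\geq2$.
\item Apply this to the two-qubit slices $\Pi^\pm_u$ at $n=3$ (Lm.~\ref{lm: reduced stab theory}). Then both $k^\pm_u$ are even.
\item For $k=2$ this forces $a_u=\pm2$ for all $u\neq0$, contradicting $\sum_{0\neq u\in L}a_u^2=12$. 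The paper instead extracts a quadratic refinement of Witt index $n-1$ and a valuation on $q^{-1}(0)$, which contains a Mermin--Peres square.
\item For $k=4$ it forces $a_u\in\{0,\pm4\}$. Parseval then yields a $1$-ovoid of $W(5,2)$, which does not exist.
\item $k=6$ follows by complementation.
\end{itemize}

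\textbf{The lift to $n>3$ needs one more step.} Constant slices give only $k^\pm_v\in\{0,2^{n-1}\}$, so $k=2^{n-1}$ must still be excluded. The paper does this via a valuation. Parseval also works: $a_v=\pm2^{n-1}$ for all $v\neq0$ gives $(2^n-1)4^{n-1}\neq4^{n-1}$.
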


\begin{proof}
    We provide the proof in App.~\ref{app: classification}.
\end{proof}

\textbf{Quaternionic valuations.} As an application of Thm.~\ref{thm: stabiliser frame functions}, we prove that no valuations exist in quaternionic quantum mechanics of dimension $2^{n-1}$.

\begin{corollary}\label{cor: no quaternionic valuations}
    For $n\geq 3$, $\cP(\Q^{2^{n-1}})$ admits no valuations.
\end{corollary}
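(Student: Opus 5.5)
We will reduce the statement to Thm.~\ref{thm: stabiliser frame functions} by embedding the $n$-qubit stabiliser context structure into quaternionic quantum mechanics. The key fact is that $\C^{2^n}\cong\Q^{2^{n-1}}$ as right $\C$-vector spaces, by viewing $\Q=\C\oplus j\C$. Under this identification, quaternionic projections on $\Q^{2^{n-1}}$ correspond exactly to complex projections $\Pi$ on $\C^{2^n}$ that commute with an antiunitary $J$ satisfying $J^2=-\one$ (right multiplication by $j$). A rank-$1$ quaternionic projection becomes a rank-$2$ complex projection. We would therefore pick $J=\prod_i Y_i\,K$ (up to convention), with $K$ complex conjugation in the computational basis; for $n$ odd this squares to $-\one$. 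For general $n$ we instead take a suitable tensor factor, and fix the concrete choice first.

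\emph{Step 1: quaternionic structure on Pauli codespaces.} We would show that for each Pauli operator $W_v$, the projections $\Pi^\pm_v$ are either $J$-invariant or are swapped by $J$. Since $JW_vJ^{-1}=\pm W_v$, whenever $JW_vJ^{-1}=W_v$ the eigenprojections $\Pi^\pm_v$ commute with $J$. This singles out a sub-poset of $\ConP$, namely the isotropic subspaces $U$ contained in the $J$-fixed (``quaternionic'') subgroup. Each $\Pi_{U,\chi}$ with $U$ of this type is then a quaternionic projection.

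\emph{Step 2: from a valuation to a Boolean frame function.} Suppose $v:\cP(\Q^{2^{n-1}})\to\{0,1\}$ is a valuation, that is, a map selecting exactly one atom in every quaternionic orthonormal basis. We would pull it back to $f(\Pi)=v(\Pi)$ on $J$-invariant complex projections. On a complex context of $J$-invariant rank-$2$ projections summing to $\one$, $f$ selects exactly one of them, so $f$ is a Boolean frame function of weight $1$ in rank-$2$ units. To land in the setting of Thm.~\ref{thm: stabiliser frame functions}, we would fix a $J$-invariant Pauli $W_{v_0}$ whose eigenspaces carry the pairing, and relate $(n-1)$-qubit stabiliser states to rank-$2$ projections $\Pi_{L,\chi}$ with $L\ni v_0$. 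This yields an $(n-1)$-qubit stabiliser frame function. Alternatively, summing over the paired atoms, it yields a weight-$2$ frame function on the $n$-qubit stabiliser atoms; this frame function is nonconstant because it takes value $1$ on some stabiliser states and $0$ on others within a single basis. This contradicts Thm.~\ref{thm: stabiliser frame functions} for $n\ge3$.

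\emph{Main obstacle.} The delicate point is Step 2's compatibility requirement: every stabiliser basis must be realised as $J$-invariant pairs, or else the pullback must be extended to all stabiliser states while keeping finite additivity on all contexts in $\ConP$. Not every Lagrangian is $J$-compatible. We would handle this with a Clifford-covariant choice: for each Lagrangian $L$, conjugate $J$ by a Clifford unitary so that $L$ becomes compatible. This step needs the valuation to be invariant under the quaternionic unitary group only up to relabelling, which does not hold. So the real route is to construct $f$ directly on a single $J$-compatible family that is still rigid enough for the argument of App.~\ref{app: classification}. I expect this step, namely checking that the sub-poset of $J$-compatible isotropic subspaces suffices for the constancy argument, to be the hard part.
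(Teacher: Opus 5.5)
Your proposal stops at the step that carries the proof, and the obstacle you name there does not exist. Take $J=W_yK$ with $W_y$ containing an odd number of $Y$'s, e.g.\ $Y_1K$ for all $n$, or your $Y^{\otimes n}K$ for odd $n$. Then $JW_wJ^{-1}=(-1)^{q(w)}W_w$, where $q(w)=q_W(w)+\omega(y,w)$ is a quadratic refinement of $\omega$ of Witt index $n-1$. So the $J$-fixed labels form the quadric $q^{-1}(0)$, which is not a subgroup.

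Since the coefficients of $\Pi_{L,\chi}$ are real, $J\Pi_{L,\chi}J^{-1}=\Pi_{L,\chi+q|_L}$. Moreover $q|_L\neq0$ for \emph{every} $L\in\Lag(V)$, because no Lagrangian lies in $q^{-1}(0)$; equivalently, by Kramers no rank-$1$ projection is $J$-invariant. Hence every Pauli basis $C_L$ splits into Kramers pairs $\Pi_{L,\chi}+\Pi_{L,\chi+q|_L}=\Pi_{L_q,\chi|_{L_q}}$, with $L_q=L\cap q^{-1}(0)$ of dimension $n-1$. These pairs form a complete quaternionic basis. So every Lagrangian is compatible, and no Clifford conjugation of $J$ is needed.

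This is the missing idea, together with the rigidity you left open. The paper restricts the valuation to the contexts $C_U$ with $U\subseteq q^{-1}(0)$ isotropic. It reads off a sign assignment $s$ on $q^{-1}(0)$ with $s(v+w)=s(v)+s(w)+\beta_W(v,w)$ for commuting $v,w$. It then refutes $s$ with the Mermin--Peres square contained in $q^{-1}(0)$ for $n\geq3$ (Part~2 of the proof of Lm.~\ref{lm: k=2}).

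Your weight-$2$ alternative also closes, once you check additivity, which you did not do. Set $F(\Pi_{L,\chi}):=v(\Pi_{L_q,\chi|_{L_q}})$. Then $F(\Pi_{U,\chi})=v(\Pi_{U_q,\chi|_{U_q}})$ if $q|_U\neq0$, and $F(\Pi_{U,\chi})=2v(\Pi_{U,\chi})$ if $U\subseteq q^{-1}(0)$. Both are independent of $L\supseteq U$, because $v$ depends only on the projection. So $F$ is a nonconstant weight-$2$ frame function on $\ConP$, which Lm.~\ref{lm: k=2} excludes.

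Your first route is mis-stated in three places:
\begin{itemize}
\item If $W_{v_0}$ is $J$-invariant, $J$ preserves its eigenspaces. To have $J$ swap $\Pi^\pm_{v_0}$ you need $q(v_0)=1$, i.e.\ $JW_{v_0}J^{-1}=-W_{v_0}$ (e.g.\ $W_{v_0}=Z_1$ for $J=Y_1K$).
\item The $\Pi_{L,\chi}$ are rank $1$, not rank $2$.
\item The object you obtain on $n-1$ qubits has weight $1$. At $n=3$ it is ruled out by Lm.~\ref{lm: no odd-weight frame functions} (Mermin--Peres on two qubits), not by Thm.~\ref{thm: stabiliser frame functions}, which needs at least three qubits.
\end{itemize}

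Corrected this way, however, the route is a complete proof, and shorter than the paper's. For projections $\Pi\leq\Pi^+_{v_0}$ one has $J\Pi J^{-1}\leq\Pi^-_{v_0}$. So $\Pi\mapsto\Pi+J\Pi J^{-1}$ is an orthogonality- and sum-preserving map into $J$-invariant projections. It sends every orthonormal basis of $\Pi^+_{v_0}\cH\cong\C^{2^{n-1}}$ to a complete quaternionic basis; this is the standard embedding of complex into quaternionic quantum mechanics of equal dimension. Pulling $v$ back yields a valuation on $n-1$ qubits, which is impossible for $n-1\geq2$. This argument uses only the nonexistence of valuations, bypasses Thm.~\ref{thm: stabiliser frame functions} entirely, and involves no compatibility issue.
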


\begin{proof}
    We provide the proof in App.~\ref{app: quaternionic valuations}.
\end{proof}

Cor.~\ref{cor: no quaternionic valuations} is of course not new, and known to hold for arbitrary dimensions \cite{Varadarajan1985,MorettiOppio2018}. However, unlike the general result which uses Gleason's theorem, in the special case of $n$-qubit systems our result holds already under the restriction to the discrete context stabiliser poset.

\section{Noncontextual properties in symplectic qubit theory}\label{sec: NC symplectic properties}

Recall that the space of isotropic subspaces defines a partial order $(\Iso(V),\leq)$, independent of any phase convention involved in picking Hermitian Pauli representatives, while $\ConP$ assigns Boolean algebras to it using the phase convention in the Weyl section. In this section, we study contextuality in the projective, purely symplectic setting, by trivialising the cocycle $\beta_W$.

Explicitly, instead of the Weyl operators in Eq.~(\ref{eq: Weyl operators}), let $\{e_v\}_{v\in V}$ denote an orthogonal basis in the (Bloch) space $\R[V]$, and for $\chi\in U^*$ and $U\in\Iso(V)$ define the vectors
\begin{align}\label{eq: symplectic projectors}
    \pi_{U,\chi}
    :=\frac{1}{|U|}\sum_{v\in U}(-1)^{\chi(v)}e_v\; .
\end{align}
Let $\StabSS:=\{\pi_{L,\chi}\mid L\in\Lag(V),\chi\in L^*\}$, define contexts $c_U:=\mathrm{span}_\B\{\{\pi_{U,\chi}\}_{\chi\in U^*}\}$ for $U\in\Iso(V)$ and denote by $\ConPP$ the (order-preserving diagram of) all such Boolean algebras over $(\Iso(V),\leq)$. Here, $\overline{\cP_n}=\cP_n/Z(\cP_n)$ and $\ConPP\cong(\cA(\cP_n/Z(\cP_n),\leq))$, where $\cA(\cP_n/Z(\cP_n)):=\{A/Z(\cP_n)\mid Z(\cP_n)<A<\cP_n,A\ \mathrm{Abelian}\}$ denotes Abelian subgroups of $\cP_n$ up to phases. However, $\ConP\ncong\ConPP$ for $n\geq 2$ as diagrams (see Lm.~\ref{lm: different posets} in App.~\ref{app: symplectic contextuality}).

Moreover, orthogonal vectors with respect to
\begin{align}\label{eq: stabiliser orthogonality}
    \langle\pi_{L,\chi},\pi_{M,\chi'}\rangle=0
    \quad:\Longleftrightarrow\quad
    \chi_{L\cap M}\neq\chi'_{L\cap M}\; ,
\end{align}
define an abstract stabiliser poset, $\ConSS$, whose maximal contexts consist of $2^n$ orthogonal vectors, analogously to $\ConS$. Clearly, $\ConPP\subset\ConSS$.

We analyse these context posets with respect to Def.~\ref{def: noncontextual property}. First, $\ConPP$ clearly admits valuations in the form of linear functionals $\alpha:V\ra\zz_2$, as $\{\alpha|_L\}_{L\in\Lag(V)}$ defines a collection of characters that is compatible under coarse-graining (see Lm.~\ref{lm: character restriction}). Yet, there are also valuations not of this form. To see this, recall that a \emph{quadratic refinement $q:V\ra\zz_2$ of $\omega$} is defined by the relation
\begin{align}\label{eq: quadratic refinement}
    q(v)+q(w)+q(v+w)=\omega(v,w)\mod 2\; .
\end{align}
Then, since $\omega|_L=0$, $q|_L\in L^*$ also defines a family of compatible characters in every Lagrangian subspace. This holds in particular for the quadratic refinement $q_W$ in Eq.~(\ref{eq: Weyl operators}), and thus also for all combinations $\alpha+q_W$.

Quadratic refinements of $\omega$ are characterised by their \emph{Witt index},\footnote{Equivalently, they are defined by their \emph{Arf invariant} \cite{Arf1941}.} defined as the largest dimension of an isotropic subspace in $q^{-1}(0)$, which is either $n$ or $n-1$ \cite{Taylor1992}. In addition to valuations, $\ConPP$ admits generalised noncontextual properties (see Lm.~\ref{lm: noncontextual symplectic properties} and Lm.~\ref{lm: noncontextual property in symplectic stabiliser poset} in App.~\ref{app: symplectic contextuality}), corresponding to quadratic refinements of Witt index $n-1$. Moreover, these noncontextual properties extend to the abstract stabiliser poset. Yet, surprisingly, besides these two cases no other noncontextual properties exist. Consequently, not only $\cP_n$ but already its underlying symplectic theory is contextual.

\begin{theorem}\label{thm: symplectic Boolean frame functions}
    For all $n\geq 3$, nonconstant Boolean-valued frame functions over $\ConPP$ and over $\ConSS$ coincide, and every such frame function $f:\cS^\mathrm{symp}_n\ra\{0,1\}$ with $V=\zz^{2n}_2$ is either of the form,
    \begin{align*}
        f_{\alpha,c}(\pi_{L,\chi})
        &=\begin{cases}
            1&\mathrm{if}\ \chi=(\alpha+cq_W)|_L\\
            0&\mathrm{if}\ \chi\neq(\alpha+cq_W)|_L
        \end{cases} \\
        f_{q,\alpha}(\pi_{L,\chi})
        &=\begin{cases}
            1&\mathrm{if}\ \chi|_{L_q}=\alpha|_{L_q}\\
            0&\mathrm{if}\ \chi|_{L_q}\neq\alpha|_{L_q}
        \end{cases}\; ,
    \end{align*}
    or their complements $\overline{f}=1-f$, where $\alpha:V\ra\zz_2$ is a linear functional, $c\in\zz_2$, $q_W(v)=\sum_{i=1}^na_ib_i$ is the quadratic refinement in Eq.~(\ref{eq: Weyl operators}) and $q:V\ra\zz_2$ a quadratic refinement of Witt index $n-1$ with $L_q=\ker(q|_L)=q^{-1}(0)\cap L$.
\end{theorem}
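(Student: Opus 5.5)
We classify frame functions on the smaller poset $\ConPP$ first and check afterwards that every resulting function also respects the orthogonality contexts of $\ConSS$. Since $\ConPP\subset\ConSS$, this gives both the classification and the claimed coincidence.

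\textbf{Setup.} A Boolean-valued frame function of weight $k$ on $\ConPP$ is a map $f$ on pairs $(L,\chi)$, with $L\in\Lag(V)$ and $\chi\in L^*$. It selects a subset $S_L\subseteq L^*$ with $|S_L|=k$ for every $L$. The compatibility across $\ConPP$ is coarse-graining along isotropic $U\subset L\cap M$: for each $\mu\in U^*$,
\begin{align*}
    |\{\chi\in S_L:\chi|_U=\mu\}|=|\{\chi\in S_M:\chi|_U=\mu\}|=:g_U(\mu).
\end{align*}
The main tool is to pass to Fourier transforms. Let $F_L=\widehat{\one_{S_L}}$ be the Fourier transform on $L^*$, a function on $L$. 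The condition above then says exactly that $F_L$ and $F_M$ agree on every common isotropic subspace, hence on $L\cap M$. Every isotropic $U$ lies in some Lagrangian, and the graph of Lagrangians meeting in codimension one is connected. So the family $\{F_L\}$ glues to a single global function $F:V\to\mathbb Z$ with $F(0)=k$ (the sign/normalisation convention does not matter). The constraint is that the restriction of $F$ to every Lagrangian $L$ is the Fourier transform of a $\{0,1\}$-valued function on $L^*$. In particular, weight $k$ together with Booleanness forces strong divisibility and sign patterns on $F|_L$.

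\textbf{Reduction to small subspaces.} I would first analyse $n=2$ directly, extending Lm.~\ref{lm: compatible characters} and Lm.~\ref{lm: noncontextual property in symplectic stabiliser poset}. Then I would induct on $n$ by restricting to $v^\perp/v$ for $v\in V$. Fixing $v\in L$ and a value $\mu(v)$ induces a frame function on the $(n-1)$-qubit symplectic space $v^\perp/\langle v\rangle$, whose Lagrangians are exactly the $L\ni v$. Its weight is $g_{\langle v\rangle}(\mu)$, and a weight of $0$ or full is allowed there. By induction, each such slice is constant, of type $f_{\alpha,c}$, of type $f_{q,\alpha}$, or a complement. The weights available at level $n-1$ are $\{0,1,2,2^{n-1}-2,2^{n-1}-1,2^{n-1}\}$. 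Since $k=g_{\langle v\rangle}(0)+g_{\langle v\rangle}(1)$, this restricts $k$ to $\{0,1,2,2^n-2,2^n-1,2^n\}$, after ruling out mixed sums such as $1+(2^{n-1}-1)$ using a second vector $w$ with $\omega(v,w)=1$.

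\textbf{Gluing the slices.} For $k=1$, each $S_L=\{\chi_L\}$ is a valuation, so $\chi_L=\phi|_L$ for a global function $\phi$. Compatibility makes $\phi$ linear on every isotropic subspace, hence $\phi+\phi'$ is a quadratic refinement for suitable linear $\phi'$. Comparing with the slices then shows that the Witt index $n-1$ option fails to be linear on all Lagrangians unless $\phi=\alpha+cq_W$; this is the rigidity I expect for $n\ge3$. For $k=2$, each $S_L=\{\chi,\chi'\}$ is determined by the difference $\delta_L=\chi+\chi'\in L^*$. Its kernel $K_L$ is a hyperplane of $L$, and the slice data force $K_L=L\cap q^{-1}(0)=L_q$ for one global quadratic form $q$ with no Lagrangian inside $q^{-1}(0)$. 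That is the Witt index $n-1$ case, and it yields $f_{q,\alpha}$. The cases $k=2^n-1$ and $k=2^n-2$ follow by complement.

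\textbf{Extension to $\ConSS$ and the main obstacle.} Finally, I would verify that $f_{\alpha,c}$ and $f_{q,\alpha}$ select exactly $k$ elements in every orthogonal basis of $\ConSS$ using Eq.~(\ref{eq: stabiliser orthogonality}); by the argument above, no other functions can arise. The main obstacle is the gluing step: showing that locally consistent hyperplanes $K_L$, or characters $\chi_L$, come from a single global quadratic form. I would attack it via the Fourier function $F$, showing that $F$ takes values in $\{k,\pm\text{const}\}$ and that its level sets are a quadric. This uses that every three-dimensional isotropic configuration occurs, which is where $n\ge3$ enters and excludes the extra $n=2$ weight $3$ solutions.
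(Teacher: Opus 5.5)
Your framework matches the paper's: the global coefficients $a_v=F(v)=f(\pi^+_v)-f(\pi^-_v)$, slicing to $v^\perp/\langle v\rangle$, weight $1$ via additivity on commuting pairs, and extracting $q$ from the kernels of $\delta_L$ for weight $2$. The step that actually restricts the weights, however, is missing.

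\textbf{The base case $n=3$ and the weight restriction.} Your induction cannot start at $n=2$. There all weights $0,\dots,4$ occur: your list $\{0,1,2,2^{n-1}-2,2^{n-1}-1,2^{n-1}\}$ is all of $\{0,\dots,4\}$ when $n-1=2$. So $k=k^+_v+k^-_v$ gives no information at $n=3$, and the base case $n=3$, $k\in\{3,4,5\}$, has to be done by hand. Even for $n\geq4$ the slice weights do not restrict $k$ by themselves. Splits such as $(1,2)$, $(2,2)$, $(0,2^{n-1})$ or $(1,2^{n-1}-2)$ are each locally admissible. You give no mechanism by which a second vector $w$ with $\omega(v,w)=1$ would exclude them; note that $w$ does not even lie in the slice at $v$.

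The missing ingredient is Parseval's identity on each Lagrangian, $\sum_{0\neq v\in L}a_v^2=k(2^n-k)$, combined with $a_v=k^+_v-k^-_v$. This disposes of each residual weight for $n\geq4$ in a line. At $n=3$, $k=3$, it forces exactly one $|a_v|=3$ per Lagrangian, i.e.\ a $1$-ovoid of $W(5,2)$, which does not exist.

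The case $n=3$, $k=4$ is the genuinely hard one, and nothing in your sketch addresses it. Parseval only says that each Lagrangian contains either one coefficient with $|a_v|=4$ or four with $|a_v|=2$. The paper then needs:
\begin{itemize}
\item the incidence count $4|A|+|B|=36$;
\item the bound on pairwise anticommuting sets plus a rank argument, to reduce to $|A|\in\{0,1\}$;
\item the identification of $V\setminus(B\cup\{0\})$ as an elliptic quadric when $A=\emptyset$;
\item the Booleanness condition $\prod_{v\in B\cap L}\tilde a_v=-1$;
\item explicit families of five Lagrangians covering every element of $B$ an even number of times, which contradict that condition.
\end{itemize}

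\textbf{The weight-$2$ gluing.} Extracting $q$ is fine: it has Witt index $n-1$ since $\delta_L\neq0$ on every $L$. But ``it yields $f_{q,\alpha}$'' is precisely where $n\geq3$ enters. What you obtain is a map $\tilde\alpha$ on $q^{-1}(0)$ that is additive on commuting pairs inside $q^{-1}(0)$. You need it to be the restriction of a linear functional on $V$.
\begin{itemize}
\item At $n=2$ this is false: these are the $f_{q,s}$ with arbitrary $s$.
\item At $n=3$ your slices are $n=2$ frame functions, so the induction cannot supply it either.
\end{itemize}
The paper obtains the extension from universality of the natural embedding of the elliptic quadric $Q^-(2n-1,2)$ (Lm.~\ref{lm: projective linear extension}). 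Your proposed attack, showing that the level sets of $F$ form a quadric, only recovers $q$, not this extension.

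\textbf{Two smaller errors.} In the weight-$1$ case, every quadratic refinement of either Witt index is linear on every Lagrangian and equals $\alpha+q_W$ for some linear $\alpha$. So the claim that the Witt-index-$(n-1)$ option fails is wrong; the correct statement is simply Lm.~\ref{lm: compatible characters}. Also, the solutions special to $n=2$ are the weight-$2$ functions $f_{q,s}$, not weight-$3$ ones. Weight-$3$ functions at $n=2$ are just complements of weight-$1$ functions and persist for all $n$.
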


\begin{proof}
    We provide the proof in App.~\ref{app: symplectic contextuality}.
\end{proof}

From a purely mathematical perspective, Thm.~\ref{thm: symplectic Boolean frame functions} fully characterises affine-symplectic Cameron-Liebler sets in the binary case \cite{GuoWan2023}. We discuss this and comment on the qudit generalisation of our result in App.~\ref{app: CL correspondence}.

\section{Conclusion}\label{sec: conclusion}

It is well-known that the $n$-qubit Pauli group is contextual - in the sense that there exist no valuations on (Hermitian) Pauli operators for $n\geq 2$ \cite{Mermin1990,Peres1991,Mermin1993,Arkhipov2012,TrandafirLisonekCabello2022,MullerGiorgetti2025}. Here, we generalised this assertion by ruling out noncontextual properties for $n\geq 3$; leaving the single qubit case, as well as two-qubit case in Fig.~\ref{fig: the noncontextual MP-square} as the only exception.

By contrast, the underlying symplectic theory clearly does admit valuations in the form of linear functionals and quadratic refinements. Moreover, it also admits generalised noncontextual properties, yet these remain strongly restricted; in particular, the symplectic theory is also contextual. The comparison between the stabiliser and symplectic theories therefore illustrates that contextuality in the $n$-qubit Pauli group is \emph{not} solely a consequence of it being a projective representation of its underlying symplectic vector space (equivalently, the group-cohomological characterisation in Ref.~\cite{Raussendorf2019,RaussendorfEtAl2017,OkayTyhurstRaussendorf2018,OkaySheinbaum2019,RaussendorfEtAl2023}), but of the geometry of symplectic polar spaces more generally.

Our results thus naturally fit within and extend the broader body of work identifying the $n$-qubit Pauli group with symplectic polar spaces \cite{SanigaPlanat2007,PlanatSaniga2008,HavlicekOdehnalSaniga2009,Thas2009,LevaySanigaVrana2008,HolweckSanigaLevay2014,Frembs2026_unextendibility,FrembsHoehnNataleWever2026a,FrembsHoehnNataleWever2026b}. In fact, Thm.~\ref{thm: symplectic Boolean frame functions} characterises all Cameron–Liebler sets of maximal totally isotropic flats in the binary affine-symplectic space \cite{GuoWan2023}.

\section*{Acknowledgments.} I thank Lukas Hantzko, Selman Ipek, Robert Raussendorf and Atak Talay Yucel for discussions.

OpenAI's GPT-5.6 Sol model found the Lagrangian certificates in the proof of Lm.~\ref{lm: k=4, symplectic} and pointed out the close relation with the work of Ref.~\cite{GuoWan2023}.

\bibliography{bibliography}

\appendix
\onecolumngrid

\appendix

\section{Context posets, Kochen-Specker contextuality and proof of Thm.~\ref{thm: NP=2FP}}\label{app: frame functions}

We review the formal setting underlying Sec.~\ref{sec: NC properties}, Kochen-Specker noncontextuality and the notion of noncontextual properties (see Def.~\ref{def: noncontextual property}). We first recall how observable algebras give rise to event algebras and context posets, then relate classical embeddings and valuations to the holonomy group (see Def.~\ref{def: contextual holonomy}), and finally prove Thm.~\ref{thm: NP=2FP}.\\

\textbf{Event algebras, context connections and contextual holonomy.} Let $\cO\subset\LHsa$ with $\one\in\cO$ be a set of Hermitian operators on a (finite-dimensional) Hilbert space $\cH$. We collect the observables in $\cO$ into commuting ones, and without loss of generality consider their algebraic closure: for any subset $\one\in S\subset\cO$ of mutually commuting observables, define the \emph{context generated by $S$} as the commutative algebra,
\begin{align}\label{eq: context}
    C(S)
    :=\{S\}''\; ,
\end{align}
where $\{S\}':=\{a\in\LH\mid[a,S]=0\}$ is the commutant of $S\subset\LH$. If $S$ is finite (as in the cases below), then
\begin{align*}
    C(S)
    =\mathrm{span}_\C\Big\{\prod_{O\in S}O^{m_O}\mid m_O\in\mathbb{N}_0\Big\}
    =\mathrm{span}_\C\{\cP(S)\}\; ,
\end{align*}
where $\cP(S)$ denotes the projectors in the spectral decompositions of the operators in $S$.\footnote{For a thorough comparison of these notions, see Ref.~\cite{Frembs2025}.} $C(S)$ thus is the unique complex-linear extension of the Boolean algebra $\mathrm{span}_\B\{\cP(S)\}$ of its projections $\cP(S)$. For our purposes it will be sufficient to work at this level, that is, we will identify contexts with Boolean algebras. The family of all such Boolean algebras (equivalently, commutative algebras) defines a partial order or \emph{poset}, under inclusion,
\begin{align*}
    \CO(=\cC(\cP(\cO)))
    :=\Big(\{C(S)\}_{S\subset\cO,[S,S]=0},\subset\Big)\; .
\end{align*}
We denote by $\cC_\mathrm{max}(\cO)$ the top layer of this poset. Importantly, we consider $\cC(\cP(\cO))$ not merely as a poset, but as the family of Boolean subalgebras together with their order relations, that is, we consider $\cC(\cP(\cO))$ as the order-preserving diagram (or covariant functor) over its underlying poset.

More generally, let $\cP$ be an event algebra \cite{Frembs2025}, that is, a partial Boolean algebra \cite{KochenSpecker1967,Greechie1968,HardingNavara2011,Kochen2015,HardingHeunenLindenhoviusNavara2019,CannonDoering2018,Frembs2025} \emph{without} the assumption that sets of mutually compatible elements are necessarily contained in a common Boolean subalgebra; the latter condition is sometimes referred to as Specker's principle \cite{Specker1960,Cabello2012,FritzEtAl2013}. We denote by $\CP$ the Boolean algebras of $\cP$, which form a poset under inclusion (which we will also denote by $\CP$). $\cP$ is the collection of Boolean algebras, $\cP=\bigcup_{C\in\CP}C$, identified (`pasted') along common Boolean subalgebras $C\cap C'$,\footnote{All Boolean algebras are assumed finite and atomic.} with least element $B_0=\B\{\one\}$, unit $\one\in\cP$ and dimension function $\mathrm{dim}:\cP\ra\mathbb{N}$ \cite{Frembs2025}. The relations $\perp$, $\vee$, $\wedge$, equivalently $+$, $\cdot$ on $\cP$ are those of its Boolean subalgebras.\footnote{A Boolean algebra defines a Boolean ring by $p\cdot q:=p\wedge q$ and $p+q:=(p\wedge q^\perp)\vee(p^\perp\wedge q)$, and vice versa via $p^\perp:=1+p$, $p\wedge q:=pq$, and $p\vee q:=p+q+pq$.} In particular, the compatibility relation $\odot$ on $\cP$ becomes $p\odot q\Leftrightarrow\exists C\in\CP$ such that $p,q\in C$.

We will refer to the poset of its Boolean algebras $\CP$ as the context poset of $\cP$, to the elements in its top level, $\CPm$, as maximal contexts and to its minimal elements as $\Pm$ (defined by $p\in\Pm$ if and only if $p\neq 0$ and for all $0\neq q\in\cP$ and $C\in\CP$ with $q,p\in C$, $q\leq p$ implies $q=p$) as atoms. Throughout, we will assume that $C\cong C'$ for all $C,C'\in\CPm$, that is, that $C,C'$ contain equally many atoms, as well as that $\cP$ admits a maximally mixed state $\rho_\one$, and thus a (finitely additive) map $\dim:\cP\ra\mathbb{N}$ with $\dim(p)=\dim(\one)\rho_\one(p)$ for all $p\in\cP$.\footnote{Clearly, this is the case for the poset of all contexts of a quantum system, denoted by $\CH$. Moreover, it holds (by construction) for the context poset of the $n$-qubit Pauli group $\ConP$ in Eq.~(\ref{eq: Pauli context poset}), as well as the context poset generated by stabiliser states (see Eq.~(\ref{eq: stab contexts})) below.} In this case, (up to rescaling) $\dim(p)=1$ for all $p\in\Pm$ and $\dim(\one):=|\Pm(C)|$ for any $C\in\CPm$.\\

\textbf{Classical embeddings.} In their seminal work, Kochen and Specker \cite{Specker1960,KochenSpecker1967} define noncontextuality as the existence of an embedding $\epsilon:\cO\ra L_\infty(X,\mu)$ of $\cO$ into the algebra of measurable functions on an underlying measurable space $X$, preserving only the algebraic relations between commuting operators. Formally, this constraint can be expressed in terms of $\epsilon$ preserving the spectral calculus, in the sense that
\begin{align}\label{eq: noncontextuality}
    \epsilon(f(O))
    =f(\epsilon(O))
\end{align}
for every $O\in\cO$ and measurable function $f:\R\ra\R$.

It is not too hard to see that Kochen-Specker (KS) noncontextuality of $\cO$ is a property of $\CP$ alone \cite{Frembs2024,Frembs2025}. This is made explicit using the concepts of context connections (see Def.~\ref{def: context connection}) and contextual holonomy (see Def.~\ref{def: contextual holonomy}). Using these tools, Kochen-Specker noncontextuality of $\cP$, that is, the existence of an injective morphism of $\widetilde{\epsilon}:\cP\ra B$ with $\widetilde{\epsilon}|_C:C\ra B$ a Boolean algebra homomorphism for every $C\in\CP$ and $B$ a single Boolean algebra \cite{Specker1960,KochenSpecker1967,VanDenBergHeunen2012,AbramskyBarbosa2021,Frembs2024,Frembs2025}, is equivalent to the existence of a context connection $\nabla$ on $\CP$ such that
\begin{align}\label{eq: flatness}
    \Hol_{C_0}(\nabla)
    =\{\id\}\; .
\end{align}
In fact, Eq.~(\ref{eq: flatness}) is independent of the base point $C_0\in\CPm$ \cite{Frembs2024,Frembs2025}, and we will thus drop it from now on.\\

\textbf{Existence of valuations.} For clarity, we contrast the above characterisation with the following notion. Kochen and Specker (and independently Bell \cite{Bell1966}) show that classical embeddings do not exist for $\LHsa$ with $\dim(H)\geq 3$, by proving the nonexistence of a necessary criterion for KS noncontextuality: the nonexistence of a valuation $v:\LHsa\ra\R$ defined to satisfy the spectral condition $v(O)\in\mathrm{spec}(O)$, and $v(OO')=v(O)v(O')$ as well as $v(O+O')=v(O)+v(O')$ for all $O,O'\in\cO$ with $[O,O']=0$. Necessity follows easily: if there exists a classical embedding $\epsilon:\LHsa\ra L_\infty(X,\mu)$, then every microstate $x\in X$  defines a valuation $v_x(O)=\epsilon(O)(x)$. However, sufficiency does not hold \cite{YuOh2012,Frembs2024}, as can be seen by comparing with the equivalent condition for KS noncontextuality in Eq.~(\ref{eq: flatness}).

\begin{corollary}\label{cor: valuation stabiliser condition}
    Let $\cO\subset\LHsa$ and $\CP\subset\CH$ its context poset. Then $\cO$ admits a valuation if and only if there exists a projection $p_0\in\Pm(C_0)$ for some $C_0\in\CPm$ and a context connection $\nabla$ on $\CP$ such that
    \begin{align}\label{eq: valuation stabiliser condition}
        \Hol(\nabla)
        \leq\Stab(p_0)\; .
    \end{align}
\end{corollary}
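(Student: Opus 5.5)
We prove the two directions separately, working only with the atoms of $\CP$. In $\CP\subset\CH$ the atoms of a maximal context are rank-one projections onto an eigenbasis of the maximal commuting family. A valuation $v$ on $\cO$ is then the same as a map $f:\Pm\ra\{0,1\}$ that selects exactly one atom in every maximal context; that is, a Boolean-valued frame function of weight $1$. The translation is as follows. Given $v$, set $f(p)=v(p)$; the spectral condition forces $v(p)\in\{0,1\}$, and additivity over $\sum_{p\in\Pm(C)}p=\one$ forces exactly one $1$ per context. Conversely, given $f$, define $v(O)$ as the eigenvalue of $O$ on the unique selected atom of any maximal context containing $O$. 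This is well defined because, for two maximal contexts $C,C'\ni O$, the spectral projections of $O$ lie in $C\cap C'$, and $f$ restricted to these projections is determined by additivity. Multiplicativity and additivity on commuting pairs follow by evaluating inside one common context.

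\textbf{Valuation $\Rightarrow$ stabiliser condition.} Let $f$ be the weight-$1$ frame function above, and write $s(C)$ for the selected atom of $C\in\CPm$. We build $\nabla$ context by context. For $C,C'\in\CPm$ we need a bijection $\nabla_{C'C}:\Pm(C)\ra\Pm(C')$ that
\begin{itemize}
\item is the identity on atoms of $C\cap C'$, meaning it maps atoms of $C$ lying below a given atom $r$ of $C\cap C'$ onto atoms of $C'$ lying below the same $r$;
\item sends $s(C)$ to $s(C')$.
\end{itemize}
Both selected atoms lie below the same atom $r_0$ of $C\cap C'$, since additivity of $f$ on $C\cap C'$ forces $f(r_0)=1$ for exactly one atom $r_0$ of the intersection. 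Within each block below an atom $r$ of $C\cap C'$, $C$ and $C'$ have equally many atoms, namely $\dim(r)$, by the dimension function. So we can choose block-wise bijections, taking the one below $r_0$ to map $s(C)\mapsto s(C')$, and impose $\nabla_{CC'}=\nabla_{C'C}^{-1}$. Every transport then maps selected atom to selected atom, so every holonomy fixes $p_0=s(C_0)$. This gives Eq.~(\ref{eq: valuation stabiliser condition}).

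\textbf{Stabiliser condition $\Rightarrow$ valuation.} Given $\nabla$ with $\Hol(\nabla)\leq\Stab(p_0)$, define $s(C):=\nabla_{CC_0}(p_0)$. By the holonomy condition this does not depend on the path from $C_0$. In particular $\nabla_{C'C}(s(C))=s(C')$, because $\nabla_{C_0C'}\nabla_{C'C}\nabla_{CC_0}$ is a holonomy at $C_0$ and therefore fixes $p_0$.

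Now set $f(p)=1$ if and only if $p=s(C)$ for some maximal $C\ni p$. The main obstacle is to show that this $f$ is well defined, i.e.\ consistent on overlaps: if $p\in\Pm(C)\cap\Pm(C')$, we need $p=s(C)\Leftrightarrow p=s(C')$. Since $p$ is an atom lying in $C\cap C'$ and $\nabla_{C'C}$ restricts to the identity there, we get $\nabla_{C'C}(p)=p$. Hence $p=s(C)$ implies $p=\nabla_{C'C}(s(C))=s(C')$, and symmetrically in the other direction.

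So $f$ is a well-defined weight-$1$ frame function, and by the translation in the first paragraph it yields a valuation. The same argument, with $p_0$ a general event, is the template for Thm.~\ref{thm: NP=2FP}.
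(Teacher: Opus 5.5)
Your route is the paper's. There the corollary is read off as the atom case of Thm.~\ref{thm: NP=2FP}, and you reproduce both of its directions: block-wise bijections below the atoms of $C\cap C'$ that match selected atoms, and conversely transport of $p_0$ along $\nabla_{CC_0}$. You also make explicit the dictionary between valuations and weight-$1$ Boolean frame functions, which the paper leaves implicit.

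\textbf{A step is incomplete.} In the direction from the holonomy condition to a valuation, you check consistency of $f$ only on atoms common to $C$ and $C'$. Your own translation needs more. $v(O)$ is well defined only if $s(C)$ and $s(C')$ lie below the \emph{same} spectral projection $P$ of $O$, and $P\in C\cap C'$ is in general not an atom.

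The atom-level check does not give this. Take two maximal contexts with $C\cap C'=\{0,P,P^\perp,\one\}$ and no common atoms. The choice $s(C)\le P$, $s(C')\le P^\perp$ passes your check, yet it gives $f_C(P)=1\neq 0=f_{C'}(P)$. For the same reason, your opening claim that a valuation is ``the same as'' a selection of one atom per maximal context is too strong. It holds only for selections that are finitely additive on all of $\cP$.

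The repair uses what you already proved. $\nabla_{C'C}$ is a Boolean isomorphism fixing $C\cap C'$ pointwise, and $\nabla_{C'C}(s(C))=s(C')$. Hence for every $P\in C\cap C'$, $s(C)\le P$ holds if and only if $s(C')=\nabla_{C'C}(s(C))\le\nabla_{C'C}(P)=P$. This is exactly the extension step in the paper's proof of Thm.~\ref{thm: NP=2FP}, namely the context-independence of $f_\nabla(p)$ for non-atomic $p$. It is what turns your $f$ from a selection of atoms into a finitely additive frame function on $\cP$, to which your translation then applies.

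\textbf{A minor point.} You claim that atoms of maximal contexts in $\CP\subset\CH$ are rank-one projections. This fails for general $\cO$, for example a single observable with degenerate spectrum. Your block count actually relies on the paper's standing assumption of a dimension function with $\dim(p)=1$ on atoms. That assumption ensures $C$ and $C'$ have equally many atoms below each atom of $C\cap C'$, and you should invoke it instead.
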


\begin{proof}
    This is a special case (for $p_0\in\Pm(C_0)$, and $\cO\subset\LHsa$) of the more general equivalence between noncontextual properties and frame functions in Thm.~\ref{thm: NP=2FP} below.
\end{proof}

Cor.~\ref{cor: valuation stabiliser condition} obtains the following natural interpretation. If a context connection $\nabla$ has the property in Eq.~(\ref{eq: valuation stabiliser condition}), then it is possible to label the eigenspace $p_0$ in a manner that is consistent across all contexts. In this case, we may thus call $p_0$ a \emph{noncontextual property of $\cO$}.\\

\textbf{Boolean-valued frame functions.} We next turn towards the proof of Thm.~\ref{thm: NP=2FP}, relating generalised noncontextual properties with Boolean-valued frame functions. Let $\cP$ be an event algebra with context poset $\CP$, e.g. $\cP\subset\PH$ may be a subset of projections closed under orthogonal sums and complements. Then $f:\cP\ra\R_+$ is a \emph{frame function of weight $k$} if $f(p+p')=f(p)+f(p')$ for all $p,p'\in\cP$ with $pp'=0$ and $\sum_{p\in\Pm(C)}f(p)=k$ for all $C\in\cC(\cP)$ and $\Pm(C)$ the atoms in $C$. We call $f$ \emph{Boolean-valued} if $f(p)\in\{0,1\}$ for atoms $p\in\Pm$. Here, as well as in App.~\ref{app: symplectic contextuality} we will often write $f$ for $f|_\Pm$, that is, we identify $f|_\Pm$ with its associated frame function $f:\cP\ra\zz$ defined on all elements $\cP$. To give an example, in our analysis of stabiliser theory in App.~\ref{app: classification}, minimal projections have rank $1$ and correspond with stabiliser states $\StabS\cong\Pm\subset\PHone$.

\begin{proof}[Proof of Thm.~\ref{thm: NP=2FP}]
    ``$\Leftarrow$'': Assume that $f$ is a Boolean-valued frame function. For every maximal context $C\in\CPm$ with corresponding minimal elements (atoms) $\Pm(C)$, $f$ defines the decomposition (regarding $C$ as a Boolean ring)
    \begin{align}\label{eq: f-decomposition}
        C
        =C^f_0\oplus C^f_1
        :=(p^f_C)^\perp C(p^f_C)^\perp\oplus p^f_CCp^f_C\; ,
    \end{align}
    where $p^f_C:=\sum_{\substack{p\in\Pm(C)\\f(p)=1}}p$. For any two maximal contexts $C,C'\in\CPm$, let
    \begin{align}\label{eq: frame overlaps}
        \nabla_{C'C}(p^f_C)
        =p^f_{C'}
    \end{align}
    and extend $\nabla_{C'C}$ to a Boolean isomorphism satisfying $\nabla_{C'C}|_{C\cap C'}=\mathrm{id}$. Indeed, since $f$ is additive, the elements $p^f_C$ (and corresponding context decompositions in Eq.~(\ref{eq: f-decomposition})) agree on overlaps under coarse-graining, in the sense that
    \begin{align}\label{eq: f constraint}
        |\{q\in\Pm(C)\mid q\leq p,\ q\leq p^f_C\}|
        \ =\ f(p)
        \ =\ |\{q\in\Pm(C')\mid q\leq p,\ q\leq p^f_{C'}\}|\; ,
    \end{align}
    for any atom $p\in C\cap C'$. Moreover, by the assumption that $\cP$ admits a maximally mixed state $\rho_\one$, we also have
    \begin{align}\label{eq: dim constraint}
        |\{q\in\Pm(C)\mid q\leq p\}|
        \ =\ \dim(p)
        \ =\ |\{q\in\Pm(C')\mid q\leq p\}|\; ,
    \end{align}
    for any atom $p\in C\cap C'$. Together these imply that one can identify (un)selected atoms by $f$ between the contexts $C,C'$, hence, $\nabla_{C'C}$ with $\nabla_{C'C}|_{C\cap C'}=\mathrm{id}$ exists, and we thus obtain a context connection $\nabla=(\nabla_{C'C})_{C,C'\in\CPm}$ whose elements satisfy Eq.~(\ref{eq: frame overlaps}), and since concatenation along context cycles preserves the (un)selected elements in Eq.~(\ref{eq: f-decomposition}) by construction, we have $\Hol_C(\nabla)\leq\Stab(p^f_C)$. Consequently, $f$ defines a noncontextual property of $\CP$.

    ``$\Rightarrow$'': Let $p_0\in C_0$ and let $\nabla$ be a context connection with $\Hol(\nabla)\leq\Stab(p_0)$. Define a map $f_\nabla:\Pm\ra\{0,1\}$ by
    \begin{align}
        f_\nabla(q)
        =\begin{cases}
            1&\mathrm{if}\ \nabla_{C_0C}(q)\leq p_0 \\
            0&\mathrm{otherwise}
        \end{cases}\; .
    \end{align}
    $f_\nabla(q)$ is well-defined if and only if either $\nabla_{C_0C}(q)\leq p_0$ or $\nabla_{C_0C'}(q)\not\leq p_0$ for all (maximal) contexts $C,C'$ containing $q$. This is the case since $q\in C,C'\in\CPm$ with $\nabla_{C_0C}(q)\leq p_0$ and $\nabla_{C_0C'}(q)\not\leq p_0$ is incompatible with
    \begin{align}\label{eq: holonomy condition}
        P_\gamma(\nabla)
        =\nabla_{C_0C'}\circ\nabla_{C'C}\circ\nabla_{CC_0}
        \in\Stab(p_0)\; ,
    \end{align}
    and $\nabla_{C'C}|_{C\cap C'}=\mathrm{id}$. Hence, $f_\nabla$ is well-defined on atoms. Moreover, $f_\nabla$ extends to a frame function on $\cP$ by setting 
    \begin{align*}
        f_\nabla(p)
        :=|\{q\in\Pm(C):q\leq p, \nabla_{C_0C}(q)\leq p_0\}\; ,
    \end{align*}
    which is independent of the context $C$ since for $p\in C\cap C'$, $\nabla_{C'C}$ is a Boolean isomorphism fixing $C\cap C'$ pointwise, hence, it restricts to a bijection from the atoms of $C$ below $p$ onto the atoms of $C'$ below $p$. By the same holonomy computation as in Eq.~(\ref{eq: holonomy condition}), $f_\nabla(\nabla_{C'C}(q))=f_\nabla(q)$ for every atom $q\in C$, and the two counts thus agree.
\end{proof}

In particular, the problem of characterising all noncontextual properties of a quantum system $\cO\subset\LHsa$ is therefore equivalent to characterising all Boolean-valued frame functions on $\cP=\PO$ for $\cO\subset\LHsa$. Thm.~\ref{thm: NP=2FP} thus generalises the problem of characterising sets of observables that admit no valuations.

In the rest of this paper, we study noncontextual properties of the $n$-qubit Pauli group, by classifying its Boolean-valued frame functions. Since a frame function assigns values to elements in $\cP$ independent of their context, that is, independent of the Boolean algebra they are part of, and noncontextual properties in Def.~\ref{def: noncontextual property} correspond with frame functions by Thm.~\ref{thm: NP=2FP}, the former also have to be independent of context. The proof of Thm.~\ref{thm: NP=2FP} establishes this as a general consequence of the compatibility relations under coarse-graining in a context poset. For our analysis below, it will be useful to make this explicit in the case of the Pauli group (as well as its underlying symplectic theory).

To this end, the next lemma shows that the (codespace) projectors $\Pi_{U,\chi}$ can be written as sums over orthogonal rank-$1$ projectors (corresponding to stabiliser states) in, yet independent of the choice of Lagrangian subspaces $L\supset U$.

\begin{lemma}\label{lm: character restriction}
    Let $\tU\in\Iso(V)$ be an isotropic subspace of the symplectic vector space $(V=\zz^{2n}_2,\omega)$. Then for any other isotropic subspace $U\in\Iso(V)$ with $\tU\subset U$,
    \begin{align*}
        \Pi_{\tU,\chi}
        &=\sum_{\substack{\eta\in U^*\\ \eta|_\tU=\chi}}\Pi_{U,\eta}\; .
    \end{align*}
\end{lemma}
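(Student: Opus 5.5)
The plan is to expand the right-hand side directly using the defining formula Eq.~(\ref{eq: symplectic codespace projectors}) and then evaluate the character sum over the extensions $\eta$. First I fix the phase functions compatibly. Since $\delta\lambda_U=\beta_W|_{U\times U}$ and $\delta\lambda_\tU=\beta_W|_{\tU\times\tU}$, the restriction $\lambda_U|_\tU$ solves the same equation as $\lambda_\tU$. The two therefore differ by a linear functional on $\tU$, which can be absorbed into the character label $\chi$. So, without loss of generality, $\lambda_\tU=\lambda_U|_\tU$. This convention is implicit in the statement, since the labelling of $\Pi_{\tU,\chi}$ by $\chi$ is only defined relative to a choice of $\lambda$.

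Next I write
\begin{align*}
    \sum_{\substack{\eta\in U^*\\ \eta|_\tU=\chi}}\Pi_{U,\eta}
    =\frac{1}{|U|}\sum_{v\in U}(-1)^{\lambda_U(v)}\Big(\sum_{\substack{\eta\in U^*\\ \eta|_\tU=\chi}}(-1)^{\eta(v)}\Big)W_v\; .
\end{align*}
The key step is to evaluate the inner sum. Fix one extension $\eta_0$ of $\chi$. Such an extension exists because $\tU\subset U$ is a subspace over $\zz_2$, so a complement can be chosen. Every extension is then $\eta_0+\mu$ with $\mu\in\tU^\perp:=\{\mu\in U^*\mid\mu|_\tU=0\}\cong(U/\tU)^*$.

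The inner sum becomes $(-1)^{\eta_0(v)}\sum_{\mu\in(U/\tU)^*}(-1)^{\mu(v)}$. By the orthogonality of characters on the finite group $U/\tU$, this equals $|U/\tU|$ if $v\in\tU$ and $0$ otherwise. For $v\in\tU$ we have $\eta_0(v)=\chi(v)$. Substituting, the sum collapses to
\begin{align*}
    \frac{|U|}{|\tU|}\cdot\frac{1}{|U|}\sum_{v\in\tU}(-1)^{\chi(v)+\lambda_U(v)}W_v=\Pi_{\tU,\chi}\; ,
\end{align*}
using $\lambda_U|_\tU=\lambda_\tU$.

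The only real subtlety, and the step I expect to need the most care, is this consistency of the phase torsor $\lambda$ under restriction. The character-orthogonality step is routine. If the paper intends $\lambda_\tU$ to be chosen independently, the identity holds only up to relabelling $\chi\mapsto\chi+(\lambda_\tU-\lambda_U|_\tU)$. I would state explicitly that the $\lambda$'s are chosen compatibly, for instance by fixing $\lambda$ on each Lagrangian and restricting. Alternatively, one can note that the affine torsor $\{\chi+\lambda_U\}$ restricts to $\{\chi+\lambda_\tU\}$, so the claim holds as an identity between torsor labels. The statement is independent of which $L\supset U$ is used, since only $U$ and $\tU$ enter.
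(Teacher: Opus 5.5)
Your proposal is correct and follows essentially the same route as the paper. It fixes compatible phases $\lambda_\tU=\lambda_U|_\tU$, expands the sum of codespace projectors, and collapses the inner sum over extensions $\eta$ of $\chi$ by character orthogonality on $(U/\tU)^*$; your remark that with independently chosen phases the identity holds only up to the relabelling $\chi\mapsto\chi+(\lambda_\tU-\lambda_U|_\tU)$ usefully states what the paper's choice of phases presupposes.
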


\begin{proof}
    Using Eq.~(\ref{eq: symplectic codespace projectors}), and choosing elements $\lambda_\tU,\lambda_U$ such that $\lambda_\tU=\lambda_U|_\tU$ and with coboundary $\beta_W|_{U\times U}$, one finds
    \begin{align*}
        \sum_{\substack{\eta\in U^*\\ \eta|_\tU=\chi}}\Pi_{U,\eta}
        =\frac{1}{|U|}\sum_{v\in U}(-1)^{\lambda_U(v)}\left(\sum_{\substack{\eta\in U^*\\ \eta|_\tU=\chi}}(-1)^{\eta(v)}\right)W_v
        =\frac{1}{|U|}\sum_{v\in\tU}(-1)^{\lambda_U(v)}\left(\frac{|U|}{|\tU|}(-1)^{\chi(v)}\right)W_v
        =\Pi_{\tU,\chi}\; ,
    \end{align*}
    where we used the character orthogonality relation,
    \begin{align*}
        \sum_{\substack{\eta\in U^*\\ \eta|_\tU=0}}(-1)^{\eta(v)}
        \ =\ \begin{cases}
            |(U/\tU)^*|
            &\mathrm{if}\ v\in\tU \\
            0 &\mathrm{otherwise}
        \end{cases}\; ,
    \end{align*}
    together with the fact that the space of characters $\{\eta\in U^*\mid\eta|_\tU=\chi\}$ is an affine coset of $\{\eta\in U^*\mid\eta|_\tU=0\}\cong(U/\tU)^*$ which has cardinality $|(U/\tU)^*|=|U^*|/|\tU^*|$.
\end{proof}

For later use, we remark that Lm.~\ref{lm: character restriction} also applies to the vectors $\pi_{U,\chi}$ in Eq.~(\ref{eq: symplectic projectors}) for which $\lambda_U=0$ for all $U\in\Iso(V)$.

\section{No noncontextual properties in the $n$-qubit Pauli group}\label{app: classification}

In this section, we prove Thm.~\ref{thm: stabiliser frame functions}, that is, we rule out nontrivial Boolean-valued frame functions on $n$-qubit stabiliser states $\StabS$, even if additivity is restricted to contexts corresponding to isotropic subspaces only (cf. Eq.~(\ref{eq: Pauli context})).

To this end, we evaluate the constraint that $f$ is Boolean-valued. Let $L\in\Lag(V)$ be a Lagrangian subspace of the symplectic vector space $(V=\zz^{2n}_2,\omega)$. Define $f_L:=f|_L:L^*\ra\{0,1\}$ by $f_L(\chi):=f(\Pi_{L,\chi})$, and define $a:V\ra\zz$ by
\begin{align}\label{eq: L-restricted Fourier coefficients}
    a_v&:=(-1)^{\lambda_L(v)}b_{v,L} &
    b_{v,L}
    &:=\hf_L(v)
    =\sum_{\chi\in L^*}(-1)^{\chi(v)}f_L(\chi)\; ,
\end{align}
where $\lambda_L$ is chosen such that $\delta\lambda_L=\beta_W|_{L\times L}$ for $\beta_W=\frac{\gamma_W}{2}$ the restriction of the $2$-cocycle $\gamma_W$ of the Weyl section in Eq.~(\ref{eq: symplectic codespace projectors}) to commuting pairs. Since, by assumption, $f$ is additive on orthogonal stabiliser states $\Pi_{\psi}=\Pi_{L,\chi}$, $a$ is well-defined, that is, $a_v$ does not depend on the Lagrangian subspace $L$ containing $v\in V$.
Indeed, from Eq.~(\ref{eq: symplectic codespace projectors}),
\begin{align*}
    a_v
    \ =\sum_{\chi\in L^*}(-1)^{\chi(v)+\lambda_L(v)}f_L(\chi)\ 
    =\sum_{\substack{\chi\in L^*\\ \chi(v)+\lambda_L(v)=0}}f_L(\chi)\ -\sum_{\substack{\chi\in L^*\\ \chi(v)+\lambda_L(v)=1}}f_L(\chi)\
    =\ f(\Pi^+_{\langle v\rangle})\ -\ f(\Pi^-_{\langle v\rangle})\; ,
\end{align*}
where $\Pi^\pm_{\langle v\rangle}$ denote the projections onto the $\pm 1$-eigenspaces of $W_v$, and we used that $f$ is finitely additive. Conversely,
\begin{align}\label{eq: inverse FT}
    f_L(\chi)
    =\frac{1}{2^n}\sum_{v\in L}(-1)^{\chi(v)}b_{v,L}
    =\frac{1}{2^n}\sum_{v\in L}(-1)^{\chi(v)+\lambda_L(v)}a_v\; .
\end{align}
Now, if $f_L(\chi)$ is Boolean-valued then $f_L(\chi)=f^2_L(\chi)$, and Parseval's identity takes the form
\begin{align}\label{eq: Parseval identity}
    \frac{1}{2^n}\sum_{v\in L}a^2_v
    =\frac{1}{2^n}\sum_{v\in L}b^2_{v,L}
    =\frac{1}{2^n}\sum_{v\in L}\hf^2_L(v)
    =\sum_{\chi\in L^*}f^2_L(\chi)
    =\sum_{\chi\in L^*}f_L(\chi)
    =k\; ,
\end{align}
for every $L\in\Lag(V)$; equivalently, $\sum_{0\neq v\in L}a^2_v=k(2^n-k)$ since $a_0=b_0=k$.\\

As we will see, Eq.~(\ref{eq: Parseval identity}) implies that Boolean-valued frame functions on $\StabS$ are constant. We will split the proof into several steps, beginning with frame functions of odd weight, which in fact correspond with valuations on $\aP_n$.

\begin{lemma}\label{lm: no odd-weight frame functions}
    Let $f:\StabS\ra\{0,1\}$ for $n\geq 2$ be a frame function of weight $k$. Then $k=0\mod 2$.
\end{lemma}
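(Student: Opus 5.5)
The plan is to extract, from an odd-weight Boolean-valued frame function, a consistent sign assignment on all Hermitian Pauli operators, that is, a valuation on $\aP_n$. For $n\geq 2$ no such valuation exists, by the Mermin--Peres square.

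\textbf{Step 1: parity of the Fourier coefficients.} By the computation preceding Eq.~(\ref{eq: Parseval identity}), $a_v=f(\Pi^+_{\langle v\rangle})-f(\Pi^-_{\langle v\rangle})$. On the other hand, $f(\Pi^+_{\langle v\rangle})+f(\Pi^-_{\langle v\rangle})=f(\one)=k$. Hence $a_v\equiv k \mod 2$ for every $v\in V$, so if $k$ is odd then every $a_v$ is odd, including $a_0=k$. Every odd integer is $\pm1$ modulo $4$, so we may write $a_v\equiv(-1)^{t(v)}\mod 4$ for a well-defined function $t:V\ra\zz_2$. Well-definedness of $t$ is inherited from that of $a_v$, which does not depend on the Lagrangian $L\ni v$.

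\textbf{Step 2: integrality on $2$-dimensional isotropic subspaces.} Let $U=\{0,u,w,u+w\}\in\Iso(V)$ be two-dimensional, and choose any $L\in\Lag(V)$ with $L\supset U$. By Lm.~\ref{lm: character restriction} and finite additivity, $f(\Pi_{U,\chi})$ is a nonnegative integer. Summing Eq.~(\ref{eq: inverse FT}) over the characters of $L$ restricting to $\chi$ gives
\begin{align*}
4f(\Pi_{U,\chi})=\sum_{v\in U}(-1)^{\chi(v)+\lambda_U(v)}a_v\; .
\end{align*}
This is the same character-orthogonality computation as in Lm.~\ref{lm: character restriction}, with $\lambda_U=\lambda_L|_U$. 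Reducing modulo $4$, a sum of four terms $\pm1$ is divisible by $4$ exactly when it contains an even number of minus signs. This yields
\begin{align*}
\sum_{v\in U}\big(\chi(v)+\lambda_U(v)+t(v)\big)=0\mod 2\; .
\end{align*}
The character contributes $\chi(u)+\chi(w)+\chi(u+w)=0$. The defining relation of $\lambda_U$ gives $\lambda_U(u)+\lambda_U(w)+\lambda_U(u+w)=\beta_W(u,w)$. Together these yield
\begin{align*}
t(0)+t(u)+t(w)+t(u+w)=\beta_W(u,w)\; .
\end{align*}

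\textbf{Step 3: a valuation, hence a contradiction.} Set $\tau(v):=t(v)+t(0)$. Step 2 then reads $\tau(u+w)=\tau(u)+\tau(w)+\beta_W(u,w)$ for all commuting, linearly independent $u,w$. This relation also holds trivially when $u=w$ or when one of them is $0$, provided $\beta_W$ is normalised on those pairs, which needs a brief check. Consequently, the assignment $v\mapsto(-1)^{\tau(v)}$ on $W_v$ respects products of commuting Paulis, $W_uW_w=(-1)^{\beta_W(u,w)}W_{u+w}$. It is therefore a valuation on $\pm\aP_n$, i.e.\ a trivialisation of $\beta_W$ on all commuting pairs.

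For $n\geq 2$, restrict to the Mermin--Peres square on two of the qubits. Multiplying the relations along its rows and columns gives the usual parity contradiction: the product of all nine $\beta_W$ phases is odd, since $[\beta_W]\neq0$. Hence $k$ cannot be odd.

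The step I expect to need the most care is Step 2. It requires checking that $f(\Pi_{U,\chi})$ is an integer via coarse-graining inside $C_U$. It also requires the sign bookkeeping between $\lambda_U$ and $\lambda_L$, in particular that $\lambda_U$ satisfies the cocycle relation with $\beta_W$ exactly as stated, so that the mod-$4$ reduction produces precisely $\beta_W$ and not $\beta_W$ plus a coboundary that could be absorbed into $\tau$.
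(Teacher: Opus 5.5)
Your proof is correct and follows the paper's route. Since $a_v=k-2f(\Pi^-_v)$, your $\tau(v)=t(v)+t(0)$ equals $f(\Pi^-_v)\bmod 2$, which is exactly the paper's sign assignment $s(W_v)=(-1)^{f(\Pi^-_v)}$, and your mod-$4$ reduction of the Fourier inversion on two-dimensional isotropic subspaces repackages the paper's direct parity count $\sum_i f(\Pi^-_{P_i})\equiv kc \pmod 2$ over a joint stabiliser eigenbasis; both arguments then conclude via the Mermin--Peres square. (Minor slips: the degenerate cases $u=w$ or $0$ are never needed, and the final contradiction sums the six row/column relations, not nine phases.)
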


\begin{proof}
    Let $f:\StabS\ra\{0,1\}$ be a frame function of odd weight $k$. For any Hermitian Pauli $P\in\aP_n$, let $\Pi^\pm_P=\frac{1}{2}(\one\pm P)$ be the projector onto its $\pm 1$-eigenspace. Since $f(\Pi^+_P)+f(\Pi^-_P)=f(\one)=k=1\mod 2$, there is a map $s:\aP_n\ra\{\pm 1\}$,
    \begin{align}
        s(P)
        =\begin{cases}
            +1&\mathrm{if}\quad f(\Pi^+_P)=1\mod 2\\
            -1&\mathrm{if}\quad f(\Pi^-_P)=1\mod 2\\
        \end{cases}\; ,
    \end{align}
    equivalently, $s(P)=(-1)^{f(\Pi^-_P)}$. We show that $s$ is multiplicative and thus a valuation. Let $\{P_i\}_i$ be a set of mutually commuting Pauli operators with $\prod_iP_i=(-1)^c\one$ for $c\in\zz_2$. For any joint eigenstate $\psi\in\StabS$, let $P_i\psi=(-1)^{s_i(\psi)}\psi$; by additivity of $f$, for any joint stabiliser eigenbasis $B$, we have $f(\Pi^-_{P_i})=\sum_{\psi\in B,s_i(\psi)=1}f(\psi)=\sum_{\psi\in B}f(\psi)s_i(\psi)$, and thus
    \begin{align*}
        \sum_if(\Pi^-_{P_i})
        =\sum_i\sum_{\psi\in B}f(\psi)s_i(\psi)
        =\sum_{\psi\in B}f(\psi)\sum_is_i(\psi)
        =\sum_{\psi\in B}f(\psi)c
        =kc=c\mod 2\; ,
    \end{align*}
    where we write $f(\psi):=f(\Pi_\psi)$. Clearly, this implies that $s$ is multiplicative, that is,
    \begin{align*}
        \prod_is(P_i)
        =\prod_i(-1)^{f(\Pi^-_{P_i})}
        =(-1)^{\sum_if(\Pi^-_{P_i})}
        =(-1)^c\; ,
    \end{align*}
    hence, $s$ defines a valuation on $\aP_n$. However, no such valuation exists for $n\geq 2$ \cite{Mermin1990,Peres1991,Mermin1993}, hence, $k=0\mod 2$.
\end{proof}

The Mermin-Peres square \cite{Mermin1990,Peres1991,Mermin1993} in Fig.~\ref{fig: the noncontextual MP-square} is the simplest two-qubit version for a proof of Lm.~\ref{lm: no odd-weight frame functions}. For other (standard) contextuality proofs (that is, ruling out valuations) in the $n$-qubit Pauli group for $n\geq 2$, see Refs.~\cite{Arkhipov2012,TrandafirLisonekCabello2022,MullerGiorgetti2025}.\\

\textbf{Frame functions of weight $2$.} Contextuality of the $n$-qubit Pauli group - in the sense of the nonexistence of valuations - thus rules out Boolean-valued frame functions of odd weight. What about frame functions of even weight? To rule out those, too, we first consider the case $k=2$. Recall that a \emph{quadratic refinement $q:V\ra\zz_2$ of $\omega$} is defined by
\begin{align}\label{eq: quadratic refinement}
    q(v)+q(w)+q(v+w)=\omega(v,w)\mod 2\; .
\end{align}
Quadratic refinements are distinguished by their \emph{Witt index $i_W$}, defined as the largest dimension of an isotropic subspace in $\cQ:=q^{-1}(0)$, which is either $n$ or $n-1$ \cite{Arf1941,Taylor1992}. Up to symplectic equivalence, these are of the form
\begin{equation}\label{eq: canonical form quadratic refinement}
\begin{aligned}
    q_W(v)&:=\sum_{i=1}^na_ib_i & &\Leftrightarrow\quad i_W(q)=n \\
    q(v)&:=\sum_{i=1}^na_ib_i+a_1+b_1 & &\Leftrightarrow\quad i_W(q)=n-1\; .
\end{aligned}
\end{equation}
Note also that since over $\zz_2$ there is only one type of quadratic term, two quadratic refinements of $\omega$ differ by a linear functional, which by nondegeneracy of the symplectic form $\omega$ is of the form $\alpha_w(v)=\omega(v,w)$ for some $w\in V$.

\begin{lemma}\label{lm: k=2}
    There exists no frame function $f:\StabS\ra\{0,1\}$ of weight $k=2$ for $n\geq 3$.
\end{lemma}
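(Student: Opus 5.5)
The plan is to convert the Parseval identity Eq.~(\ref{eq: Parseval identity}) into a rigid combinatorial structure on $V$, and then derive a contradiction from a Mermin--Peres configuration. Since $k=2$, we have $f(\Pi^+_{\langle v\rangle})+f(\Pi^-_{\langle v\rangle})=2$, so $a_v=f(\Pi^+_{\langle v\rangle})-f(\Pi^-_{\langle v\rangle})\in\{-2,0,2\}$. More precisely, in each Lagrangian $L$ the function $f_L$ selects two distinct characters $\chi_1,\chi_2\in L^*$. Then $b_{v,L}=(-1)^{\chi_1(v)}+(-1)^{\chi_2(v)}$ is nonzero exactly on the hyperplane $H_L:=\ker(\chi_1+\chi_2)\subset L$, and there
\begin{align*}
    a_v=2(-1)^{\chi_1(v)+\lambda_L(v)}\; .
\end{align*}
Set $\cQ:=\{v\in V\mid a_v\neq 0\}$, which contains $0$, and $\sigma(v):=a_v/2\in\{\pm1\}$ on $\cQ$. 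Both are well defined independently of $L$, because $a_v$ is. The structure we extract is:
\begin{enumerate}
    \item[(i)] $\cQ\cap L$ is a hyperplane of $L$ for every $L\in\Lag(V)$.
    \item[(ii)] For commuting $v,w\in\cQ$ we have $v+w\in\cQ$ and $\sigma(v+w)=\sigma(v)\sigma(w)(-1)^{\beta_W(v,w)}$.
\end{enumerate}
Property (ii) holds because $v,w$ lie in a common Lagrangian $L$, on which $\sigma$ equals $(-1)^{\chi_1+\lambda_L}$ restricted to the subspace $H_L$, and $\delta\lambda_L=\beta_W|_{L\times L}$. In words, $v\mapsto\sigma(v)W_v$ is a valuation on the Hermitian Paulis labelled by $\cQ$.

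The second step is to identify $\cQ$ as the zero set of a quadratic refinement $q$ of Witt index $n-1$. Define $q:=1_{V\setminus\cQ}$. By (i), $q|_L$ is a linear functional on each $L$, so $q(v)+q(w)+q(v+w)=0=\omega(v,w)$ for commuting pairs. For a noncommuting pair $v,w$, choose an $(n-1)$-dimensional isotropic $U\subset\{v,w\}^\perp$. Then $U^\perp/U$ is a hyperbolic plane, and the three Lagrangians through $U$ are $U+\langle v\rangle$, $U+\langle w\rangle$ and $U+\langle v+w\rangle$. I would apply (i) to these three Lagrangians, and to further Lagrangians meeting them in codimension one, to count how many of $v,w,v+w$ can lie in $\cQ$. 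The aim is to show that this number is odd, i.e.\ $q(v)+q(w)+q(v+w)=1$. Granting this, $q$ is a quadratic refinement. It cannot have Witt index $n$, since then some Lagrangian would lie entirely in $\cQ$, contradicting (i). Hence $i_W(q)=n-1$.

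I expect this parity count for noncommuting triples to be the main obstacle. If it resists a direct argument, I would fall back on proving only what the final step needs: that $\cQ$ contains a $3\times 3$ grid of the type in Fig.~\ref{fig: the noncontextual MP-square}. I would try to build such a grid from two lines of $\cQ$ in a common Lagrangian, using (i) repeatedly.

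The final step produces the contradiction, and it is where $n\geq 3$ enters. For $q$ elliptic with $n\geq3$, choose a nondegenerate $4$-dimensional subspace $W\subset V$ on which $q$ is hyperbolic, i.e.\ $q|_W$ has Witt index $2$. Such a $W$ exists because $q\simeq q_W|_{\zz_2^4}\oplus q'$ with $q'$ the elliptic form of Eq.~(\ref{eq: canonical form quadratic refinement}) on the remaining $2n-4\geq 2$ dimensions. For $n=2$ no room is left, which is consistent with the existence of $f_{q,s}$. The nine nonzero zeros of $q|_W$ form a Mermin--Peres grid: six commuting triples, each summing to zero, with every point on two lines. Multiplying the relation (ii) over the six lines, each $\sigma(v)$ appears twice and cancels, leaving
\begin{align*}
    \prod_{\text{lines}}(-1)^{\beta_W}=1\; .
\end{align*}
But for any Mermin square this product equals $-1$; this is a symplectically invariant statement, independent of the choice of $W$. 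This contradiction rules out weight $2$.
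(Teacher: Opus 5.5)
Apart from one step, your outline follows the paper's proof. Your $q=1_{V\setminus\cQ}$ satisfies $q|_L=\chi_1+\chi_2$, which is the paper's $\rho_L$. Your $\sigma$ is the sign version of the paper's valuation $s$ on $q^{-1}(0)$. The exclusion of Witt index $n$ via (i) is the same. Your hyperbolic $4$-space $W$ is the span of $e_{n-1},e_n,f_{n-1},f_n$ after bringing $q$ to canonical form, and its nine singular points are the paper's Mermin--Peres square.

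The gap is the step you flag yourself: that $q(v)+q(w)+q(v+w)=1$ for every noncommuting pair, i.e.\ that $q$ is a quadratic refinement rather than merely additive on commuting pairs. Your sketch does not establish this.
\begin{itemize}
\item There is a parity slip. $q(v)+q(w)+q(v+w)=1$ means that an \emph{even} number ($0$ or $2$) of $v,w,v+w$ lie in $\cQ$, not an odd number.
\item The three Lagrangians through $U$ decide the question only when $U\subset\cQ$, in which case none of $v,w,v+w$ lies in $\cQ$. If instead $\cQ\cap U$ is a hyperplane $U_0$ of $U$, then (i) holds on $U+\langle v\rangle$, $U+\langle w\rangle$ and $U+\langle v+w\rangle$ whether or not $v$ (respectively $w$, $v+w$) lies in $\cQ$, independently of the other two.
\item The underlying issue is that additivity on commuting pairs fixes the noncommuting parity only up to a global constant. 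Linear functionals are also additive on commuting pairs, and their parity is $0$ on \emph{every} noncommuting pair. What excludes them is (i) applied to a Lagrangian on which a linear $q=\omega(\cdot,v_0)$ would vanish, namely any Lagrangian through $v_0$, which need have nothing to do with $v,w$.
\end{itemize}

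The missing idea is Part 1 of the paper's proof. Fix a symplectic basis $\{e_i,f_i\}$. Expand $q(e_i+f_i+e_j+f_j)$ once via the commuting pair $e_i+f_i,\ e_j+f_j$ and once via the commuting pair $e_i+f_j,\ e_j+f_i$. Comparing the two shows that $c_i:=q(e_i+f_i)+q(e_i)+q(f_i)$ is independent of $i$. Additivity across distinct symplectic planes then gives $q=\alpha+cq_W$, with $\alpha$ linear and a single $c\in\zz_2$. If $c=0$, then $q=\omega(\cdot,v_0)$ for some $v_0$, so $q|_L=0$ for any Lagrangian $L\ni v_0$. That means $L\subset\cQ$, contradicting (i). Hence $c=1$, and $q(v)+q(w)+q(v+w)=\omega(v,w)$ for all $v,w$.

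Your fallback of assembling a grid directly from (i) does not avoid this step as stated. A $3\times3$ grid of points of $\cQ$ inside a totally singular $4$-space (such spaces exist once $n\geq5$) has $\prod(-1)^{\beta_W}=+1$, because $\beta_W=\delta\lambda_L$ there, so it yields no contradiction. Moreover, no two lines of a genuine Mermin--Peres grid lie in a common Lagrangian, so building it from two lines in a Lagrangian does not directly work. You would still need to control the symplectic type of the grid, which is exactly what identifying $q$ provides.
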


\begin{proof}
    The proof is split into two parts. The first shows that the two characters selected by a frame function of weight $k=2$ define a quadratic refinement of $\omega$ in the symplectic vector space $(V=\zz^{2n}_2,\omega)$ with Arf invariant $1$. The second proves that its level sets contain a Mermin-Peres square arrangement, hence, do not admit a valuation (cf. Lm.~\ref{lm: no odd-weight frame functions}).

    \textbf{Part 1.} Let $X_L=\{\xi_L=\chi+\lambda_L\mid \chi\in L^*,\ f(\Pi_{L,\chi})=1\}$ (with $\lambda_L$ fixed such that it has coboundary $\beta_W$, see Sec.~\ref{sec: NC properties Pauli group}) and thus $|X_L|=2$. For every $L\in\Lag(V)$, let $\xi_L,\xi_L+\rho_L\in X_L$ and note that $\rho_L\in L^*$. Define
    \begin{align}\label{eq: quadratic refinement from frame function}
        q(v):=\rho_L(v)\; ,
    \end{align}
    for all $v\in L$ and $L\in\Lag(V)$. To see that Eq.~(\ref{eq: quadratic refinement from frame function}) is independent of the Lagrangian subspace $L$, note that $\rho_L(v)=0$ if and only if the two affine characters in $X_L$ agree on $v\in L$, if and only if $(k^+_v,k^-_v)\in\{(0,2),(2,0)\}$ for $k^\pm_v:=f(\Pi^\pm_v)$ (and $k^\pm_v=1$ otherwise). Consequently, $\rho_L(v)=f(\Pi^\pm_v)\mod 2$ and since the latter expression is independent of $L\in\Lag(V)$ (as $\Pi^\pm_v$ is a coarse-grained stabiliser projector and $f$ an additive frame function), so is $q(v)=\rho_L(v)$.

    Clearly, $q$ is additive on orthogonal sums, since $q_L:=q|_L=\rho_L$ is linear in every Lagrangian subspace $L\in\Lag(V)$, hence, $q(v+w)=q(v)+q(w)$ for all $v,w\in V$ with $\omega(v,w)=0$. It follows that $q$ is either a linear functional or a quadratic refinement of $\omega$, in fact, we show that $q(v+w)=q(v)+q(w)+c\omega(v,w)$. To see this, pick a symplectic basis $\{e_1,\cdots,e_n,f_1,\cdots,f_n\}$ of $(V,\omega)$ (that is, $\omega(e_i,f_j)=\delta_{ij}$ and $\omega(e_i,e_j)=\omega(f_i,f_j)=0$). Since $q$ is additive on orthogonal sums, it is determined by $q(e_i)$, $q(f_i)$ and $q(e_i+f_i)$. Now, set $c_i=q(e_i+f_i)+q(e_i)+q(f_i)$ and compute (for $i\neq j$)
    \begin{align*}
        c_i+c_j
        &=(q(e_i+f_i)+q(e_i)+q(f_i))+(q(e_j+f_j)+q(e_j)+q(f_j))\\
        &=q(e_i+f_j)+q(e_j+f_i)+q(e_i)+q(f_i)+q(e_j)+q(f_j)\\
        &=2(q(e_i)+q(f_i)+q(e_j)+q(f_j))
        =0\mod 2\; ,
    \end{align*}
    where we used that $\omega(e_i+e_j,f_i+f_j)=\omega(e_i+f_j,e_j+f_i)=0$ together with additivity of $q$ in the second step. Let $x=\sum_{i=1}^nv_i:=\sum_{i=1}^n(a_ie_i+b_if_i)$, and note that $q(x)=\sum_{i=1}^nq(v_i)$ with $q(v_i)=a_iq(e_i)+b_iq(f_i)+ca_ib_i$. Consequently, $q(v)=\alpha(v)+cq_W(v)$, where $\alpha(v)$ is a linear functional and $q_W(v)=a\cdot b$ is the quadratic refinement in the Weyl representation in Eq.~(\ref{eq: Weyl operators}), and thus $q(v+w)=q(v)+q(w)+c\omega(v,w)$. Now, if $c=0$, then $q$ is linear and thus $q(v)=\omega(v,v_0)$ for some $v_0\in V$. It follows that $q|_L=0$ for some $L\in\Lag(V)$, equivalently, $\ker(q)$ contains a Lagrangian subspace, contradicting that $\xi_L\neq\xi_L+\rho_L$ (equivalently $\rho_L\neq 0$ for all $L\in\Lag(V)$), and thus that $f$ is a frame function of weight $2$. We conclude $c=1$, hence, $q$ defines a quadratic refinement of $\omega$ with Witt index $n-1$.

    \textbf{Part 2.} Next, we show that $f$ induces a valuation on the rank-$2$ codespace projectors $\Pi_{L_q,\xi_L|{L_q}}$, where $L_q:=\ker(q|_L)=L\cap\cQ$ for $\cQ=q^{-1}(0)$. Indeed, since $q|_{L_q}=0$, the two affine characters $\xi_L,\xi_L+q|_L\in X_L$ agree on $L_q$ for each Lagrangian $L\in\Lag(V)$, and we may thus define a map $s:\cQ\ra\zz_2$ by
    \begin{align*}
        s(v)
        :=\xi_L(v)\quad\quad v\in q^{-1}(0)\; .
    \end{align*}
    It follows from similar arguments to the ones following Eq.~(\ref{eq: quadratic refinement from frame function}) that $s$ does not depend on the choice of Lagrangian subspace $L$: namely, since $\xi_L(v)=\xi_L(v)+q|_L(v)$, the value of $s$ depends only on the choice of eigenspace $\Pi^\pm_v$ selected by $f$, which is independent of $L\in\Lag(V)$. Next, note that if $v,w\in\cQ$ with $\omega(v,w)=0$ then also $v+w\in\cQ$, since $q(v+w)=q(v)+q(w)+\omega(v,w)=0$. It follows that
    \begin{align*}
        s(v+w)
        =s(v)+s(w)+\beta_W(v,w)\; ,
    \end{align*}
    in particular, $s$ is independent of the choice of $\xi_L\in X_L$, hence, defines a valuation on $\cQ$. However, this is impossible, since $\cQ$ contains a Mermin-Peres square arrangement for $n\geq 3$. More precisely, since $q$ has Witt index $n-1$ (if $i_W(q)=n$, there would exist a Lagrangian $L\subset q^{-1}(0)$ which would imply $\rho_L=0$, contradicting $|X_L|=2$), it is up to a Clifford conjugation of the form $q(v)=\sum_{i=1}^na_ib_i+a_1+b_1$ in Eq.~(\ref{eq: canonical form quadratic refinement}). Let $\zz^{2(n-1)}_2\cong V'\subset V$ be the symplectic space on the last $n-1$ qubits, and note that $q|_{V'}$ reduces to the quadratic refinement in the Weyl representation (on the restriction to those $n-1$ qubits). In particular, the following Pauli labels are therefore all contained in $\cQ$:
    \begin{align*}
        e_{n-1} & &e_n & &e_{n-1}+e_n \\
        f_n & &f_{n-1} & &f_{n-1}+f_n \\
        e_{n-1}+f_n & &e_n+f_{n-1} & &e_{n-1}+e_n+f_{n-1}+f_n
    \end{align*}
    Clearly, this defines an operator arrangement as in the Mermin-Peres square, which admits no valuation \cite{Mermin1990,Peres1991,Mermin1993}.
\end{proof}

Clearly, if $f:\StabS\ra\{0,1\}$ is a frame function of weight $k$, then $f(\psi)=1-f(\psi)$ is a frame function of weight $2^n-k$. Consequently, Lm.~\ref{lm: k=2} also rules out frame functions of weight $2^n-2$. More generally, it will be sufficient to consider frame functions of weight $k\leq 2^{n-1}$.

Note also that Lm.~\ref{lm: k=2} does not apply to $n=2$, as there are not sufficiently many labels to construct a Mermin-Peres square arrangement in this case. In fact, noncontextual properties do exist in this case (cf. Fig.~\ref{fig: the noncontextual MP-square}).

\begin{proposition}\label{prop: noncontextual MP-square}
    Noncontextual properties of $\ConP$ for $n=2$ and $k=2$ exist, and correspond with frame functions
    \begin{equation}\label{eq: CNC frame functions}
        f_{q,s}(\Pi_{L,\chi})
        =\begin{cases}
            1&\mathrm{if}\ (\chi+\lambda_L)|_{L_q}=s|_{L_q}\\
            0&\mathrm{if}\ (\chi+\lambda_L)|_{L_q}\neq s|_{L_q}
        \end{cases}
    \end{equation}
    where $q$ is a quadratic refinement of Witt index $n-1$, $L_q:=L\cap q^{-1}(0)$ and $s:q^{-1}(0)\ra\zz_2$ arbitrary subject to $s(0)=0$.
\end{proposition}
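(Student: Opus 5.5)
The plan is to prove both directions separately. First, every weight-$2$ Boolean-valued frame function on $\ConP$ for $n=2$ has the form $f_{q,s}$. Second, every $f_{q,s}$ really is a nonconstant frame function of weight $2$. Thm.~\ref{thm: NP=2FP} then turns these frame functions into noncontextual properties.

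\emph{Classification.} Let $f$ have weight $2$. I reuse Part~1 of the proof of Lm.~\ref{lm: k=2} verbatim; it only needs $n\geq 2$, since the computation of $c_i+c_j$ uses just two symplectic pairs.
\begin{itemize}
\item The rule $q(v):=\rho_L(v)$, with $\rho_L(v)=f(\Pi^\pm_v)\bmod 2$, is independent of $L$.
\item $q$ has the form $\alpha+c\,q_W$.
\item $c=1$ is forced, and the Witt index of $q$ is $n-1=1$.
\end{itemize}
For each $L\in\Lag(V)$, $q|_L=\rho_L$ is a nonzero linear functional. Hence $L_q=\ker(q|_L)=\{0,v_L\}$ is one-dimensional.

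The two selected affine characters $\xi_L$ and $\xi_L+\rho_L$ agree exactly on $L_q$. I define $s(v):=\xi_L(v)$ for $v\in q^{-1}(0)$, as in Part~2. This is independent of $L$, because it only records which eigenprojector $\Pi^\pm_v$ receives value $2$ under $f$. On each $L$, exactly $2$ of the $4$ affine characters agree with $s$ at $v_L$, and these are precisely $X_L$. Hence $f=f_{q,s}$, with $s(0)=0$ automatic.

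Part~2 of Lm.~\ref{lm: k=2} produced a contradiction from the relation $s(v+w)=s(v)+s(w)+\beta_W(v,w)$. Here that relation imposes no constraint. Suppose $v,w\in q^{-1}(0)$ are nonzero, distinct and commuting. Then $v+w\in q^{-1}(0)$, so $\langle v,w\rangle$ would be a $2$-dimensional totally singular subspace, contradicting Witt index $1$. So $s$ is genuinely arbitrary on $q^{-1}(0)\setminus\{0\}$.

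\emph{Existence.} Fix $q$ of Witt index $1$ and any $s$ with $s(0)=0$. I must check that $f_{q,s}$ is consistent on every context $C_U$, $U\in\Iso(V)$. That is, for every $L\supset U$, the value
\[
f(\Pi_{U,\chi})=\#\{\eta\in L^*:\eta|_U=\chi,\ (\eta+\lambda_L)(v_L)=s(v_L)\}
\]
should not depend on $L$ (using Lm.~\ref{lm: character restriction}). I split by $\dim U$.
\begin{itemize}
\item $\dim U=2$ ($U=L$): the count is trivially well defined, and $f$ selects exactly two atoms, giving weight $2$.
\item $\dim U=0$: the count is $2$ for every $L$.
\item $U=\langle u\rangle$ with $q(u)=0$: then $v_L=u$ for every $L\ni u$. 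The count is $2$ or $0$ according to whether $\chi+\lambda_L$ at $u$ equals $s(u)$. This depends only on the eigenspace $\Pi^\pm_u$, not on $L$, since $\lambda$ can be chosen compatibly under restriction as in Lm.~\ref{lm: character restriction}.
\item $U=\langle u\rangle$ with $q(u)=1$: then $v_L\neq u$, so the two conditions constrain independent coordinates of $\eta$. The count is $1$ for each sign of $\chi$.
\end{itemize}
Thus $f_{q,s}$ is additive, Boolean-valued and of weight $2$. It is nonconstant because, on any $L$, it selects $2$ of the $4$ atoms. Thm.~\ref{thm: NP=2FP} then yields the rank-$2$ noncontextual properties $\Pi_{L_q,\,s|_{L_q}}$, matching Fig.~\ref{fig: the noncontextual MP-square}.

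The main obstacle is the bookkeeping of the phase functions $\lambda_L$ in the case $q(u)=0$. One must check that the selected eigenspace of $W_u$ is the same for all Lagrangians $L\ni u$. This holds because the choice of $\lambda$ compatible with restriction makes $\chi+\lambda_L$ restricted to $\langle u\rangle$ depend only on $\Pi^\pm_u$.
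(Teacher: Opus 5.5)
Your proposal is correct and follows the paper's route. Part~1 of the proof of Lm.~\ref{lm: k=2}, which remains valid for $n=2$, forces a quadratic refinement $q$ of Witt index $1$, and the Mermin--Peres obstruction of Part~2 disappears because distinct nonzero elements of $q^{-1}(0)$ pairwise anticommute, so every $s$ with $s(0)=0$ is admissible; your case split over $\dim U$ simply makes explicit the consistency on the contexts $C_U$ that the paper asserts in one line, and the choice $\lambda_L|_U=\lambda_U$ you invoke for $q(u)=0$ is made separately for each $L\supset U$ (a global compatible choice would be a valuation, cf.\ Lm.~\ref{lm: different posets}), which suffices since $\chi+\lambda_L$ is the intrinsic eigenvalue label of $\Pi_{L,\chi}$.
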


\begin{proof}
    In this case, $q^{-1}(0)\backslash\{0\}$ corresponds to a maximal set of mutually anti-commuting Hermitian Pauli operators. Consequently, there are no compatibility relations on spectral values assigned to these operators under coarse-graining, and every assignment $s:q^{-1}(0)\ra\zz_2$ of such values thus defines a valuation on $q^{-1}(0)$.
\end{proof}

In fact, it is easy to see that Eq.~(\ref{eq: CNC frame functions}) extends to a frame function on $\cS^\mathrm{stab}_2$ (see App.~\ref{app: Gleason} below).

Notably, the frame functions in Eq.~(\ref{eq: CNC frame functions}) correspond with the CNC vertices of anti-commuting type in the literature on $\Lambda$-polytopes \cite{ZurelOkayRaussendorf2020,RaussendorfEtAl2020,IpekEtal2025}. However, CNC vertices do not correspond with frame functions for $n\geq 3$ (see below), nor are the noncontextual properties in the symplectic case (see App.~\ref{app: symplectic contextuality}) of CNC type.

Indeed, as we will see in this section, the case $n=2$ is exceptional from the perspective of stabiliser theory, yet hints at a rich structure of noncontextual properties in the underlying symplectic theory (see App.~\ref{app: symplectic contextuality}).\\

\textbf{Full classification.} To proceed, we further analyse the constraints on the Fourier coefficients $\hf_L(v)$.

\begin{lemma}\label{lm: Fourier coefficients}
    Let $f:\StabS\ra\{0,1\}$ be a frame function of weight $k$. Then $a_0=k$ and $a_v\in\{-k,-(k-2),\cdots,k-2,k\}$.
\end{lemma}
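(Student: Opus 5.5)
We use the identity already derived before Lm.~\ref{lm: no odd-weight frame functions}: for every $v\in V$ and any Lagrangian $L\ni v$,
\[
a_v=f(\Pi^+_{\langle v\rangle})-f(\Pi^-_{\langle v\rangle}).
\]
By Lm.~\ref{lm: character restriction}, $\Pi^\pm_{\langle v\rangle}$ is the sum of those stabiliser projectors $\Pi_{L,\chi}$ whose character satisfies $\chi(v)+\lambda_L(v)=0$ (respectively $=1$). This splits the $2^n$ atoms of the context $C_L$ into two halves.

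For $v=0$ the character condition is always satisfied: $\lambda_L(0)=0$ since $\lambda_L$ has coboundary $\beta_W$ and $\beta_W(0,0)=0$. Hence $a_0=\sum_{\chi\in L^*}f_L(\chi)=f(\one)=k$. Equivalently, $a_0=b_{0,L}=\hf_L(0)$, which is also the normalisation used in Eq.~(\ref{eq: Parseval identity}).

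For $v\neq 0$, set $k^\pm_v:=f(\Pi^\pm_{\langle v\rangle})$.

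First, $k^\pm_v$ are nonnegative integers. Each is a sum of values $f(\Pi_{L,\chi})\in\{0,1\}$ over a subset of the atoms of $C_L$, using finite additivity of $f$.

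Second, $k^+_v+k^-_v=f(\one)=k$. This holds because $\Pi^+_{\langle v\rangle}+\Pi^-_{\langle v\rangle}=\one$ is an orthogonal decomposition within $C_L$, and the frame function has weight $k$.

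It follows that $a_v=k^+_v-k^-_v=2k^+_v-k$ with $k^+_v\in\{0,1,\dots,k\}$. So $a_v$ ranges over $\{-k,-k+2,\dots,k-2,k\}$, with parity matching $k$. One should also check that $k^+_v\le 2^{n-1}$, since $\Pi^+_{\langle v\rangle}$ contains only $2^{n-1}$ atoms. This bound only restricts the range further, so it is consistent with the stated set. Well-definedness of $a_v$, meaning independence of $L$, was established before the lemma, so nothing further is needed.

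No step is a real obstacle; the only care needed is the sign bookkeeping. The phase $(-1)^{\lambda_L(v)}$ in the definition of $a_v$ is exactly what converts $b_{v,L}$ into the difference of the $W_v$-eigenspace weights rather than the difference of character-defined halves. Once that identification is recalled, the claim is immediate.
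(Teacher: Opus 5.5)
Your proof is correct and is essentially the paper's argument. The paper observes that $b_{v,L}=\sum_{\chi\in L^*}(-1)^{\chi(v)}f_L(\chi)$ is a sum of exactly $k$ signs within a fixed $L$, which gives $|a_v|\le k$ and $a_v=k \mod 2$; you record the same count as $a_v=k^+_v-k^-_v$ with $k^+_v+k^-_v=k$, using the eigenspace identity established before the lemma.
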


\begin{proof}
    First, note that $a_0=b_0=\sum_{\chi\in L^*}f(\chi)=\sum_{\chi\in L^*}f(\Pi_{L,\chi})=f(\one)=k$. Moreover, recall from Eq.~(\ref{eq: L-restricted Fourier coefficients}) that $b_{v,L}=\hf_L(v)=\sum_{\chi\in L^*}(-1)^{\chi(v)}f(\chi)$. Since $f$ is a Boolean-valued frame function of weight $k$, $b_{v,L}$ and thus also $a_v=(-1)^{\lambda_L(v)}b_{v,L}$ is a sum of $k$ terms of unit absolute value, hence, $|a_v|\leq k$. Finally, since $b_{v,L}$ and thus also $a_v$ is a sum of signs, it necessarily has the same parity as $k$, that is, $b_{v,L}=k\mod 2$ and similarly $a_v=k\mod 2$.
\end{proof}

We will also need the following lemma from symplectic geometry.

\begin{lemma}\label{lm: reduced stab theory}
    Let $f:\StabS\ra\{0,1\}$ be a frame function of weight $k$, and for every $0\neq v\in V$, let $\Pi^\pm_v=\frac{1}{2}(\one\pm W_v)$ be the $\pm 1$-eigenspace projectors. Then $\Pi^\pm_v\StabS\cong\cS^\mathrm{stab}_{n-1}$, and $f|_{\Pi^\pm_v\StabS}:\cS^\stab_{n-1}\ra\{0,1\}$ is a frame function of weight $f(\Pi^\pm_v)$.
\end{lemma}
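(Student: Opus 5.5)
The argument has two parts. First we identify the stabiliser states lying inside the range of $\Pi^\pm_v$ with the $(n-1)$-qubit stabiliser states. Then we check that the restriction of $f$ is additive with the right total weight.

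\textbf{Step 1 (symplectic reduction).} Fix $0\neq v\in V$. A stabiliser state $\Pi_{L,\chi}$ satisfies $\Pi_{L,\chi}\leq\Pi^\pm_v$ if and only if $v\in L$ and $(\chi+\lambda_L)(v)$ equals the sign selected by $\pm$. Necessity: if $v\notin L$, then some $w\in L$ has $\omega(v,w)=1$, and then $W_v$ anticommutes with a stabiliser of the state, so the state cannot be a $W_v$-eigenvector. Sufficiency follows from Lm.~\ref{lm: character restriction} applied to $\langle v\rangle\subset L$. Hence the stabiliser states in $\Pi^\pm_v\StabS$ are labelled by pairs $(L,\chi)$ with $L\ni v$ and $\chi(v)$ fixed.

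The Lagrangians containing $v$ correspond bijectively to Lagrangians of the reduced space $v^\perp/\langle v\rangle\cong\zz_2^{2(n-1)}$ with its induced (nondegenerate) form, via $L\mapsto L/\langle v\rangle$. Characters of $L$ with prescribed value on $v$ form an affine torsor over $(L/\langle v\rangle)^*$.

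To make this concrete, apply a Clifford unitary $U$ mapping $W_v$ to $\pm Z_1$. Cliffords permute stabiliser states and preserve orthogonality, so $U$ maps $\Pi^+_v\StabS$ bijectively onto $\{\dyad{0}\otimes\psi\mid\psi\in\cS^\stab_{n-1}\}$. Here we use that a stabiliser state stabilised by $Z_1$ is a product $\ket{0}\otimes\ket{\psi}$ with $\psi$ an $(n-1)$-qubit stabiliser state. This gives the bijection $\Pi^\pm_v\StabS\cong\cS^\stab_{n-1}$, and it is compatible with orthogonality.

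\textbf{Step 2 (frame function property).} Let $\{\psi_i\}_{i=1}^{2^{n-1}}$ be an orthonormal stabiliser basis of $\cS^\stab_{n-1}$, i.e.\ a maximal context in the sense of Eq.~(\ref{eq: stab contexts}) for $n-1$. Its image $\{\dyad{0}\otimes\psi_i\}$ extends by $\{\dyad{1}\otimes\psi_i\}$ to an orthonormal stabiliser basis of $n$ qubits. Additivity of $f$ then gives
\[
\sum_i f(\dyad{0}\otimes\psi_i)=f(\dyad{0}\otimes\one)=f(\Pi^+_{Z_1})\; ,
\]
and pulling back along $U$ yields $\sum_i f|_{\Pi^+_v\StabS}(\psi_i)=f(\Pi^+_v)$.

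This value does not depend on the chosen basis. The reason is that $\Pi^\pm_v$ is a coarse-grained element of $\cP^\stab_n$ (equivalently of $\ConP$), so $f(\Pi^\pm_v)$ is well defined, as already used in Eq.~(\ref{eq: L-restricted Fourier coefficients}). Boolean-valuedness is inherited directly, and finite additivity on every reduced context follows by the same extension argument.

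\textbf{Main obstacle.} The only delicate point is that every maximal context of $\cS^\stab_{n-1}$ must lift to a maximal context of $\cS^\stab_n$ containing $\Pi^\pm_v$. The paper notes that orthogonal stabiliser sets need not extend to bases, so this is not automatic. Here it is handled by the explicit tensor-product completion $\{\dyad{0}\otimes\psi_i\}\cup\{\dyad{1}\otimes\psi_i\}$. The remaining concern is the phase bookkeeping $\lambda_L$ under the Clifford map, and this is harmless because Cliffords act by permutation on the set $\StabS$.
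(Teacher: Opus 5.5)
Your proof is correct, but it takes a different route from the paper's.

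\textbf{The paper's route.} It stays entirely in the symplectic picture. It matches isotropic subspaces $U\ni v$ with isotropic subspaces $U/\langle v\rangle$ of $V_v=v^\perp/\langle v\rangle$. It labels the atoms of $C_U$ by the affine torsor of $\beta_W$-twisted characters on $U$, and notes that evaluation at $v$ splits these into two halves of size $2^{\dim U-1}=|(U/\langle v\rangle)^*|$, lying below $\Pi^+_v$ and $\Pi^-_v$ respectively. Compatibility with coarse-graining comes from Lm.~\ref{lm: character restriction}, and the weight is read off from $\tU=\langle v\rangle$.

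Because only the contexts $C_U$ with $U\in\Iso(V)$ enter, that argument transfers verbatim in two directions. It applies to frame functions over $\ConP$, which is needed for the $\ConP$ part of Thm.~\ref{thm: stabiliser frame functions}. With $\lambda_U\equiv 0$ it applies to $\StabSS$, which is needed in Lm.~\ref{lm: weights for symplectic Boolean-valued frame functions}. Additivity on arbitrary, non-Lagrangian stabiliser bases of $n-1$ qubits is only addressed in the remark after the proof, by precisely your tensor-product lift.

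\textbf{Your route.} The Clifford normalisation $W_v\mapsto\pm Z_1$ plus the completion $\{\dyad{0}\otimes\psi_i\}\cup\{\dyad{1}\otimes\psi_i\}$ is more concrete. It proves the literal claim, additivity on every maximal context of $\cP^\mathrm{stab}_{n-1}$, in one stroke, and handles the unextendibility issue cleanly.

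\textbf{The trade-off.} Your argument relies on Hilbert-space structure, so the paper's later uses need small additions.
\begin{itemize}
\item For the $\ConP$ version, add that Clifford conjugation maps isotropic contexts to isotropic contexts. Also add that the lift of the eigenbasis of an $(n-1)$-qubit Lagrangian stabiliser group $A'$ is the eigenbasis of $\langle Z_1\rangle\times(\one\otimes A')$.
\item For the symplectic version, replace the Clifford by the permutation $e_w\mapsto e_{Sw}$ of $\R[V]$ induced by a symplectic map $S$. Then use $L=\langle v\rangle\oplus(L\cap V')$, where $V'$ is the symplectic complement of a hyperbolic pair containing $v$.
\end{itemize}
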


\begin{proof}
    Let $(V=\zz^{2n}_2,\omega)$ be the symplectic vector space, let $0\neq v\in V$ and define its orthogonal complement by
    \begin{align*}
        v^\perp
        :=\{w\in V\mid\omega(v,w)=0\}\; .
    \end{align*}
    Since $\omega$ is alternating, $\langle v\rangle\subset v^\perp$. Moreover, since $\omega$ is nondegenerate, $\langle v\rangle=v^\perp\cap(v^\perp)^\perp$, and it follows that $V_v:=v^\perp/\langle v\rangle$ inherits a nondegenerate symplectic form $\overline{\omega}$, given by $\overline{\omega}(\ox,\oy):=\omega(x,y)$. Note that this is well-defined since
    \begin{align*}
        \omega(\tx,\ty)
        =\omega(\tx+v,\ty)
        =\omega(\tx,\ty+v)
        =\omega(\tx+v,\ty+v)
    \end{align*}
    for all $\tx,\ty\in v^\perp$. Now, $\dim(V_v)=(2n-1)-1=2(n-1)$, hence, $(V_v\cong\zz^{2(n-1)}_2,\overline{\omega})$ corresponds to the symplectic vector space of $n-1$ qubits. Moreover, for $P=W_v$, $v\neq0$ every Pauli $W_w$ with $w\in v^\perp$ preserves the eigenspace decomposition $\Pi^\pm_v$, hence, $\Pi^\pm_v\StabS=\{\psi\in\StabS\mid W_v\psi=\pm\psi\}$ are isomorphic to the stabiliser subtheory of $n-1$ qubits.

    To see that $f$ restricts to frame functions on $\Pi^{\pm}_v\StabS\cong\cS^\stab_{n-1}$, we extend the correspondence in the first part of the proof from Lagrangian to arbitrary isotropic subspaces. Isotropic subspaces of $V_v=v^\perp/\langle v\rangle$ correspond bijectively to isotropic subspaces $U\subseteq v^\perp$ with $v\in U$, via $U\mapsto \overline{U}:=U/\langle v\rangle$. Clearly, this correspondence preserves inclusion and restricts to a bijection between $\Lag(V_v)$ and the Lagrangian subspaces of $V$ containing $v$.

    Every atom $\Pi$ of $C_U$ determines a function $\xi_\Pi:U\to\zz_2$ by $W_w\Pi=(-1)^{\xi_\Pi(w)}\Pi$ for every $w\in U$, in fact $\Pi\mapsto\xi_\Pi$ is a bijection from the atoms of $C_U$ onto the affine torsor $X^\beta_U:=\{\xi:U\to\zz_2\mid \delta\xi=\beta_W|_{U\times U}\}$ over $U^*$. This identification is compatible with restriction by construction: if $\tU\subseteq U$ and $\Pi\leq\widetilde{\Pi}$ for atoms $\Pi\in C_U$, $\widetilde{\Pi}\in C_\tU$, then $\xi_{\widetilde{\Pi}}=\xi_\Pi|_\tU$.
    
    Now, let $U\in\mathrm{Iso}(V)$ with $v\in U$ and fix a sign $s\in\{\pm\}$. Since $\Pi\leq\Pi^s_v$ if and only if $(-1)^{\xi_\Pi(v)}=s$, and since evaluation at $0\neq v$ splits $X^\beta_U$ into sets of equal size, the atoms of $C_U$ split accordingly into two halves lying below $\Pi^+_v$ and $\Pi^-_v$, respectively, and each of cardinality $2^{\dim(U)-1}=|\overline{U}^*|$. Under the identification $\Pi^s_v\StabS\cong\cS^\stab_{n-1}$, the atoms of the context $C_{\overline{U}}$ of the $(n-1)$-qubit theory thus correspond precisely to those atoms of $C_U$ that lie below $\Pi^s_v$.
    
    This correspondence is compatible with coarse-graining. Indeed, let $\tU\subseteq U$ with $v\in\tU$. By Lm.~\ref{lm: character restriction}, every atom $\widetilde{\Pi}$ of $C_\tU$ is the sum of the atoms $\Pi$ of $C_U$ with $\Pi\leq\widetilde{\Pi}$, and each of these satisfies $\xi_\Pi(v)=\xi_{\widetilde{\Pi}}(v)$ since $v\in\tU$. Hence, $\widetilde{\Pi}\leq\Pi^s_v$ if and only if $\Pi\leq\Pi^s_v$ for every atom in the sum. Consequently, $f^s_v$ inherits well-definedness and finite additivity on the contexts $C_{\overline{U}}$ of the $(n-1)$-qubit theory from that of $f$ on the contexts $C_U$ with $v\in U$, restricted to the atoms below $\Pi^s_v$. Clearly, $f^s_v$ is Boolean-valued. Finally, taking $\tU=\langle v\rangle$ (equivalently $\overline{\tU}=\{0\}$) identifies the
    weight as,
    \begin{align}
        \sum_{\chi\in\oL^{*}} f^s_v(\Pi_{\oL,\chi})
        =f(\Pi^s_v)
        =:k^s_v\; ,
    \end{align}
    for all $\oL\in\Lag(V_v)$, hence, $f^s_v$ is a Boolean-valued frame function of weight $k^s_v$, as claimed.
\end{proof}

Note that the proof uses only the contexts $C_U$ with $U\in\Iso(V)$, and thus applies verbatim to frame functions defined with respect to $\ConP$ rather than $\ConS$. It also applies verbatim in the symplectic setting, upon replacing $\Pi_{U,\chi}$ by $\pi_{U,\chi}$, $\Pi^{\pm}_v$ by $\pi^{\pm}_v=\tfrac{1}{2}(e_0\pm e_v)$, $\cS^\stab_n$ by $\cS^{\mathrm{symp}}_n$ and setting $\lambda_U\equiv 0$ (see App.~\ref{app: symplectic contextuality} below).

In light of the existence of non-maximal unextendible sets of orthogonal stabiliser states (see Ref.~\cite{Frembs2026_unextendibility}), we remark that the sets of $2^{n-1}$ orthogonal states corresponding to a basis in $\cS^\stab_{n-1}$ of Lm.~\ref{lm: reduced stab theory} constitute a special case for which extendibility does hold: a complete reduced stabiliser basis lifts to a complete stabiliser basis of either eigenspace $\Pi_v^\pm\cH$, and adjoining the corresponding lift in the opposite eigenspace yields a complete $n$-qubit stabiliser basis.

Before considering the remaining case for $n=3$ (after Lm.~\ref{lm: no odd-weight frame functions} and Lm.~\ref{lm: k=2}), we briefly recall the following. The symplectic polar space $W(2n-1,2)$ is the point–line geometry whose points are the projective points $\langle v\rangle$, $0\neq v\in V$, and whose lines are the totally isotropic triples $\{v,w,v+w\}$ with $\omega(v,w)=0$. Its generators are the Lagrangian subspaces. For $m\geq 1$, an $m$-ovoid of $W(2n-1,2)$ is a set of points meeting every generator in exactly $m$ points.

\begin{lemma}\label{lm: k=4}
    There exists no stabiliser frame function $f:\StabS\ra\{0,1\}$ of weight $k=4$ for $n\geq 3$.
\end{lemma}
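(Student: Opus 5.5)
The plan is to combine Lm.~\ref{lm: reduced stab theory} with the earlier weight exclusions to pin down the Fourier coefficients $a_v$. The Parseval identity, Eq.~(\ref{eq: Parseval identity}), then either fails outright or forces a forbidden geometric configuration in $W(2n-1,2)$.

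\textbf{Step 1 (restricting the local weights).} Let $f$ be a frame function of weight $4$ and fix $0\neq v\in V$. By Lm.~\ref{lm: reduced stab theory}, $f$ restricts on $\Pi^\pm_v\StabS\cong\cS^\stab_{n-1}$ to frame functions of weights $k^\pm_v=f(\Pi^\pm_v)$, with $k^+_v+k^-_v=4$. Since $n-1\geq2$, Lm.~\ref{lm: no odd-weight frame functions} forces $k^\pm_v$ to be even, so $(k^+_v,k^-_v)\in\{(4,0),(0,4),(2,2)\}$. For $n\geq4$ we have $n-1\geq3$, and Lm.~\ref{lm: k=2} rules out $(2,2)$. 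Recalling $a_v=k^+_v-k^-_v$, we get $a_v\in\{\pm4\}$ for all $v\neq0$ when $n\geq4$, and $a_v\in\{\pm4,0\}$ when $n=3$.

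\textbf{Step 2 (Parseval).} Eq.~(\ref{eq: Parseval identity}) requires $\sum_{0\neq v\in L}a_v^2=4(2^n-4)$ for every $L\in\Lag(V)$. Each nonzero term equals $16$, so the set $\cO_f:=\{\langle v\rangle\mid a_v\neq0\}$ meets every Lagrangian in exactly $(2^n-4)/4=2^{n-2}-1$ points.

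For $n\geq4$, Step~1 gives $\cO_f=$ all points, hence $|\cO_f\cap L|=2^n-1\neq 2^{n-2}-1$, a contradiction. This case is therefore immediate.

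For $n=3$, $\cO_f$ is a $1$-ovoid, i.e.\ an ovoid, of the symplectic polar space $W(5,2)$. Double counting (135 generators, each point on 15 of them) gives $|\cO_f|=9$.

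\textbf{Step 3 (main obstacle: no ovoid of $W(5,2)$).} The key remaining step is to show that $W(5,2)$ has no ovoid. This is a classical result: Thas showed $W(2n-1,q)$ has no ovoids for $n\geq3$, and for $q$ even $W(5,q)\cong Q(6,q)$. I would first try to simply invoke this result.

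To keep the argument self-contained, I would instead try a local argument. Take a point $p\notin\cO_f$. Ovoid points in $p^\perp$ project injectively to an ovoid of $p^\perp/p\cong W(3,2)$, so each non-ovoid point is collinear with exactly $5$ ovoid points. I would then look for a small explicit family of Lagrangians, through a well-chosen point or a pair of non-collinear ovoid points, whose ``exactly one point'' conditions are jointly inconsistent. Such a ``Lagrangian certificate'' could be found by a finite search over the $135$ generators.

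If that search stalls, a fallback is to use the sign data. On $\cO_f$ the values $a_v=\pm4$ select an eigenspace $\Pi^\pm_v$ consistently, in the spirit of Part~2 of Lm.~\ref{lm: k=2}. Combined with the ovoid structure, one could then try to derive a Mermin--Peres type contradiction.
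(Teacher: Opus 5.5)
Your argument is correct. For $n=3$ it coincides with the paper's: Lm.~\ref{lm: reduced stab theory} and Lm.~\ref{lm: no odd-weight frame functions} give $a_v\in\{0,\pm4\}$, Parseval, Eq.~(\ref{eq: Parseval identity}), leaves exactly one nonzero coefficient per Lagrangian, and you are left with a $1$-ovoid of $W(5,2)$, whose nonexistence is classical.

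The detours in your Step~3 (the projection to $W(3,2)$, the search for Lagrangian certificates, the Mermin--Peres fallback) are unnecessary. You already have $|\cO_f|=9$. Any two ovoid points are non-collinear: two commuting points span an isotropic line, which lies in some Lagrangian, and that Lagrangian would then meet $\cO_f$ twice. Hence $\cO_f$ is a set of $9$ mutually anticommuting three-qubit Paulis. This exceeds the maximum $2n+1=7$, which is exactly the paper's footnote argument and makes the base case self-contained.

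For $n\geq4$ your route genuinely differs from the paper's. You exclude the split $(2,2)$ with Lm.~\ref{lm: k=2} on $n-1\geq3$ qubits. That forces $a_v=\pm4$ for all $v\neq0$, and Parseval fails because $16(2^n-1)\neq4(2^n-4)$.

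The paper instead inducts on the weight-$4$ statement itself. It excludes $(4,0)$ and $(0,4)$, so $a_v=0$ for all $v\neq0$, and Fourier inversion, Eq.~(\ref{eq: inverse FT}), then gives the non-Boolean value $f_L(\chi)=4/2^n$.

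Your version is non-inductive and makes the $n\geq4$ case independent of the ovoid argument. It pays for this by depending on Lm.~\ref{lm: k=2}, and hence on its quadratic-refinement and Mermin--Peres steps. The paper's induction needs only Lm.~\ref{lm: no odd-weight frame functions} and the $n=3$ base case.
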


\begin{proof}
    Consider first the case $n=3$, and let $f:\StabS\ra\{0,1\}$ be a frame function. By Lm.~\ref{lm: Fourier coefficients}, $a_v\in\{-4,-2,0,2,4\}$ for all $v\in V$. Next, fix any non-zero Pauli operator $\one\neq W_v\in\pm\aP_n$ and consider the projectors onto its eigenspaces $\Pi^\pm_v=\frac{1}{2}(\one\pm W_v)$. Let $k^\pm_v=f(\Pi^\pm_v)$ be their weights and note that $k^+_v+k^-_v=k$. Now, by Lm.~\ref{lm: reduced stab theory}, $\Pi^\pm_v\cS^\mathrm{stab}_3\cong\cS^\mathrm{stab}_2$ and $f^\pm_v:\cS^\mathrm{stab}_2\ra\{0,1\}$ reduces to a Boolean-valued frame function. By Lm.~\ref{lm: no odd-weight frame functions}, this implies that $k^\pm_v\in\{0,2,4\}$, hence, $a_v=k^+_v-k^-_v\in\{-4,0,4\}$. Consequently, Parseval's identity, Eq.~(\ref{eq: Parseval identity}) reads $\sum_{0\neq v\in L}a^2_v=k(2^n-k)=16$ for every $L\in\Lag(V)$, hence, implies that there exists a single non-zero element $a_v$ ($v\neq 0$) in every Lagrangian subspace. In other words, the Fourier coefficients $\{a_v\}_{0\neq v\in V}$ thus define a $1$-ovoid in the symplectic polar space $W(5,2)$, where $W(2n-1,2)$ is defined as the incidence geometry whose points are the projective points $\langle v\rangle$ for every $0\neq v\in V$, and whose lines correspond with projective isotropic lines $\{v,w,v+w\}$. However, it is well-known that no $1$-ovoids exist (for $n\geq 3$), hence, neither do Boolean-valued frame functions of weight $4$.\footnote{Note that a $1$-ovoid translates into a set of mutually anti-commuting Hermitian Pauli operators, whose cardinality is that of a partition of $\aP_n$ into disjoint maximal Abelian subgroups, that is, $2^n+1$. Yet, maximal sets of mutually anti-commuting Pauli operators have cardinality $2n+1$ (see Ref.~\cite{SarkarVanDenBerg2021}, Lm.~8), hence, $1$-ovoids exist for $n\leq 2$ only.}
    
    The extension to $n\geq 3$ now follows by induction with Lm.~\ref{lm: reduced stab theory}. Assume that there exist no Boolean-valued frame functions for $n-1$ qubits of weight $4$, and let $f:\StabS\ra\{0,1\}$ be a frame function of weight $k=4$ for $n$ qubits. Let $0\neq v\in V$, and consider the reduced frame function on the stabiliser sub-theories on $\Pi^\pm_v$ with weight $k^+_v+k^-_v=4$. Since, by Lm.~\ref{lm: no odd-weight frame functions}, $k^\pm_v=0\mod 2$ and $k^\pm_v\neq 4$ we must have $k^\pm_v=2$ and thus $a_v=f(\Pi^+_{\langle v\rangle})-f(\Pi^-_{\langle v\rangle})=k^+_v-k^-_v=0$ for all $0\neq v\in V$. Yet, since $a_v=(-1)^{\lambda_L(v)}b_{v,L}$ this contradicts $f_L(\chi)=\frac{1}{2^n}\sum_{v\in L}(-1)^{\chi(v)}b_{v,L}=\frac{4}{2^n}$ being Boolean-valued (for $n\geq 3$).
\end{proof}

Finally, an induction argument rules out nontrivial Boolean-valued stabiliser frame functions for all $n\geq 3$.

\begin{proof}[Proof of Thm.~\ref{thm: stabiliser frame functions}]
    By Lm.~\ref{lm: no odd-weight frame functions}, Lm.~\ref{lm: k=2} and Lm.~\ref{lm: k=4}, every frame function with $n=3$ is constant.

    Now, assume that $f:\StabS\ra\{0,1\}$ is a frame function for $n>3$, and with Lm.~\ref{lm: reduced stab theory} consider its restriction to the stabiliser subtheory $\Pi^\pm_v\StabS\cong\cS^\mathrm{stab}_{n-1}$ for $0\neq v\in V$. By the inductive hypothesis, $f^\pm_v$ is constant, that is, $k^\pm_v=f(\Pi^\pm_v)\in\{0,2^{n-1}\}$ and thus $k\in\{0,2^{n-1},2^n\}$. The only nonconstant frame function among those has weight $2^{n-1}$. However, such a frame function selects, for every non-zero Pauli label $0\neq v\in V$ one of its eigenspaces, hence, defines a map
    \begin{align*}
        s(W_v):=
        \begin{cases}
            +1&\mathrm{if}\ f(\Pi^+_v)=2^{n-1}\\
            -1&\mathrm{if}\ f(\Pi^-_v)=2^{n-1}
        \end{cases}\; ,
    \end{align*}
    which is a valuation, since for any two commuting Pauli operators $W_v,W_w$ with $\omega(v,w)=0$ (yet, possibly $W_vW_w=\epsilon W_{v+w}$ for $\epsilon=\pm 1$), the $s$-selected eigenspaces of $W_v$ and $W_w$ intersect nontrivially and thus $s(W_vW_w)=s(W_v)s(W_w)$, by additivity of $f$. Yet, for $n\geq 2$ no valuation exists, hence, $k\in\{0,2^n\}$, that is, $f:\StabS\ra\{0,1\}$ is constant.
\end{proof}

We remark that Thm.~\ref{thm: stabiliser frame functions} already holds on the level of the Pauli context poset $\ConP$, which contains far fewer contexts than the stabiliser poset $\ConS$, and thus puts far fewer constraints on a frame function defined over it; that is, it already holds by restricting additivity of stabiliser frame functions $f:\StabS\ra\{0,1\}$ to contexts $C\in\ConP$.

\section{Quaternionic Valuations}\label{app: quaternionic valuations}

In this section, we relate the nonexistence of noncontextual properties in the $n$-qubit Pauli group to the non-existence of valuations in quaternionic quantum mechanics. The main idea is the following embedding result.

\begin{lemma}\label{lm: restriction}
    Let $\cP,\cP'$ be two event algebras with corresponding context posets $\cC(\cP),\cC(\cP')$. If $i:\cP'\ra\cP$ is an inclusion such that $i(C'\cap C'')=i(C')\cap i(C'')$ for all $C',C''\in\cC(\cP')$, and $i(\cC_\mathrm{max}(\cP'))\subset\cC_\mathrm{max}(\cP)$, then every noncontextual property of $\cC(\cP)$ defines a noncontextual property on $\cC(\cP')$ by restriction.
\end{lemma}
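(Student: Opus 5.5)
The cleanest route is to work with frame functions rather than connections directly, via Thm.~\ref{thm: NP=2FP}. A noncontextual property $p_0\in C_0$ of $\cC(\cP)$ gives, by the ``$\Rightarrow$'' direction of that proof, a Boolean-valued frame function $f_\nabla:\cP\ra\mathbb N_0$ of weight $\dim(p_0)$, finitely additive on every Boolean subalgebra of $\cP$. I will pull this back along $i$, setting $f':=f_\nabla\circ i$, and then apply the ``$\Leftarrow$'' direction of Thm.~\ref{thm: NP=2FP} to $\cP'$ to obtain a context connection $\nabla'$ whose holonomy stabilises the $f'$-selected element of a maximal context of $\cP'$.

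The steps, in order, are as follows.

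\emph{(i) Additivity.} For $C'\in\cC(\cP')$, the image $i(C')$ is a Boolean subalgebra of $\cP$, contained in some context, and $i|_{C'}$ is a Boolean homomorphism. So orthogonal sums in $C'$ map to orthogonal sums, and $f'$ is finitely additive on $C'$.

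\emph{(ii) Atoms and weight.} Since $i(\cC_\mathrm{max}(\cP'))\subset\cC_\mathrm{max}(\cP)$, each maximal context $C'$ is mapped onto a maximal context $i(C')$. Its atoms therefore go to atoms of $\cP$. It follows that $f'(\Pm')\subseteq\{0,1\}$, and that $\sum_{p\in\Pm(C')}f'(p)=f_\nabla(\one)=\dim(p_0)$ is the same for every maximal $C'$. Hence $f'$ is a Boolean-valued frame function of weight $\dim(p_0)$.

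\emph{(iii) Consistency on overlaps.} The hypothesis $i(C'\cap C'')=i(C')\cap i(C'')$ ensures that elements identified (pasted) in $\cP'$ are exactly those identified in $\cP$. Consequently $f'$ is well defined on $\cP'$, and the coarse-graining counts in Eqs.~(\ref{eq: f constraint}) and (\ref{eq: dim constraint}) transfer from $\cP$ to $\cP'$. For the dimension count I will use $\dim'=\dim\circ i$, which follows from maximality being preserved together with the bijection on atoms.

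\emph{(iv) Conclusion.} Feeding $f'$ into the ``$\Leftarrow$'' construction of Thm.~\ref{thm: NP=2FP} gives a connection $\nabla'$ on $\cC(\cP')$ with $\Hol(\nabla')\leq\Stab(p^{f'}_{C'_0})$. The property obtained is $p'_0=p^{f'}_{C'_0}$. When $p_0\in i(C'_0)$ this is exactly the preimage $i^{-1}(p_0)$ for the choice $C_0=i(C_0')$, because holonomy-invariance forces $p^{f_\nabla}_{C_0}=p_0$.

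The main obstacle is nontriviality. The restricted $f'$ could be constant on $\cP'$ even when $f_\nabla$ is not, because nonconstancy on $\cP$ might be witnessed only on atoms outside $i(\cP')$. If the statement is read as ``restriction yields a (possibly trivial) noncontextual property'', this causes no problem. Otherwise I would argue as follows: $p'_0$ is neither $0$ nor $\one$, since its weight $\dim(p_0)$ lies strictly between $0$ and $\dim(\one)$ and weights are preserved under $i$. That is the property needed in Def.~\ref{def: noncontextual property}, which only asks that $p'_0\neq0,\one$ with stabilising holonomy, so a constant-on-atoms worry does not arise for that definition.
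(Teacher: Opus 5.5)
Your proof is correct, but it takes a different route from the paper. The paper argues directly with connections. Since maximal contexts of $\cC(\cP')$ are maximal in $\cC(\cP)$, the family $\{p_C\}_C$ representing the noncontextual property restricts to $\cP'$ with constant dimension. Since every context cycle in $\cC(\cP')$ is also a context cycle in $\cC(\cP)$, the holonomy of the restricted connection is a subgroup of $\Hol(\nabla)\leq\Stab(p_0)$. That argument is essentially one line, and the connection it produces is literally $\nabla$ restricted.

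You instead pass twice through Thm.~\ref{thm: NP=2FP}: you pull $f_\nabla$ back along $i$, check it is a Boolean-valued frame function of the same weight, and rebuild a connection $\nabla'$ by the ``$\Leftarrow$'' construction. This costs reliance on both directions of that theorem, and on $\cP'$ carrying its own dimension function, which you correctly identify with $\dim\circ i$. Your $\nabla'$ is also a freshly built connection rather than the restriction of $\nabla$. In return, the bookkeeping of base points and overlaps becomes automatic, because $f_\nabla$ is a global, base-point-free object.

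Two small remarks. In step (iii), $f'=f_\nabla\circ i$ is well defined simply as a composite, so the overlap hypothesis $i(C'\cap C'')=i(C')\cap i(C'')$ does no real work in your route; the paper uses it to make the restricted data consistent on overlaps of $\cP'$. Your nontriviality worry in the last paragraph is vacuous from the start. A Boolean-valued frame function of weight $\dim(p_0)$ with $0<\dim(p_0)<\dim(\one)$ selects some but not all atoms in every maximal context, so it cannot be constant; this is the resolution you give anyway.
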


\begin{proof}
    Since maximal contexts in $\cC(\cP')$ are maximal contexts in $\CP$, and $i(C'\cap C'')=i(C')\cap i(C'')$, the restriction of a noncontextual property $\{p_C\}_{C\in\CPm}$ to $\cC'$ is defined and inherits that $\dim(p_C)=\mathrm{const}$ for all $C\in\CPm$. Moreover, the holonomy condition in Def.~\ref{def: noncontextual property} is inherited under restriction, since every context cycle in $\cC(\cP')$ is a context cycle in $\CP$.
\end{proof}

The assumption $i(\cC'_\mathrm{max})\subset\cC_\mathrm{max}$ in Lm.~\ref{lm: restriction} cannot be dropped: if $\cC$ is a refinement of $\cC'$, equivalently if maximal contexts in $\cC'$ are coarse-grainings of maximal contexts in $\cC$, then the latter may also contain noncontextual properties that do not restrict to $\cC'$. For the cases of interest to us, $\ConP$, $\ConS$ and $\ConQ$, this assumption is satisfied.

As a consequence of Lm.~\ref{lm: restriction}, Thm.~\ref{thm: stabiliser frame functions} rules out noncontextual properties on context posets that include $\ConP$, which in particular implies the following special case of Gleason's theorem \cite{Gleason1957} already from a finite set of contexts.

\begin{corollary}\label{cor: complex case}
    Let $\cH=\C^{2^n}$ with $n\geq 3$. Then $\CH$ admits no nontrivial noncontextual properties.
\end{corollary}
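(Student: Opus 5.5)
The plan is to reduce Cor.~\ref{cor: complex case} to Thm.~\ref{thm: stabiliser frame functions} via the restriction result Lm.~\ref{lm: restriction}, applied to the inclusion of the Pauli context poset into the full context poset of $\cH=\C^{2^n}$. Let $\cP'=\cP(\aP_n)$ be the event algebra whose context poset is $\ConP$, and let $\cP=\PH$ be the event algebra of all projections on $\cH$, with context poset $\CH$. Take $i:\cP'\ra\cP$ to be the tautological inclusion: each codespace projector $\Pi_{U,\chi}$ is an honest projection on $\cH$, and each Boolean algebra $C_U$ is a Boolean subalgebra of projections on $\cH$.

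Next I will verify the two hypotheses of Lm.~\ref{lm: restriction}. First, maximal contexts are preserved. A maximal context of $\ConP$ is $C_L$ for $L\in\Lag(V)$. Its atoms are the $2^n$ rank-$1$ orthogonal stabiliser projectors $\Pi_{L,\chi}$, which sum to $\one$ by Lm.~\ref{lm: character restriction} with $\tU=\{0\}$. A Boolean algebra generated by a complete set of $2^n$ orthogonal rank-$1$ projectors in $\C^{2^n}$ is a maximal Boolean subalgebra of $\PH$, so $i(C_L)\in\cC_\mathrm{max}(\cP)$.

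Second, intersections are preserved: $i(C_U\cap C_{U'})=C_U\cap C_{U'}$ as sets of projections on $\cH$. This is automatic once $C_U\cap C_{U'}$ is computed inside $\PH$. The only point to check is that the intersection in $\ConP$, namely $C_{U\cap U'}$, coincides with the set-theoretic intersection of projections. One inclusion follows from Lm.~\ref{lm: character restriction}. For the reverse, an element of both algebras is a projection lying in $\mathrm{span}\{W_v\}_{v\in U}\cap\mathrm{span}\{W_v\}_{v\in U'}=\mathrm{span}\{W_v\}_{v\in U\cap U'}$, using linear independence of the Weyl operators. It therefore lies in the Boolean algebra generated by $\{\Pi_{U\cap U',\chi}\}$.

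Applying Lm.~\ref{lm: restriction}, any nontrivial noncontextual property of $\CH$ would restrict to a noncontextual property of $\ConP$. This restriction remains nontrivial: by Thm.~\ref{thm: NP=2FP} it corresponds to a nonconstant Boolean-valued frame function on $\PH$, whose restriction to rank-$1$ stabiliser projectors is a frame function of the same weight $k\notin\{0,2^n\}$, hence nonconstant. This contradicts Thm.~\ref{thm: stabiliser frame functions} for $n\geq 3$.

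The main, though mild, obstacle is the intersection condition, i.e.\ making sure that pasting in $\ConP$ agrees with pasting in $\CH$. The Weyl-basis argument above should settle it.
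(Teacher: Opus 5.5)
Your proposal is correct and is essentially the paper's proof: the paper uses Thm.~\ref{thm: NP=2FP} to pass to a nonconstant Boolean-valued frame function on rank-$1$ projections, restricts it to $\StabS$, and invokes Thm.~\ref{thm: stabiliser frame functions}, which is exactly your final paragraph. Your verification of the hypotheses of Lm.~\ref{lm: restriction}, including $C_U\cap C_{U'}=C_{U\cap U'}$, is correct but redundant, since the frame-function restriction already gives the contradiction on its own.
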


\begin{proof}
    Assume that $\CH$ admits a noncontextual property, equivalently a Boolean-valued frame function on $\PHone$. By restriction, this would define a Boolean-valued frame function on $\StabS$, yet, these are ruled out by Thm.~\ref{thm: stabiliser frame functions}.
\end{proof}

Next, we consider quaternionic Hilbert spaces. Consider the $n$-qubit Hilbert space with context category $\CH$ for $\cH=\C^{2^n}$, and let $\Theta:\cH\ra\cH$ be a quaternionic structure, that is, an anti-unitary operator with $\Theta^2=-\one$. Equipped with $\Theta$, we may view $\cH$ as a quaternionic Hilbert space $\cH\cong\Q^{2^{n-1}}$. Now, consider the subposet $\ConQ\subset\CH$ generated by all projections of the form
\begin{align}\label{eq: Kramers projection}
    \Pi_\Theta
    =\Pi+\Theta\Pi\Theta^{-1}\; ,
\end{align}
where $\Pi\in\cP_1(\C^{2^n})$.\footnote{From a physical point of view, $\Pi_\Theta$ may be interpreted as the projection onto a Kramers pair.} Recall that every anti-unitary operator $A$ can be decomposed as $A=UK$, where $K$ denotes complex conjugation and $U$ is unitary. For quaternionic structures this implies $\Theta^2=UKUK=U\overline{U}=-\one$, which is satisfied, in particular, for all Hermitian Pauli operators satisfying $W^T_v=(-1)^{q_W(v)}W_v=-W_v$, that is, Pauli strings with an odd number of $Y$ factors. Write $\Theta_v:=W_vK$, and note that the action of $\Theta_v$ on Pauli operators is given by
\begin{align*}
    \Theta_vW_w\Theta^\dagger_v
    =W_vKW_wKW_v
    =W_v\overline{W}_wW_v
    =W_vW^T_wW_v
    =(-1)^{\omega(v,w)+q_W(w)}W_w
    =(-1)^{q_v(w)}W_w\; ,
\end{align*}
where $q_v(w):=q_W(w)+\omega(v,w)$ is a quadratic refinement, which has Witt index $n-1$ if and only if $q_W(v)=1$, e.g. for $W_v=Y_1$, $q_v$ is of the canonical form in Eq.~(\ref{eq: canonical form quadratic refinement}). A choice of such quadratic refinement $\Theta_{q_v}:=\Theta_v$ thus decomposes the contexts in $\ConP$ into rank-$2$ projections as in Eq.~(\ref{eq: Kramers projection}), hence, it defines a coarse-graining,
\begin{align}\label{eq: quaternionic coarse-graining}
    C_U\stackrel{q_v}{\mapsto}C_{U_{q_v}}
    &:=\mathrm{span}_\B\{\Pi_{U_{q_v},\eta}\mid\eta\in U^*_{q_v}\}\; , &
    \mathrm{span}_\C(C_{U_{q_v}})
    &=\{W_w\in C_U\mid w\in q^{-1}_v(0)\}''\; .
\end{align}
where $U_{q_v}:=U\cap q^{-1}_v(0)$. Denote the partial order of all such contexts by $\cC_{q_v}(\aP_n)$. It follows immediately that this poset corresponds with the Pauli-restricted poset $\ConQ|_{\aP_n,\Theta_{q_v}}$ for a Pauli-compatible quaternionic structure $\Theta_{q_v}$,
\begin{align}\label{eq: Pauli-compatible quaternionic structure}
    \Theta_{q_v} W_w\Theta^{-1}_{q_v}=(-1)^{q_v(w)}W_w\quad\forall w\in V\; ,
\end{align}
with $q_v$ a quadratic refinement of Witt index $n-1$.

\begin{lemma}\label{lm: quaternionic Pauli contexts}
    Let $q$ be a quadratic refinement of Witt index $n-1$, then $\ConqP\cong\ConQ|_{\aP_n,\Theta_q}$.
\end{lemma}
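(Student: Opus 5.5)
The plan is to construct an explicit isomorphism of diagrams over the common underlying poset $(\Iso(V),\subseteq)$. By definition, $\ConqP$ consists of the coarse-grained contexts $C_{U_q}$ of Eq.~(\ref{eq: quaternionic coarse-graining}), one for each $U\in\Iso(V)$. The target $\ConQ|_{\aP_n,\Theta_q}$ consists of the contexts of $\ConQ$ generated by Pauli operators, i.e.\ Boolean algebras spanned by $\Theta_q$-invariant projections inside $\mathrm{span}_\C\{W_w\mid w\in U\}$. The map I would use is $C_{U_q}\mapsto C_U\cap\{\Theta_q\}'$, the $\Theta_q$-invariant part of $C_U$. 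I then need to show that this map is well defined and bijective on contexts, and that it preserves inclusions and overlaps.

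\textbf{Step 1.} Characterise the $\Theta_q$-invariant elements of $C_U$. By Eq.~(\ref{eq: Pauli-compatible quaternionic structure}) and antilinearity, a real combination $\sum_{w\in U}c_wW_w$ (real because projections in $C_U$ have real coefficients by Eq.~(\ref{eq: symplectic codespace projectors})) is $\Theta_q$-invariant if and only if $c_w=0$ whenever $q(w)=1$. So the invariant projections of $C_U$ are exactly those supported on $U_q=U\cap q^{-1}(0)$. Since $q|_U$ is linear ($\omega|_U=0$), $U_q$ is a subspace, so these are precisely the elements of $C_{U_q}\subseteq C_U$, using Lm.~\ref{lm: character restriction}.

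\textbf{Step 2.} Show that the atoms of $C_{U_q}$ are sums $\Pi+\Theta_q\Pi\Theta_q^{-1}$, i.e.\ of the Kramers form in Eq.~(\ref{eq: Kramers projection}). For $\Pi=\Pi_{U,\eta}$, conjugation gives $\Theta_q\Pi_{U,\eta}\Theta_q^{-1}=\Pi_{U,\eta+q|_U}$, because the phases $\lambda_U$ are real and unaffected by $K$. If $q|_U\neq0$, the two summands are distinct and orthogonal, and their sum is $\Pi_{U_q,\eta|_{U_q}}$ by Lm.~\ref{lm: character restriction}, since $U_q$ has codimension one in $U$. If $q|_U=0$, then $U_q=U$, and I expect to need that every $\Theta_q$-invariant projection has even rank, since $\Theta_q^2=-\one$ forces Kramers degeneracy. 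Then $\Pi_{U,\eta}$ has even rank $2^{n-\dim U}$, which requires $\dim U<n$. This is guaranteed because Witt index $n-1$ excludes Lagrangians inside $q^{-1}(0)$, and this is exactly where the hypothesis on $q$ enters.

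\textbf{Step 3.} Check that the correspondence respects order and overlaps. The assignment $U\mapsto U_q$ is monotone, and different $U$ with the same $U_q$ give the same Boolean algebra, so the comparison should be made at the level of the diagram. Inclusions $C_{\tU_q}\subseteq C_{U_q}$ correspond to $\tU_q\subseteq U_q$, and intersections satisfy $C_{U_q}\cap C_{U'_q}=C_{(U\cap U')_q}$ by the support characterisation in Step~1. Maximal contexts correspond to Lagrangian $L$, with $L_q$ of dimension $n-1$ and therefore $2^{n-1}$ rank-$2$ atoms, matching the maximal contexts of $\Q^{2^{n-1}}$ restricted to Paulis.

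The main obstacle is Step~3: showing that the identification is an isomorphism of order-preserving diagrams rather than just a bijection of sets of projections. In particular, I must confirm that every Pauli-generated context of $\ConQ$ arises as some $C_{U_q}$, so that no additional quaternionic contexts appear. For this I would argue that any $\Theta_q$-invariant commutative algebra generated by Paulis is generated by $\{W_w\mid w\in U, q(w)=0\}$ for an isotropic $U$, and hence equals $C_{U_q}$.
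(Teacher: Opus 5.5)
Your proposal is correct and follows essentially the paper's argument: both rest on $\Theta_q W_w\Theta_q^{-1}=(-1)^{q(w)}W_w$, which identifies the Pauli-generated quaternionic contexts with the contexts $C_W$ for isotropic $W\subseteq q^{-1}(0)$, i.e.\ with the $C_{U_q}$. Your Steps~2--3 (Kramers form of the atoms, inclusions, overlaps, maximal contexts) spell out details the paper leaves implicit. One small correction: the Witt-index hypothesis is really needed so that a Pauli-compatible $\Theta_q$ with $\Theta_q^2=-\one$ exists at all, since for Witt index $n$ it would square to $+\one$; your even-rank remark is therefore a consistency check, not the place where the hypothesis enters.
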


\begin{proof}
    By Eq.~(\ref{eq: Pauli-compatible quaternionic structure}), a Pauli operator $W_w$ is quaternionic if and only if $q(w)=0$, hence, the Pauli-restricted quaternionic contexts, which are generated by projections in Eq.~(\ref{eq: Kramers projection}), are precisely those corresponding to codespace projectors with respect to the isotropic subspaces $U\subset q^{-1}(0)$. Since, by Eq.~(\ref{eq: quaternionic coarse-graining}), $\ConqP$ is generated by the operators (labelled by elements) in these subspaces, the two context posets are isomorphic.
\end{proof}

By contrast, the projectors in general stabiliser contexts $C\in\cC_\mathrm{max}(\cP^\mathrm{stab}_n)$ cannot be partitioned into rank-$2$ projectors as in Eq.~(\ref{eq: Kramers projection}). Together with Lm.~\ref{lm: restriction} and Lm.~\ref{lm: k=2} this rules out valuations in quaternionic quantum mechanics.

\begin{proof}[Proof of Cor.~\ref{cor: no quaternionic valuations}]
    Since different quaternionic structures are unitarily related,  the posets $\ConQ$ for different quaternionic structures $\Theta$ in $\C^{2^n}$ are isomorphic, and it thus is sufficient to consider the case of a Pauli-compatible quaternionic structure $\Theta_q$ as in Eq.~(\ref{eq: Pauli-compatible quaternionic structure}). By Lm.~\ref{lm: quaternionic Pauli contexts}, $\ConqP\subset\ConQ|_{\Theta_q}\cong\ConQ$ is an inclusion and the respective maximal contexts are isomorphic (since $\dim(L_q)=n-1$, $C_{L_q}$ contains $2^{n-1}$ minimal projections of complex rank $2$, equivalently quaternionic rank $1$, which thus form a complete basis of $\Q^{2^{n-1}}$). Hence, Lm.~\ref{lm: restriction} implies that valuations on $\ConQ$ restrict to valuations on $\ConqP$. Yet, by part 2 of the proof of Lm.~\ref{lm: k=2}, no such valuations exist for $n\geq 3$.
\end{proof}

We close by commenting on this result. A quaternionic structure defines a pairing of rank-$1$ projections into Kramers pairs via Eq.~(\ref{eq: Kramers projection}). Viewed from the perspective of the ambient complex Hilbert space $\cH=\C^{2^n}$, $\ConQ$ thus defines a consistent pairing of rank-$1$ projectors in every maximal context of $\cC(\C^{2^n})$. This is analogous to the situation of $\ConP$, specifically part 1 in the proof of Lm.~\ref{lm: k=2} and may be viewed as a type of noncontextual coarse-graining.

However, comparing with the proof of Lm.~\ref{lm: k=2}, this is not sufficient to define a noncontextual property, as the latter also requires to pick a character in every context that is compatible under restriction. Yet, part 2 of the proof of Lm.~\ref{lm: k=2} asserts that, as a consequence of the nontrivial Pauli cocycle $\gamma_W$ (which restricts to $\beta_W=\frac{\gamma_W}{2}$ on commuting pairs), such a choice of characters does not exist for $n>2$. Of course, it is well-known that, as a consequence of Gleason's theorem \cite{Gleason1957,Varadarajan1985,MorettiOppio2018}, valuations (and, more generally, nontrivial Boolean-valued frame functions of any weight) do not exist in real, complex and quaternionic quantum mechanics in dimension greater than two \cite{Gleason1957,Varadarajan1985,MorettiOppio2018}. Our reduction to the $n$-qubit Pauli group may be read as a discrete argument to this end.

Finally, note that the case $n=2$ is excluded, as it corresponds to the exceptional case of Gleason's theorem for quaternionic Hilbert spaces. In light of its restriction to Pauli operators and stabiliser projectors, this can be seen explicitly from the pairing in Eq.~(\ref{eq: Kramers projection}): there are no nontrivial overlaps of the one-dimensional isotropic subspaces $L_q$, hence, no constraints on the choice of characters $\chi_{L_q}$.

\section{Classification of noncontextual properties in symplectic polar spaces}\label{app: symplectic contextuality}

In this section, we characterise noncontextual properties of the projective Pauli group, equivalently of its underlying symplectic theory. Our characterisation follows the same line of reasoning as that for Boolean-valued stabiliser frame functions in App.~\ref{app: classification}. The difference is the additional Weyl cocycle condition and the related affine characters $\lambda_U:U\ra\zz_2$ in Eq.~(\ref{eq: symplectic codespace projectors}), which are not present (identically zero) in the purely symplectic case. The relevant poset is therefore $(\cA(\cP_n/Z(\cP_n),\leq))$, where $\cA(\cP_n/Z(\cP_n)):=\{A/Z(\cP_n)\mid Z(\cP_n)<A<\cP_n,A\ \mathrm{Abelian}\}$ denotes the partial order of Abelian subgroups up to phases. Note that $(\Iso(V),\leq)\cong (\cA(\cP_n/Z(\cP_n),\leq))$ as bare posets.

For better comparison with the Pauli case, we represent this poset explicitly. Let $\R[V]$ be the Bloch space with orthogonal basis $\{e_v\}_{v\in V}$ and normalisation $\langle e_v,e_w\rangle=2^n\delta_{vw}$.\footnote{We choose the normalisation of the $\{e_v\}_{v\in V}$ to match that of the Weyl representation in Eq.~(\ref{eq: Weyl operators}).} The mapping $U\mapsto\big\{\sum_{\chi\in U^*}\lambda_\chi\pi_{U,\chi}\mid \lambda_\chi\in\B\big\}$ for every isotropic subspace $U\in\Iso(V)$ and with $\pi_{U,\chi}$ defined in Eq.~(\ref{eq: symplectic projectors}) defines a partial Boolean algebra $\cP(V)$ with corresponding context poset (under inclusion) $\ConPP$. It follows from Lm.~\ref{lm: character restriction} that $(\Iso(V),\subseteq)\cong\ConPP$ as bare posets. As a family of ordered Boolean algebras, $\ConPP$ is distinguished from $\ConP$ by the vanishing of the cohomology class of $\gamma_W$ (equivalently of $\beta_W$ for $n>1$), parametrised by  $\{\lambda_U\}_{U\in\Iso(V)}$ in Eq.~(\ref{eq: symplectic codespace projectors}) and Eq.~(\ref{eq: symplectic projectors}), for which we may w.l.o.g. choose the trivial representative. Moreover, in analogy with Eq.~(\ref{eq: stabiliser projectors}), we define
\begin{align}
    \StabSS
    :=\{\pi_{L,\chi}\mid L\in\Lag(V),\chi\in L^*\}\; .
\end{align}
Similarly to stabiliser states, these vectors generate an event algebra,
\begin{align}\label{eq: PV}
     \pV
     :=\bigcup_{C\subset\ConSS}C 
\end{align} 
where $\ConSS$ is the downward-closed set of all maximal contexts of the form
\begin{align}\label{eq: symplectic stab contexts}
    \mathrm{span}_\B\{\{\pi_{L_i,\chi_i}\}_{i=1}^{2^n}\mid\pi_{L_i,\chi_i}\in\StabSS,\ \pi_{L_i,\chi_i}\perp\pi_{L_j,\chi_j}\ \mathrm{if}\ i\neq j\}\; .
\end{align}
Here, orthogonality is defined in Eq.~(\ref{eq: stabiliser orthogonality}), which replaces the analogous orthogonality relations of stabiliser states in $\ConS$ \cite{DehaeneDeMoor2003,AaronsonGottesman2004} (with the only difference that the stabiliser states also depend on the nontrivial Weyl cocycle $\gamma_W$ via $\beta_W$). In particular, $\pi_{U,\eta}=\sum_{\substack{\chi\in L^*\\ \chi|_U=\eta}}\pi_{L,\chi}$ for the vectors represented in Eq.~(\ref{eq: symplectic projectors}), which is well-defined by Lm.~\ref{lm: character restriction}, and $\dim(\pi_{L,\chi})=1$ defines a dimension function on $\pV$ corresponding to a maximally mixed state $\rho_\one(p)=\frac{\dim(p)}{\dim(\one)}$. As with $\cP^\mathrm{stab}_n$, $\pV$ is not a partial Boolean algebra for $n\geq 3$ as there exist sets of mutually orthogonal vectors $\{\pi_{L_i,\chi_i}\}_{i=1}^S$ with $S<2^n$ yet with no other vector $\pi_{L,\chi}$ in their mutual complement (see Ref.~\cite{Frembs2026_unextendibility}).

By analogy with the projectors in Eq.~(\ref{eq: symplectic codespace projectors}), we may think of $\pi_{U,\eta}$ as `abstract symplectic projectors', and we thus call $\ConSS$ the \emph{symplectic stabiliser context poset}. In particular, note that $(\pV)_\mathrm{min}=\StabSS$ and we have $(\Iso(V),\leq)\cong\ConPP\subset\ConSS$, as $\ConSS$ contains more contexts. In turn, frame functions on $\StabSS$ are more constrained if finite additivity is understood with respect to $\ConSS$ as opposed to $\ConPP$.\\

With this notation at hand, and adhering to Thm.~\ref{thm: NP=2FP}, we proceed to characterise noncontextual properties in terms of Boolean-valued frame functions $f:\StabSS\ra\{0,1\}$. Unless stated otherwise, these are understood with respect to $\ConPP$, that is, finite additivity is imposed only on the contexts $c_U$ with $U\in\Iso(V)$; this requirement is weaker, but we will see in the proof of Thm.~\ref{thm: symplectic Boolean frame functions} that it selects the same frame functions as additivity with respect to $\ConSS$.\\

\textbf{Valuations.} Comparing with the signed Pauli case, we first note that Lm.~\ref{lm: no odd-weight frame functions} does not carry over to the symplectic case, since linear functionals $\alpha:V\ra\zz_2$ define valuations on $V$, and thus frame functions on $\pV$, of the form\footnote{There are many such linear functionals, indeed, $|V^*|=2^{2n}$. Note that this is different to a valuation, which demands linearity with respect to the projective representation of $V$ in terms of Pauli operators (see Lm.~\ref{lm: different posets}).}
\begin{align*}
    f_\alpha(\pi_{L,\chi})
    &=\begin{cases}
        1&\mathrm{if}\ \chi=\alpha|_L\\
        0&\mathrm{if}\ \chi\neq\alpha|_L
    \end{cases}\; .
\end{align*}
Yet, it is not only linear functionals that define families of characters that are compatible on overlaps. By the defining relation, $q(v)+q(w)+q(v+w)=\omega(v,w) \mod 2$ in Eq.~(\ref{eq: quadratic refinement}), every quadratic refinement $q$ provides another such family $\{q|_L\}_{L\in\Lag(V)}$. Together, these exhaust all such families.

\begin{lemma}\label{lm: compatible characters}
    Every family of characters $\{\chi_U\in U^*\}_{U\in\Iso(V)}$ with $\chi_U|_{\tU}=\chi_{\tU}$ whenever $\tU\subset U$ is induced by a unique map $\alpha+cq_W:V\ra\zz_2$ for $\alpha$ a linear functional, $c\in\zz_2$ and $q_W$ the quadratic refinement of the Weyl section in Eq.~(\ref{eq: Weyl operators}).
\end{lemma}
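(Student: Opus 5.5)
The plan is to glue the family into a single function on $V$ and then identify that function by the same symplectic-basis computation used in Part~1 of the proof of Lm.~\ref{lm: k=2}.

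\textbf{Step 1: a global function.} Given a compatible family $\{\chi_U\}_{U\in\Iso(V)}$, define $g:V\ra\zz_2$ by $g(0)=0$ and, for $v\neq0$, $g(v):=\chi_{\langle v\rangle}(v)$. Every isotropic $U$ containing $v$ contains $\langle v\rangle$, so compatibility under restriction gives $g|_U=\chi_U$ for all $U\in\Iso(V)$. In particular, $g$ is well-defined and linear on every isotropic subspace. Hence $g(v+w)=g(v)+g(w)$ whenever $\omega(v,w)=0$, since $\langle v,w\rangle$ is then isotropic. Conversely, any $g$ with this property is linear on each $U$ and therefore induces such a family. So the family is determined by $g$, which gives uniqueness, and it remains to classify the maps $g$ that are additive on commuting pairs.

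\textbf{Step 2: classify additive-on-isotropic maps.} Fix a symplectic basis $\{e_i,f_i\}_{i=1}^n$ and put $c_i:=g(e_i+f_i)+g(e_i)+g(f_i)$. Writing $x=\sum_i v_i$ with $v_i=a_ie_i+b_if_i$, the $v_i$ are mutually orthogonal. Repeated additivity along the isotropic partial sums then yields $g(x)=\sum_i g(v_i)$ and $g(v_i)=a_ig(e_i)+b_ig(f_i)+c_ia_ib_i$.

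Next show that $c_i=c_j$ for $i\neq j$, exactly as in Lm.~\ref{lm: k=2}. The vectors $e_i+f_j$ and $e_j+f_i$ are orthogonal, and $e_i+f_i+e_j+f_j$ splits as $(e_i+e_j)+(f_i+f_j)$ and also as $(e_i+f_j)+(e_j+f_i)$. Evaluating $g$ on it both ways gives $c_i+c_j=0$.

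Set $c:=c_i$ and let $\alpha$ be the linear functional with $\alpha(e_i)=g(e_i)$ and $\alpha(f_i)=g(f_i)$. Then $g=\alpha+cq_W$, where $q_W(v)=\sum_i a_ib_i$ in the Weyl coordinates of Eq.~(\ref{eq: Weyl operators}). One should also check that $\alpha$ and $c$ are uniquely determined by $g$: for $c$, the difference of two quadratic refinements is linear while $q_W$ itself is not linear.

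\textbf{Case $n=1$ and the main obstacle.} For $n=1$ there are no pairs $i\neq j$, so $c_1$ is unconstrained. This is still consistent with the statement, since any value of $c_1$ is realised by the choice of $c$. The main obstacle is to make sure the decomposition $g(x)=\sum_i g(v_i)$ uses only additivity on genuinely isotropic pairs. At each stage one adds $v_k$ to $\sum_{i<k}v_i$, and $v_k$ is orthogonal to that partial sum, so the argument goes through. The remaining checks are routine.
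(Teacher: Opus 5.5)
Your route is the same as the paper's: glue the family into $g(v)=\chi_{\langle v\rangle}(v)$, note additivity on commuting pairs, and identify $g$ by the symplectic-basis computation from Part~1 of Lm.~\ref{lm: k=2}. Your remarks on the uniqueness of $(\alpha,c)$ and on $n=1$ are correct, and go slightly beyond what the paper writes.

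\textbf{The step proving $c_i=c_j$ fails as you state it.} Each of the four summands in your two splittings, $e_i+e_j$, $f_i+f_j$, $e_i+f_j$ and $e_j+f_i$ (with $i\neq j$), is a sum of two mutually orthogonal basis vectors. Additivity therefore reduces both evaluations to $g(e_i)+g(e_j)+g(f_i)+g(f_j)$. Neither $c_i$ nor $c_j$ appears, so comparing the two evaluations is a tautology. The paper's parenthetical remark in Lm.~\ref{lm: k=2} cites the same pair $\omega(e_i+e_j,f_i+f_j)=0$. Its displayed computation, however, actually passes through a different splitting.

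The fix is to also split the vector as $(e_i+f_i)+(e_j+f_j)$. This is admissible because $\omega(e_i+f_i,e_j+f_j)=0$ for $i\neq j$. It gives $g(e_i+f_i+e_j+f_j)=g(e_i)+g(f_i)+g(e_j)+g(f_j)+c_i+c_j$, and comparing with either of your splittings yields $c_i+c_j=0$.

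One minor point: take the fixed symplectic basis to be the standard one, so that $\sum_i a_ib_i$ in those coordinates really is $q_W$. In any other symplectic basis the discrepancy is a linear functional, which you can absorb into $\alpha$. With these corrections the proof is complete.
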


\begin{proof}
    Given a compatible family, define $g:V\ra\zz_2$ by $g(0)=0$ and $g(v)=\chi_{\langle v\rangle}(v)$ for all $0\neq v\in V$. Compatibility then implies $g(v+w)=g(v)+g(w)$ for all $v,w\in V$ with $\omega(v,w)=0$. By the same arguments as in part 1 of the proof of Lm.~\ref{lm: k=2}, it follows that $g(v)=\sum_{i=1}^n(a_ig(e_i)+b_ig(f_i)+ca_ib_i)$ for $c\in\zz_2$ and $\{e_i,f_j\}_{i,j=1}^n$ a symplectic basis with $v=\sum_{i=1}^n(a_ie_i+b_if_i)$. Defining the linear functional $\alpha$ by $\alpha(e_i)=g(e_i)$ and $\alpha(f_i)=g(f_i)$, we thus have $g=\alpha+cq_W$.
\end{proof}

What is more, all such functions define frame functions on $\StabSS$. Indeed, let $f_g(\pi):=\langle t_g,\pi\rangle$, where
\begin{align}\label{eq: weight-1 frame function operator}
    t_g
    :=\frac{1}{2^n}\sum_{\substack{v\in V}}(-1)^{g(v)}e_v\; ,
\end{align}
and $g=\alpha+cq_W$ as in Lm.~\ref{lm: compatible characters}. Note first that it follows with Eq.~(\ref{eq: symplectic projectors}) that
\begin{align*}
    \langle t_g,\pi_{L,\chi}\rangle
    =\frac{1}{2^{2n}}\sum_{v\in V}
    \sum_{w\in L}(-1)^{g(v)+\chi(w)}\langle e_v,e_w\rangle
    =\frac{1}{2^n}\sum_{w\in L}(-1)^{g(w)+\chi(w)}
    =\begin{cases}
        1&\mathrm{if}\ g|_L=\chi \\
        0&\mathrm{otherwise}
    \end{cases}
    =\delta_{g|_L,\chi}\; ,
\end{align*}
hence, $f_g$ defines a frame function that is additive under the restriction to contexts of $\ConPP\cong(\Iso(V),\subseteq)$.

What is more, for a maximal set of $2^n$ mutually orthogonal vectors $\{\pi_{L_i,\chi_i}\}_{i=1}^{2^n}$,
\begin{align*}
    \sum_{i=1}^{2^n}f_g(\pi_{L_i,\chi_i})
    =\langle t_g,\sum_{i=1}^{2^n}\pi_{L_i,\chi_i}\rangle
    =\langle t_g,e_0\rangle
    =1\; ,
\end{align*}
since from $\langle\sum_{i=1}^{2^n}\pi_i,\sum_{i=1}^{2^n}\pi_i\rangle=2^n$, $\langle\sum_{i=1}^{2^n}\pi_i,e_0\rangle=2^n$ and $\langle e_0,e_0\rangle=2^n$, equality in the general Cauchy-Schwarz inequality implies that $\sum_{i=1}^{2^n}\pi_i=e_0$. Consequently, $f_g$ further extends to frame function of weight $1$ on $\StabSS$.

\begin{lemma}\label{lm: different posets}
    $\ConP\cong\ConPP$ as diagrams over $(\Iso(V),\subseteq)$ if and only if $\aP_n$ admits a valuation, if and only if $n=1$.
\end{lemma}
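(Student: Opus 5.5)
We prove the three statements in the chain ``$\ConP\cong\ConPP$ $\Leftrightarrow$ $\aP_n$ admits a valuation $\Leftrightarrow$ $n=1$'' by closing a cycle of implications. The second equivalence is standard. For $n=1$ any three anticommuting Paulis carry no nontrivial commutation relations, so every sign assignment is a valuation. For $n\geq 2$ the Mermin--Peres square (Fig.~\ref{fig: the noncontextual MP-square}), embedded into the first two qubits, rules out valuations. The main work is therefore the first equivalence.

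\emph{Valuation $\Rightarrow$ isomorphism.} Let $s:\aP_n\to\{\pm1\}$ be a valuation and write $s(W_v)=(-1)^{\sigma(v)}$ with $\sigma:V\to\zz_2$. Multiplicativity on commuting pairs, $W_vW_w=(-1)^{\beta_W(v,w)}W_{v+w}$, translates into
\begin{align*}
    \sigma(v+w)=\sigma(v)+\sigma(w)+\beta_W(v,w)\qquad\text{whenever}\ \omega(v,w)=0\; .
\end{align*}
Hence $\lambda_U:=\sigma|_U$ solves $\delta\lambda_U=\beta_W|_{U\times U}$ for every $U\in\Iso(V)$, and this family is automatically compatible under restriction. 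We then define the isomorphism of Boolean algebras $\Phi_U:C_U\to c_U$ by $\Pi_{U,\chi}\mapsto\pi_{U,\chi}$, using this choice of $\lambda_U$ in Eq.~(\ref{eq: symplectic codespace projectors}). Lm.~\ref{lm: character restriction}, which holds in both settings, shows that $\Phi$ commutes with the inclusions $C_\tU\subset C_U$ and $c_\tU\subset c_U$. It is therefore a natural isomorphism of diagrams over $(\Iso(V),\subseteq)$.

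\emph{Isomorphism $\Rightarrow$ valuation.} Let $\Phi=\{\Phi_U\}$ be a natural isomorphism of diagrams over $(\Iso(V),\subseteq)$. On $\ConPP$ the distinguished atom $\pi_{U,0}$ (trivial character) is compatible under coarse-graining: by Lm.~\ref{lm: character restriction}, $\pi_{\tU,0}\geq\pi_{U,0}$. Transporting this family along $\Phi^{-1}$ gives atoms $\Phi_U^{-1}(\pi_{U,0})\in C_U$ that are again compatible under restriction. Each such atom is $\Pi_{U,\chi_U}$ for some affine character $\chi_U+\lambda_U$ satisfying $\delta(\chi_U+\lambda_U)=\beta_W|_{U\times U}$.

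We then set $\sigma(v):=(\chi_{\langle v\rangle}+\lambda_{\langle v\rangle})(v)$. Compatibility under restriction gives, for commuting $v,w$, the relation $\sigma(v+w)=\sigma(v)+\sigma(w)+\beta_W(v,w)$, so $(-1)^{\sigma}$ is a valuation on $\aP_n$. Equivalently, $\{\Phi_U^{-1}(\pi_{U,0})\}_U$ is a weight-one noncontextual property of $\ConP$, and we conclude by Cor.~\ref{cor: valuation stabiliser condition}, or more directly by Lm.~\ref{lm: no odd-weight frame functions}.

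\emph{Expected obstacle.} The main point requiring care is that an abstract isomorphism of diagrams need not send $\pi_{U,0}$ to any canonically chosen atom. Naturality is exactly what guarantees that the transported atoms form a restriction-compatible family, and this is all the argument needs. We must also confirm that the bare-poset identification with $(\Iso(V),\subseteq)$ is fixed, so that $\Phi$ lives over the identity; an isomorphism inducing a nontrivial automorphism of $(\Iso(V),\subseteq)$ would still transport a compatible family, so the argument goes through either way.
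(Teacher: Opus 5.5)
Your proof is correct and follows the paper's argument. The forward direction is identical: set $\lambda_U:=\sigma|_U$, map $\Pi_{U,\chi}\mapsto\pi_{U,\chi}$, and get compatibility with inclusions from Lm.~\ref{lm: character restriction}. Your converse, which transports the restriction-compatible family $\{\pi_{U,0}\}_U$ back along $\Phi^{-1}$, is a cleaner phrasing of the paper's step, which reads off a sign $s(v)$ from the bijection of atoms of $C_{\langle v\rangle}$ and then uses compatibility with restriction to force $\delta(s|_L)=\beta_W|_{L\times L}$.
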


\begin{proof}
    Given a valuation $s:\aP_n\cong V\to\zz_2$ with $s(v)+s(w)+s(v+w)=\beta_W(v,w)$ for all $v,w\in V$, $\omega(v,w)=0$, we may use the freedom in the choice of local trivialisations to define $\lambda_U:=s|_U$ consistently for all $U\in\Iso(V)$. Indeed, $\delta\lambda_U=\delta s|_U=\beta_W|_{U\times U}$, and these choices are compatible under restriction, $s|_\tU=(s|_U)|_\tU$ for $\tU\subseteq U$. Consequently, $\phi_U(\Pi_{U,\chi}):=\pi_{U,\chi}$ defines an isomorphism of Boolean algebras for every $U$, and since the family $\{\phi_U\}_{U\in\Iso(V)}$ is compatible with the inclusions of contexts, it defines an isomorphism $C(\widetilde P_n)\cong C(\overline P_n)$ as diagrams over $(\Iso(V),\subseteq)$.
    
    Conversely, a diagram isomorphism restricts on $U=\langle v\rangle$ to a bijection of two-element sets of atoms, hence, to a sign $s(v)\in\zz_2$. Compatibility with restriction from every Lagrangian $L\in\Lag(V)$ forces $\lambda_L+s|_L\in L^*$, that is, $\delta(s|_L)=\beta_W|_{L\times L}$ for all $L$. Hence, $s$ defines a valuation on $\aP_n\cong V$, and the claim thus follows since valuations on $\aP_n$ exist only for $n=1$ \cite{Mermin1990,Peres1991,Mermin1993}.
\end{proof}

\textbf{Frame functions of weight $2$.} Next, we show that the pairing of quadratic refinements $q$ of $\omega$ with Witt index $n-1$ and linear functionals also gives rise to generalised noncontextual properties of both $\ConPP$ and $\ConSS$.

\begin{lemma}\label{lm: noncontextual symplectic properties}
    Let $(V=\zz^{2n}_2,\omega)$ be a symplectic vector space, let $q$ be a quadratic refinement of $\omega$ with Witt index $n-1$, and let $\alpha:V\ra\zz_2$ be a linear functional. Then the collection of elements
    \begin{align}
        p_U
        &:=\pi_{U_q,\alpha|_{U_q}}\; ,
    \end{align}
    with $U_q:=U\cap q^{-1}(0)$ for all $U\in\Iso(V)$ defines a noncontextual property for $\ConPP\cong(\Iso(V),\leq)$.
\end{lemma}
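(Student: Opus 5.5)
By Thm.~\ref{thm: NP=2FP}, it suffices to show that the assignment $\pi_{L,\chi}\mapsto f_{q,\alpha}(\pi_{L,\chi})=\delta_{\chi|_{L_q},\alpha|_{L_q}}$ is a Boolean-valued frame function on $\ConPP$, and that the elements $p_U$ are exactly its coarse-grainings $p^f_{c_U}$. We argue at the level of frame functions rather than constructing $\nabla$ directly; the connection with $\Hol\leq\Stab(p_0)$ is then supplied by the ``$\Leftarrow$'' direction of that theorem.

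\textbf{Step 1 (the weight is constant).} For every Lagrangian $L$, the restriction $q|_L$ is linear because $\omega|_L=0$. Moreover $q|_L\neq0$, since otherwise $L\subset q^{-1}(0)$, contradicting Witt index $n-1$. Hence $L_q=\ker(q|_L)$ is a hyperplane in $L$. By Lm.~\ref{lm: character restriction} (with $\lambda\equiv0$), $\pi_{L_q,\alpha|_{L_q}}=\sum_{\chi|_{L_q}=\alpha|_{L_q}}\pi_{L,\chi}$, and this sum has exactly $|(L/L_q)^*|=2$ terms. So $f_{q,\alpha}$ selects exactly two atoms in every maximal context $c_L$, i.e.\ it has weight $2$ and $p_L=p^f_{c_L}$.

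\textbf{Step 2 (additivity and compatibility on all isotropic $U$).} For $U\subseteq L$, we need $\sum_{\chi|_U=\eta}f(\pi_{L,\chi})$ to depend only on $(U,\eta)$ and not on $L$. The selected characters satisfy $\chi|_{L_q}=\alpha|_{L_q}$, so they are $\alpha|_L$ and $\alpha|_L+q|_L$. Restricting to $U$ gives $\alpha|_U$ and $\alpha|_U+q|_U$, with $q|_U$ linear, and this depends only on $U$. Explicitly, $f(\pi_{U,\eta})=2$ if $q|_U=0$ and $\eta=\alpha|_U$; it equals $1$ if $q|_U\neq0$ and $\eta\in\{\alpha|_U,\alpha|_U+q|_U\}$ (equivalently $\eta|_{U_q}=\alpha|_{U_q}$); and it is $0$ otherwise. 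Either way $p^f_{c_U}=\sum_{\eta|_{U_q}=\alpha|_{U_q}}\pi_{U,\eta}=\pi_{U_q,\alpha|_{U_q}}=p_U$, again by Lm.~\ref{lm: character restriction}. This gives well-definedness and finite additivity on every context of $\ConPP$, and shows that the family $\{p_U\}$ is the one produced by the frame function.

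\textbf{Step 3 (conclusion).} The frame function is nonconstant: for $n\geq1$ each $c_L$ has $2^n\geq2$ atoms, and for $n\geq2$ some atom is unselected. Thm.~\ref{thm: NP=2FP} then yields a context connection $\nabla$ with $\nabla_{c'c}(p_c)=p_{c'}$ and $\Hol\leq\Stab(p_{c_0})$, which is the claimed noncontextual property.

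The main point to check carefully is Step 2: that coarse-graining from different Lagrangians $L\supseteq U$ yields the same selected pair. This hinges on $q|_U$ being linear, intrinsic to $U$, and independent of $L$, which is precisely where the vanishing cocycle ($\lambda\equiv0$) of the symplectic setting is used. In the Pauli case the analogous step fails, as part 2 of Lm.~\ref{lm: k=2} shows.
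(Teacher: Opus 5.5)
Your proof is correct and follows essentially the paper's route: both use Lm.~\ref{lm: character restriction} to check that the selected pair $\{\alpha|_L,\alpha|_L+q|_L\}$ restricts to $\{\alpha|_U,\alpha|_U+q|_U\}$ independently of $L\supseteq U$, so that $p_U$ is compatible under coarse-graining; you additionally make explicit the appeal to Thm.~\ref{thm: NP=2FP} and the multiplicity bookkeeping (value $2$ on a single atom when $q|_U=0$), which the paper leaves implicit. One small correction to your closing remark: linearity of $q|_U$ holds just as well for signed Paulis, so that is not where the vanishing cocycle enters; it enters through the linear functional $\alpha$ supplying genuine characters $\alpha|_{U_q}$ that are compatible under restriction, and this is exactly what fails in the Pauli case, where for $n\geq3$ there is no $s$ on $q^{-1}(0)$ with $\delta s=\beta_W$.
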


\begin{proof}
    Let $q$ be a quadratic refinement of $\omega$ with Witt index $n-1$. This implies that $L_q:=L\cap q^{-1}(0)$ is an isotropic subspace of dimension $n-1$ in every Lagrangian subspace $L\in\Lag(V)$. Moreover, for any isotropic subspace $U\subset L$,
    \begin{align*}
        U_q
        :=U\cap q^{-1}(0)
        =(L\cap q^{-1}(0))\cap U
        =L_q\cap U\; ,
    \end{align*}
    which again is an isotropic subspace and compatible under restriction. In order to select elements from these sets, let $\alpha:V\ra\zz_2$ be a linear functional. Every such functional restricts to a character $\alpha|_U\in U^*$ in every isotropic subspace, and is compatible under inclusion. Hence, setting $p_U:=\pi_{U_q,\alpha|_{U_q}}=\sum_{\substack{\eta\in U^*\\\eta|_{U_q}=\alpha|_{U_q}}}\pi_{U,\eta}$ defines a collection of abstract projectors, which decompose into two atoms $\pi_{U,\alpha|_U},\pi_{U,(\alpha+q)|_U}$ whenever $q|_U\neq 0$, and are compatible under restriction,
    \begin{align*}
        p_U|_{C_\tU}
        =\pi_{U_q,\alpha|_{U_q}}|_{C_\tU}
        =\left(\sum_{\substack{\eta\in U^*\\\eta|_{U_q}=\alpha|_{U_q}}}\pi_{U,\eta}\right)\Bigg|_{C_\tU}
        =\sum_{\substack{\eta\in \tU^*\\\eta|_{\tU_q}=\alpha|_{\tU_q}}}\pi_{U,\eta}
        =\pi_{\tU_q,\alpha|_{\tU_q}}
        =p_\tU\; ,
    \end{align*}
    for all $C_U,C_\tU\in\ConPP\cong(\Iso(V),\leq)$ with $C_\tU\subset C_U$, where we used that, by Lm.~\ref{lm: character restriction}, $p_U|_{C_\tU}$ is the sum of elements in $C_U$ that are smaller than $p_\tU$, which are again expressible as in Eq.~(\ref{eq: symplectic projectors}), and where the joint eigenvalues of all Pauli labels in $U$ agrees with that of the actual selected element $\pi_{\tU,\alpha_{\tU}}$ in $C_\tU$.
\end{proof}

Moreover, the existence of noncontextual properties of the type in Lm.~\ref{lm: noncontextual symplectic properties} extends to the symplectic stabiliser poset $\ConSS$, which includes contexts generated by abstract stabiliser states with respect to the orthogonality relation in Eq.~(\ref{eq: stabiliser orthogonality}). In other words, they correspond with Boolean-valued frame functions $f:\StabSS\ra\{0,1\}$.

\begin{lemma}\label{lm: noncontextual property in symplectic stabiliser poset}
    Let $q$ be a quadratic refinement of $\omega$ with Witt index $n-1$, and $\alpha:V\ra\zz_2$ a linear functional. Then
    \begin{align}
        p_C
        &:=\sum_{\substack{\pi_{L,\chi}\in \Pm(C)\\\pi_{L,\chi}\leq\pi_{L_q,\alpha|_{L_q} }}}\pi_{L,\chi}\; ,
    \end{align}
    defines a noncontextual property for $\ConSS$. Equivalently,
    \begin{align}\label{eq: q-frame function}
        f_{q,\alpha}(\pi_{L,\chi})
        &=\begin{cases}
            1&\mathrm{if}\ \chi|_{L_q}=\alpha|_{L_q}\\
            0&\mathrm{if}\ \chi|_{L_q}\neq\alpha|_{L_q}
        \end{cases}\; ,
    \end{align}
    defines a Boolean-valued frame function $f:\StabSS\ra\{0,1\}$.
\end{lemma}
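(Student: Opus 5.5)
The plan is to copy the argument used above for the weight-$1$ functions $f_g$: realise $f_{q,\alpha}$ as the inner product with a single vector $t_{q,\alpha}\in\R[V]$. Additivity on every context of $\ConSS$, including the non-isotropic ones, then follows from linearity, as in the weight-$1$ case. The noncontextual property $p_C$ is obtained afterwards from Thm.~\ref{thm: NP=2FP}: $p_C$ is exactly the element $p^f_C$ built in Eq.~(\ref{eq: f-decomposition}) for $f=f_{q,\alpha}$.

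First I would check that $L_q=L\cap q^{-1}(0)$ has dimension exactly $n-1$ for every $L\in\Lag(V)$. Since $\omega|_L=0$, the refinement $q|_L$ is linear on $L$. It is nonzero, because otherwise $L\subseteq q^{-1}(0)$ would contradict Witt index $n-1$. Hence $L_q=\ker(q|_L)$ is a hyperplane of $L$.

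Next I would define the candidate vector
\begin{align*}
    t_{q,\alpha}:=\frac{1}{2^{n-1}}\sum_{v\in q^{-1}(0)}(-1)^{\alpha(v)}e_v\; .
\end{align*}
Using Eq.~(\ref{eq: symplectic projectors}) and $\langle e_v,e_w\rangle=2^n\delta_{vw}$, the pairing reduces to
\begin{align*}
    \langle t_{q,\alpha},\pi_{L,\chi}\rangle=\frac{1}{2^{n-1}}\sum_{w\in L_q}(-1)^{\alpha(w)+\chi(w)}\; .
\end{align*}
By character orthogonality on $L_q$ and $|L_q|=2^{n-1}$, this equals $\delta_{\chi|_{L_q},\alpha|_{L_q}}=f_{q,\alpha}(\pi_{L,\chi})$. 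In particular $f_{q,\alpha}$ is Boolean-valued on $\StabSS$.

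For the weight, take any maximal context of $\ConSS$, i.e.\ $2^n$ mutually orthogonal vectors $\pi_{L_i,\chi_i}$ in the sense of Eq.~(\ref{eq: stabiliser orthogonality}). The Cauchy--Schwarz argument given after Eq.~(\ref{eq: weight-1 frame function operator}) shows $\sum_i\pi_{L_i,\chi_i}=e_0$. Hence
\begin{align*}
    \sum_i f_{q,\alpha}(\pi_{L_i,\chi_i})=\langle t_{q,\alpha},e_0\rangle=\frac{2^n}{2^{n-1}}=2\; ,
\end{align*}
using $0\in q^{-1}(0)$ and $\alpha(0)=0$. So $f_{q,\alpha}$ selects exactly two atoms in every maximal context. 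Finite additivity on coarse-grained elements follows by extending through sums of atoms, and this is consistent because the coarse-grained elements are themselves sums of $\pi$'s (Lm.~\ref{lm: character restriction}). For $n\geq2$ the function is nonconstant, since the weight $2$ lies strictly between $0$ and $2^n$. Thm.~\ref{thm: NP=2FP} then yields the noncontextual property.

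It remains to identify that property with the displayed $p_C$. The atoms with $f_{q,\alpha}=1$ are precisely those with $\pi_{L,\chi}\leq\pi_{L_q,\alpha|_{L_q}}$, by Lm.~\ref{lm: character restriction}. The step I expect to need most care is the weight computation. It relies entirely on the identity $\sum_i\pi_{L_i,\chi_i}=e_0$ for arbitrary, not necessarily isotropic, orthogonal families, and on the exact normalisation of $t_{q,\alpha}$. Both are imported from the weight-$1$ discussion, so I would double-check that the Cauchy--Schwarz step there uses only the abstract orthogonality of Eq.~(\ref{eq: stabiliser orthogonality}).
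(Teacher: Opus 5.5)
Your proposal is correct and follows the paper's own proof. Both realise $f_{q,\alpha}$ as $\langle t_{q,\alpha},\cdot\rangle$ with the same vector $t_{q,\alpha}$, evaluate it on $\pi_{L,\chi}$ by character orthogonality on the hyperplane $L_q$, and get weight $2$ on every maximal context of $\ConSS$ from $\sum_i\pi_i=e_0$ via Cauchy--Schwarz. The check you flag goes through, since $\langle\pi_{L,\chi},\pi_{M,\chi'}\rangle=2^{-n}\sum_{v\in L\cap M}(-1)^{\chi(v)+\chi'(v)}$ vanishes exactly under the abstract orthogonality condition; the only difference is presentational, as you obtain $p_C$ through Thm.~\ref{thm: NP=2FP} while the paper extends the $\ConPP$-property of Lm.~\ref{lm: noncontextual symplectic properties}.
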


\begin{proof}
    Since (the contexts of) $\ConPP$ already contain all elements $\pi_{U,\chi}$ (see Eq.~(\ref{eq: symplectic projectors})), a noncontextual property on $\ConPP$ extends to one in $\ConSS$ if and only if every maximal context $C\in\cC_\mathrm{max}(\cP^\mathrm{symp}_n)$ contains exactly two elements $\pi_{L,\chi},\pi_{L',\chi'}$ (corresponding to abstract stabiliser projectors $\pi_{L,\chi}$ whose sum then defines the desired element).
    
    To see that this is the case, let $q:V\ra\zz_2$ be a quadratic refinement of Witt index $n-1$ and $\alpha:V\ra\zz_2$ a linear functional. For every Lagrangian subspace $L\in\Lag(V)$, $q$ defines a subspace $L_q<L$ of dimension $n-1$ by $L_q:=\ker(q|_L)$. A noncontextual property on $\ConPP$ thus defines a frame function $f_{q,\alpha}:\StabSS\ra\{0,1\}$ (over $\ConSS$), given by
    \begin{align*}
        f_{q,\alpha}(\pi_{L,\chi})
        =\begin{cases}
            1&\mathrm{if}\ \chi|_{L_q}=\alpha|_{L_q}\\
            0&\mathrm{otherwise}
        \end{cases}\; .
    \end{align*}
    Clearly, $f_{q,\alpha}$ selects two elements in every Lagrangian subspace, hence, in every maximal context of $\ConPP$. To see that this pertains in all stabiliser contexts, let
    \begin{align}\label{eq: q-level operators}
        t_{q,\alpha}
        :=\frac{1}{2^{n-1}}\sum_{\substack{v\in V\\q(v)=0}}(-1)^{\alpha(v)}e_v\; ,
    \end{align}
    and note that with Eq.~(\ref{eq: symplectic projectors}), we have
    \begin{align}\label{eq: weight-2 frame function as operator}
        \langle t_{q,\alpha},\pi_{L,\eta}\rangle
        =\frac{1}{2^{2n-1}}\sum_{\substack{v\in V\\q(v)=0}}
        \sum_{w\in L}(-1)^{\alpha(v)+\eta(w)}\langle e_v,e_w\rangle
        =\frac{1}{2^{n-1}}\sum_{w\in L_q}(-1)^{\alpha(w)+\eta(w)}
        =\begin{cases}
            1&\mathrm{if}\ \alpha|_{L_q}=\eta|_{L_q} \\
            0&\mathrm{otherwise}
        \end{cases}\; ,
    \end{align}
    where we used that $\{e_v\}_{v\in V}$ is an orthogonal basis of the abstract Bloch space $\R[V]$ in the second, and character orthogonality in the third step. Now, let $C\in\cC_\mathrm{max}(\cP^\mathrm{symp}_n)$ be any maximal stabiliser context with atoms $\{\pi_1,\cdots,\pi_{2^n}\}$. With Eq.~(\ref{eq: weight-2 frame function as operator}), any potential contextual dependence (on the choice of stabiliser context in the assignment of values of $f_{q,\alpha}$ to elements in $\pV$) disappears, and
    \begin{equation}\label{eq: weight 2}
        \sum_{i=1}^{2^n}f_{q,\alpha}(\pi_i)
        =\sum_{i=1}^{2^n}\langle t_{q,\alpha},\pi_i\rangle
        =\langle t_{q,\alpha},\sum_{i=1}^{2^n}\pi_i\rangle
        =\langle t_{q,\alpha},e_0\rangle
        =2\; ,
    \end{equation}
    since from $\langle\sum_{i=1}^{2^n}\pi_i,\sum_{i=1}^{2^n}\pi_i\rangle=2^n$, $\langle\sum_{i=1}^{2^n}\pi_i,e_0\rangle=2^n$ and $\langle e_0,e_0\rangle=2^n$, Cauchy-Schwarz implies that $\sum_{i=1}^{2^n}\pi_i=e_0$.

    For $n\geq3$, these exhaust the noncontextual properties of weight $2$ (see Lm.~\ref{lm: projective linear extension} below). For $n=2$, $\dim(L_q)=1$, hence, every map $s:q^{-1}(0)\to\{0,1\}$ with $s(0)=0$ restricts to a character on $L_q$, and the computation in Eq.~(\ref{eq: weight-2 frame function as operator}) then applies verbatim with $t_{q,\alpha}$ replaced by $t_{q,s}:=\frac{1}{2^{n-1}}\sum_{v\in q^{-1}(0)}(-1)^{s(v)}e_v$, yielding the frame functions $f_{q,s}$ of Prop.~\ref{prop: noncontextual MP-square}.
\end{proof}

The existence of generalised noncontextual properties may not seem surprising given that $\ConPP\cong(\Iso(V),\leq)$ admits valuations. However, the proofs of Lm.~\ref{lm: k=2} and Lm.~\ref{lm: noncontextual property in symplectic stabiliser poset} show that for the existence of a generalised noncontextual property it is not sufficient to only define a set of characters on contexts (here, isotropic subspaces), but one also needs a consistent coarse-graining of contexts. Quadratic refinements define such a coarse-graining into pairs of elements $\pi_{L,\chi}$ and $\pi_{L,\chi+q|_L}$. Yet, as we will show in the remainder of this section, valuations and the noncontextual properties in Lm.~\ref{lm: noncontextual property in symplectic stabiliser poset} are (up to complements) the only nontrivial such examples.

To prove this, we first need to show that the noncontextual properties in Lm.~\ref{lm: noncontextual property in symplectic stabiliser poset} are the only Boolean-valued frame functions of weight $2$. To this end, we will need two more concepts from projective geometry. The elliptic polar space $Q^-(2n-1,2)$ within $(V=\zz^{2n}_2,\omega)$ is the point-line geometry whose points are $\cQ^\circ=q^{-1}(0)\backslash\{0\}$ for $q$ a quadratic refinement of $\omega$ of Witt index $n-1$, and whose lines are the totally singular triples contained in $\cQ^\circ$. Moreover, we recall that a projective embedding of a point–line geometry $\Gamma$ is called universal if every other embedding of $\Gamma$ arises as a quotient of it; equivalently, every linear relation among the points of $\Gamma$ is generated by its line relations.

\begin{lemma}\label{lm: projective linear extension}
    Let $(V=\zz^{2n}_2,\omega)$ for $n\geq 3$ be a symplectic vector space. Let $q$ be a quadratic refinement of $\omega$ with Witt index $n-1$ and let $\cQ:=q^{-1}(0)$. Then every map $\widetilde{\alpha}:\cQ\ra\zz_2$ satisfying $\widetilde{\alpha}(v+w)=\widetilde{\alpha}(v)+\widetilde{\alpha}(w)$ (and $\widetilde{\alpha}(0)=0$) for every $v,w\in\cQ$ with $\omega(v,w)=0$ uniquely extends to a linear functional $\alpha:V\ra\zz_2$ with $\widetilde{\alpha}=\alpha|_\cQ$.
\end{lemma}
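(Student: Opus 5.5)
The uniqueness claim is the easy half, so I dispose of it first. A linear functional is determined by its values on a spanning set, so it suffices that $\cQ$ spans $V$. By Eq.~(\ref{eq: canonical form quadratic refinement}), after a symplectic change of basis I may take $q(v)=\sum_{i=1}^na_ib_i+a_1+b_1$. In the symplectic basis $\{e_i,f_i\}$ this gives $q(e_i)=q(f_i)=0$ for $i\geq2$ and $q(e_1)=q(f_1)=1$. I also check that $u_1:=e_1+e_2+f_2$ and $u_1':=f_1+e_2+f_2$ are singular. Indeed, $q(e_1)=1$, $q(e_2+f_2)=1$ and $\omega(e_1,e_2+f_2)=0$, so $q(u_1)=0$; the same computation gives $q(u_1')=0$. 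Hence
\[
B:=\{u_1,u_1',e_2,f_2,\dots,e_n,f_n\}\subset\cQ
\]
is a basis of $V$, and uniqueness follows.

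For existence, I define $\alpha$ as the linear functional agreeing with $\widetilde{\alpha}$ on $B$. I then set $g:=\widetilde{\alpha}+\alpha|_\cQ$, which is again additive on singular lines and vanishes on $B$. The goal is to show $g\equiv0$ on $\cQ$. Equivalently, the subset $S:=g^{-1}(0)\cap\cQ$ is all of $\cQ$. The set $S$ contains $B\cup\{0\}$ and has the closure property that if $v,w\in\cQ$ with $\omega(v,w)=0$ and two of $v,w,v+w$ lie in $S$, then so does the third. This is a pure statement about the line structure of the elliptic polar space $Q^-(2n-1,2)$. It is precisely the statement that its universal embedding has dimension $2n$, i.e. that all linear relations among points are generated by line relations.

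I would prove this by induction on $n$, using the reduction from the proof of Lm.~\ref{lm: reduced stab theory}. For a singular $u\neq0$, the quotient $u^\perp/\langle u\rangle$ carries the induced quadratic form $\overline{q}$, which is again elliptic of rank $n-1$. The map $g$ descends to $\overline{q}^{-1}(0)$ on $u^\perp$ once one knows $g$ is zero on $u$. Here the key fact is that for singular $x\in u^\perp$, the vector $x+u$ is singular and $g(x+u)=g(x)$.

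Choosing $u=e_n$, the basis elements in $e_n^\perp$ other than $e_n$ project to a basis of the reduced space. The inductive hypothesis then yields $g=0$ on $e_n^\perp\cap\cQ$, and likewise on $f_n^\perp\cap\cQ$. Any remaining singular $x$ satisfies $\omega(x,e_n)=\omega(x,f_n)=1$. For such $x$, I would find a singular $y\in x^\perp\cap e_n^\perp\cap f_n^\perp$ with $x+y\in e_n^\perp$. This is a short Witt-type counting argument, since the collinearity graph of a polar space of rank $\geq2$ has diameter $2$. Then $g(x)=g(x+y)+g(y)=0$.

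The main obstacle is the base case $n=3$, since Lm.~\ref{lm: k=2} shows the statement genuinely fails for $n=2$. For $n=2$, $\cQ^\circ$ consists of five pairwise non-collinear points and there are no line relations at all. For $n=3$, $Q^-(5,2)$ is the generalized quadrangle $\mathrm{GQ}(2,4)$ with $27$ points and $45$ lines. Its universal embedding dimension over $\B$ is known to be $6$, and I would either cite this or verify it directly. For the direct check, I would run the closure argument above from the basis $B$, successively adding third points of lines, until all $27$ points are reached. Since $\mathrm{GQ}(2,4)$ is small and highly transitive, I would organize this through the $10$ points of $e_3^\perp\cap\cQ^\circ$ and the $16$ points opposite to $e_3$. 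The inductive step above can be done by hand, but this base-case check is where the real content lies.
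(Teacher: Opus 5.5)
There is one step that fails as written: the choice of $y$ at the end of your inductive step. If $y\in e_n^\perp$ and $\omega(x,e_n)=1$, then $\omega(x+y,e_n)=1$. So no $y$ with both $y\in e_n^\perp$ and $x+y\in e_n^\perp$ exists. If in addition $y\in f_n^\perp$, then $x+y$ again satisfies $\omega(x+y,e_n)=\omega(x+y,f_n)=1$, i.e.\ it is just another point of the remaining type, and nothing is gained.

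The two perps have to be used on different summands. Take $y\in\cQ$ with
\[
\omega(x,y)=0,\qquad \omega(y,e_n)=0,\qquad \omega(y,f_n)=1 .
\]
Then $y\in e_n^\perp\cap\cQ$, and $x+y\in f_n^\perp\cap\cQ$ (it is singular because $\omega(x,y)=0$). Hence $g(x)=g(y)+g(x+y)=0$.

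Such a $y$ exists. The span $H=\mathrm{span}(x,e_n)$ is a hyperbolic plane, so $W=H^\perp$ is nondegenerate of dimension $2n-2$ and $q|_W$ is again elliptic. Your basis argument then shows $W$ is spanned by its singular vectors. Moreover $\omega(\cdot,f_n)|_W\neq 0$, since $f_n\notin H$: $x\neq f_n$ because $\omega(x,f_n)=1$, and $x\neq e_n+f_n$ because $q(e_n+f_n)=1$. So the singular vectors of $W$ do not all lie in $f_n^\perp$. This is the correct form of your ``diameter $2$'' remark: you need a common neighbour of the non-collinear points $x,e_n$ that is not collinear with $f_n$.

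With this repair your proof is correct, and it takes a genuinely different route from the paper. The paper builds the universal embedding space $E_\cQ$ and uses the same spanning set $B$ for surjectivity of $\Phi:E_\cQ\to V$. It then obtains injectivity in one stroke by citing universality of the natural embedding of $Q^-(2n-1,2)$ for all $n$ \cite{CardinaliGiuzziPasini2021}.

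You instead prove that universality by induction through the residue $u^\perp/\langle u\rangle$, the same reduction as in Lm.~\ref{lm: reduced stab theory}. The only external input is then the case $\mathrm{GQ}(2,4)$, which is a finite check. Your closure computation from $B$ does reach all $27$ points:
\begin{itemize}
\item from $e_2,f_2,e_3,f_3$ one gets the nine singular points with $a_1=b_1=0$;
\item starting from $u_1,u_1'$ one then gets all points with $(a_1,b_1)=(1,0)$ or $(0,1)$;
\item the points with $(a_1,b_1)=(1,1)$ arise as sums of collinear pairs of these.
\end{itemize}
The paper's route is shorter but rests entirely on the cited embedding theorem. Yours is self-contained up to a hand-checkable base case, and its inductive structure matches the reductions used elsewhere in the paper.
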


\begin{proof}
    The proof follows from universality of finite polar spaces (see Ref.~\cite{CardinaliGiuzziPasini2021}, see also Ref.~\cite{Tits1974}, chapter 8), in particular, of the point-line geometry $\Gamma=\cQ$ with points $\cQ^\circ:=q^{-1}(0)\backslash\{0\}$ and lines $\{v,w,v+w\}\in\cQ$. Explicitly, let $\zz^{\cQ^\circ}_2$ be the free vector space over $\cQ^\circ$ with basis elements $\{e_v\}_{v\in\cQ^\circ}$, and define
    \begin{align}\label{eq: universal embedding}
        E_\cQ
        :=\zz^{\cQ^\circ}_2/\langle e_v+e_w+e_{v+w}\mid v,w\in\cQ^\circ,v\neq w, \omega(v,w)=0\rangle\; ,
    \end{align}
    where the equivalence relation is with respect to all lines in $\cQ$. Now, define the canonical linear map $\Phi:E_\cQ\ra V$ by $\Phi([e_v])=v$, which is well-defined because every line relation is already an ambient linear relation in $V$ (since $v+w\in\cQ^\circ$ for all $v,w\in\cQ^\circ$ with $v\neq w$, $\omega(v,w)=0$). Surjectivity of $\Phi$ follows since $\cQ$ spans $V$, which is easily seen from the canonical form in Eq.~(\ref{eq: canonical form quadratic refinement}), and noting that $\cQ=q^{-1}(0)$ contains the symplectic basis elements $e_i,f_i$ for $i\geq 2$, as well as the elements $e_1+e_2+f_2$ and $f_1+e_2+f_2$ which combine to $e_1=(e_1+e_2+f_2)+e_2+f_2$ and $f_1=(f_1+e_2+f_2)+e_2+f_2$, hence, $\mathrm{span}_{\zz_2}(\cQ)=V$; injectivity follows from universality of the natural projective embedding of the elliptic polar space $Q^-(2n-1,2)$ (see Ref.~\cite{CardinaliGiuzziPasini2021}), whose point set corresponds with $\cQ^\circ$ and whose lines are the relations in Eq.~(\ref{eq: universal embedding}), by which every binary linear relation among its points is generated by its line relations.
    
    Finally, since by assumption $\widetilde\alpha:\cQ\ra\zz_2$ with $\widetilde\alpha(v+w)=\widetilde\alpha(v)+\widetilde\alpha(w)$, this map descends through the quotient, defining linear functionals $\overline{\alpha}:E_\cQ\ra\zz_2$, $\overline{\alpha}([e_v])=\widetilde{\alpha}(v)$ and, in turn, $\alpha:V\ra\zz_2$ given by $\alpha(v):=\overline{\alpha}\circ\Phi^{-1}(v)$.
\end{proof}

The proof of Lm.~\ref{lm: projective linear extension} does not apply to $n=2$, since in this case $\cQ^\circ$ is a $1$-ovoid, corresponding to a maximal set of mutually anti-commuting Pauli operators, which therefore does not contain any nontrivial relations.

\begin{lemma}\label{lm: k=2, symplectic}
    Let $f:\StabSS\ra\{0,1\}$ for $n\geq 3$ be a frame function of weight $2$. Then $f(\pi_{L,\chi})=\langle t_{q,\alpha},\pi_{L,\chi}\rangle$ for every Lagrangian subspace $L\in\Lag(V)$, character $\chi\in L^*$, and with $t_{q,\alpha}$ as in Eq.~(\ref{eq: q-level operators}).
\end{lemma}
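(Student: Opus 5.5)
We mirror Part 1 and Part 2 of the proof of Lm.~\ref{lm: k=2}, now in the symplectic setting where $\lambda_U\equiv 0$. In every Lagrangian $L$ the frame function selects two characters $\chi_L\neq\chi_L+\rho_L$, with $0\neq\rho_L\in L^*$.

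\textbf{Step 1: define $q$ from the selected pairs.} Set $q(v):=\rho_L(v)$ for $v\in L$. Exactly as before, $\rho_L(v)=0$ if and only if $f(\pi^\pm_v)\in\{0,2\}$, and $\rho_L(v)=1$ if and only if $f(\pi^+_v)=f(\pi^-_v)=1$. Since $f(\pi^\pm_v)$ is a coarse-grained value, $q$ is independent of $L$. As $q$ is linear on each Lagrangian, it is additive on orthogonal pairs.

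The symplectic-basis argument of Part 1 then gives $q=\alpha'+c\,q_W$. Since $\rho_L\neq 0$ for every $L$, $\ker q$ contains no Lagrangian. This excludes $c=0$, because a linear $q=\omega(\cdot,v_0)$ vanishes on a Lagrangian containing $v_0$. It also excludes Witt index $n$. Hence $q$ is a quadratic refinement of Witt index $n-1$, and $L_q=\ker(q|_L)$ has dimension $n-1$.

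\textbf{Step 2: a consistent sign function on $\cQ=q^{-1}(0)$.} On $L_q$ the two selected characters agree. Define $\widetilde\alpha(v):=\chi_L(v)$ for $v\in\cQ$ and any Lagrangian $L\ni v$. This is well defined because $\widetilde\alpha(v)$ records which eigenspace $\pi^\pm_v$ has $f$-weight $2$, a context-independent quantity.

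For $v,w\in\cQ$ with $\omega(v,w)=0$ we have $v+w\in\cQ$, and choosing a common Lagrangian gives $\widetilde\alpha(v+w)=\widetilde\alpha(v)+\widetilde\alpha(w)$. Here there is no $\beta_W$ term, in contrast to the Pauli case, which is precisely why no Mermin--Peres obstruction arises.

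Lm.~\ref{lm: projective linear extension}, which needs $n\geq 3$, now extends $\widetilde\alpha$ uniquely to a linear functional $\alpha:V\ra\zz_2$ with $\alpha|_\cQ=\widetilde\alpha$.

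\textbf{Step 3: identify $f$.} In each $L$, the selected characters are exactly the two $\chi\in L^*$ with $\chi|_{L_q}=\alpha|_{L_q}$, since $L_q$ has codimension one in $L$. Therefore
\[
f(\pi_{L,\chi})=\delta_{\chi|_{L_q},\alpha|_{L_q}}=\langle t_{q,\alpha},\pi_{L,\chi}\rangle ,
\]
where the last equality is the computation in Eq.~(\ref{eq: weight-2 frame function as operator}).

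\textbf{Main obstacle.} The main difficulty is Step 2's reliance on Lm.~\ref{lm: projective linear extension}, that is, the universality of the embedding of $Q^-(2n-1,2)$, to pass from line-additivity on $\cQ$ to a global linear functional. A secondary point needing care is checking that the Part 1 argument uses only contexts in $\ConPP$. Note that $\alpha$ is only determined modulo functionals vanishing on $\cQ$; since $\cQ$ spans $V$, the only such functional is zero, so $\alpha$ is unique.
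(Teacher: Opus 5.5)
Your proposal is correct and follows essentially the same route as the paper: Part 1 recovers a quadratic refinement of Witt index $n-1$ from the selected pairs $\{\chi_L,\chi_L+\rho_L\}$, and Part 2 glues the characters on $\cQ=q^{-1}(0)$ into a global linear functional via Lm.~\ref{lm: projective linear extension}. The paper simply asserts that the restrictions $\alpha_{L_q}$ agree on overlaps, whereas you justify well-definedness of $\widetilde\alpha$ through the context-independent weights $f(\pi^\pm_v)$ and verify additivity on commuting pairs directly, which supplies that omitted step.
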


\begin{proof}
    As with Lm.~\ref{lm: k=2}, the proof is split into two parts. The first part is entirely analogous to part 1 of Lm.~\ref{lm: k=2}, showing that any weight $k=2$ frame function corresponds to a quadratic refinement of Witt index $n-1$. However, unlike in the signed Pauli case, the set of abstract symplectic projectors $\pV$ does admit global assignments of characters.
    
    \textbf{Part 2.} Indeed, we show that the characters $\alpha_L$ glue to a linear functional $\alpha:V\ra\zz_2$. Since $X_L=\{\alpha_L,\alpha_L+q_L\}$, only the restriction of $\alpha_L$ to $\mathrm{ker}(q_L)$ is well-defined (but not the choice of $\alpha_L$ over $\alpha_L+q_L$). Since $q$ is a quadratic refinement with $i_W(q)=n-1$, we have $\dim(L_q)=n-1$, for $L_q:=\mathrm{ker}(q|_L)$. Now, define $\alpha_{L_q}:=\alpha_L|_{L_q}$, and note that
    \begin{align}
        \alpha_{L_q}|_{L_q\cap L'_q}
        =\alpha_{L'_q}|_{L_q\cap L'_q}\; .
    \end{align}
    In other words, $\{\alpha_{L_q}\}_{L\in\Lag(V)}$ defines a (noncontextual) compatible family of linear characters on the $0$-level set $\cQ:=q^{-1}(0)$. It follows from Lm.~\ref{lm: projective linear extension} that $\alpha_{L_q}=\alpha|_{L_q}$ for a unique linear functional $\alpha:V\ra\zz_2$, and thus
    \begin{align}
        X_L
        =\{\alpha|_L,\alpha|_L+q|_L\}\; ,
    \end{align}
    equivalently, $f=f_{q,\alpha}$ with $f_{q,\alpha}$ given by Eq.~(\ref{eq: q-frame function}).
\end{proof}

For $n=2$, Part 1 of Lm.~\ref{lm: k=2} still shows that every frame function of weight $2$ determines a quadratic refinement $q$ of Witt index $1$. In this case $q^{-1}(0)\backslash{0}$ is a $1$-ovoid, hence, carries no nontrivial compatibility relations; consequently the remaining character assignment is an arbitrary map $s:q^{-1}(0)\to\mathbb Z_2$ with $s(0)=0$. Thus, (after setting $\lambda_L=0$) the weight-$2$ symplectic frame functions are precisely the functions $f_{q,s}$ of Prop.~\ref{prop: noncontextual MP-square}. For $n\ge3$, Lm.~\ref{lm: k=2, symplectic} shows that these assignments are instead forced to arise from global linear functionals $\alpha:V\to\mathbb Z_2$.

Note that, up until this point, the distinction between the projective and signed Pauli group is entirely related with the group-theoretic extension problem
\begin{align}
    1 \ra \zz_4 \ra \cP_n \ra V \ra 1\; .
\end{align}
Indeed, the nontriviality of the cohomology class of the $2$-cocycle $\gamma_W:V\times V\ra\zz_4$ (and of $\beta_W$, for $n>1$) is the reason for (standard) contextuality of the Pauli group \cite{Raussendorf2019,RaussendorfEtAl2017,OkayTyhurstRaussendorf2018,OkaySheinbaum2019,RaussendorfEtAl2023}. By contrast, the nonexistence of noncontextual properties for higher-rank projectors in App.~\ref{app: classification} does not merely involve $\beta_W$, but is a consequence of the underlying symplectic structure of the Pauli group. In particular, note that Lm.~\ref{lm: character restriction}, Lm.~\ref{lm: Fourier coefficients} and Lm.~\ref{lm: reduced stab theory} all apply at the symplectic level; extending these arguments will rule out Boolean-valued frame functions that are not constant, weight-$1$ or of the form in Lm.~\ref{lm: k=2, symplectic}.\\

\textbf{Full classification.} We first consider frame functions of odd weight. We will again prove this by induction. The base case being $n=3$, for which the following lemmata assert that frame functions of weight $k=3,4,5$ do not exist.

\begin{lemma}\label{lm: n=3, k=3}
    There exists no frame function $f:\StabSS\ra\{0,1\}$ of weight $k=3,5$ for $n=3$.
\end{lemma}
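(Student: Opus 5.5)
By taking complements $\overline{f}=1-f$, which swaps weight $k$ with $2^n-k$, the case $k=5$ reduces to $k=3$ for $n=3$. It therefore suffices to rule out a frame function $f:\StabSS\ra\{0,1\}$ of weight $3$. The plan is to redo the Fourier/Parseval argument of Lm.~\ref{lm: k=4} in the symplectic setting, setting $\lambda_L\equiv 0$ throughout. As remarked after Lm.~\ref{lm: reduced stab theory} and Lm.~\ref{lm: k=2, symplectic}, Lm.~\ref{lm: character restriction}, Lm.~\ref{lm: Fourier coefficients} and Parseval's identity Eq.~(\ref{eq: Parseval identity}) all hold verbatim there. So $a_v=f(\pi^+_v)-f(\pi^-_v)$ is well defined on $V$, independent of the Lagrangian $L\ni v$, and
\begin{align*}
    \sum_{0\neq v\in L}a_v^2=k(2^n-k)=3\cdot 5=15\qquad\forall L\in\Lag(V)\; .
\end{align*}

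\textbf{Step 1: the values of $a_v$.} By Lm.~\ref{lm: Fourier coefficients}, $a_v$ has the parity of $k=3$ and satisfies $|a_v|\leq 3$. Hence $a_v\in\{\pm1,\pm3\}$ for every $0\neq v$, and in particular $a_v^2\in\{1,9\}$.

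\textbf{Step 2: reduction to a $1$-ovoid.} Each Lagrangian $L$ contains $2^3-1=7$ nonzero vectors. If $m$ of them have $a_v^2=9$, Parseval gives $7+8m=15$, so $m=1$. Thus the set $\mathcal{O}:=\{\langle v\rangle\mid |a_v|=3\}$ meets every generator of $W(5,2)$ in exactly one point, i.e. it is a $1$-ovoid.

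\textbf{Step 3: no $1$-ovoid exists in $W(5,2)$.} This is the main obstacle, but it can be settled by counting, following the footnote in the proof of Lm.~\ref{lm: k=4}.
\begin{itemize}
    \item[(i)] Any two distinct points of $\mathcal{O}$ are non-orthogonal. Otherwise they would span an isotropic line, which extends to a Lagrangian containing two points of $\mathcal{O}$, contradicting $m=1$.
    \item[(ii)] $W(5,2)$ has $(2+1)(2^2+1)(2^3+1)=135$ generators, and each point lies in exactly $15$ of them, namely the generators of the residual $W(3,2)$. Double counting incidences between $\mathcal{O}$ and generators gives $15|\mathcal{O}|=135$, so $|\mathcal{O}|=9=2^n+1$.
    \item[(iii)] By (i), the $9$ points of $\mathcal{O}$ correspond to pairwise anticommuting Pauli labels. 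A set of pairwise anticommuting Paulis on $n$ qubits has size at most $2n+1=7$ (Ref.~\cite{SarkarVanDenBerg2021}, Lm.~8).
\end{itemize}
The contradiction between (ii) and (iii) rules out weight $3$, and hence, by complementation, weight $5$.

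If one prefers to avoid the external bound in (iii), it can be checked directly. Pairwise anticommuting, nonzero $v_1,\dots,v_r$ are linearly independent whenever $r$ is even, since the Gram matrix of $\omega$ on them is $J-I$ over $\zz_2$, which is invertible for even $r$. This gives $r\leq 2n+1$, and the argument above stands.
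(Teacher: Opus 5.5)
Your proof is correct and follows the paper's argument: parity and Parseval force exactly one $|a_v|=3$ in each Lagrangian, so these points form a $1$-ovoid of $W(5,2)$, and complementation handles $k=5$. The only difference is that the paper simply cites the nonexistence of $1$-ovoids as well known (sketched in its footnote to Lm.~\ref{lm: k=4}). You instead prove it explicitly, comparing the incidence count $|\mathcal{O}|=135/15=9$ against the $2n+1=7$ bound on pairwise anticommuting labels, and you also give a self-contained proof of that bound.
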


\begin{proof}
    Let $k=3$, then the Fourier coefficients are either $a_v=\pm 1$ or $a_v=\pm 3$, by Lm.~\ref{lm: Fourier coefficients}. Now, define $O:=\{v\in V^\circ\mid |a_v|=3\}$. Using Parseval's identity, Eq.~(\ref{eq: Parseval identity}),
    \begin{align*}
        \sum_{v\in L^\circ}a^2_v
        =k(2^n-k)
        =15\; ,
    \end{align*}
    it follows that $|O\cap L^\circ|=1$ for every Lagrangian subspace $L\in\Lag(V)$, hence, $O$ is a $1$-ovoid. However, it is well-known that no $1$-ovoids exist for $n\geq 3$. The case $k=5$ follows by complementation, $\overline{f}=1-f$, from $k=3$.
\end{proof}

\begin{lemma}\label{lm: k=4, symplectic}
    There exists no frame function $f:\StabSS\ra\{0,1\}$ of weight $k=4$ for $n=3$.
\end{lemma}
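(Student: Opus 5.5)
The plan is to run the Fourier argument of Lm.~\ref{lm: k=4} again. This time the reduction step gives weaker information, because by Lm.~\ref{lm: compatible characters} and Lm.~\ref{lm: noncontextual symplectic properties} the two-qubit symplectic theory does have frame functions of weight $1,2,3$. Fix $0\neq v\in V$. By Lm.~\ref{lm: reduced stab theory} (which holds verbatim in the symplectic setting), $f$ restricts to frame functions on $\pi^\pm_v\cS^\mathrm{symp}_3\cong\cS^\mathrm{symp}_2$ with weights $k^+_v+k^-_v=4$. Hence $a_v=k^+_v-k^-_v\in\{-4,-2,0,2,4\}$, in agreement with Lm.~\ref{lm: Fourier coefficients}. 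Set $O_4:=\{v\neq0\mid |a_v|=4\}$ and $O_2:=\{v\neq 0\mid |a_v|=2\}$. Parseval, Eq.~(\ref{eq: Parseval identity}), gives $\sum_{0\neq v\in L}a_v^2=4\cdot 4=16$ for every $L\in\Lag(V)$. So every Lagrangian is of exactly one of two types: type A meets $O_4$ in one point and $O_2$ in none; type B meets $O_2$ in four points and $O_4$ in none.

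Next I pin down the local structure using Boolean-valuedness of $f_L$ via the inverse transform, Eq.~(\ref{eq: inverse FT}) with $\lambda_L=0$. In type A, $f_L$ is the indicator of an affine hyperplane of $L^*$. In type B, $f_L$ is the indicator of a $4$-subset of $L^*\cong\zz_2^3$ whose Fourier support consists of $0$ and four further points. A direct check over the small number of $4$-subsets of $\zz_2^3$ up to affine equivalence should show that this support is the complement of a projective line (a $2$-dimensional subspace) of $L$. The signs $a_v$ on it are then constrained by a product rule. Consequently every totally isotropic line $\{v,w,v+w\}$ meets $O_2$ in $0$ or $2$ points: inside a type A Lagrangian it meets $O_2$ in $0$ points, and inside a type B Lagrangian in $0$ or $2$ points. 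Hence $H:=V^\circ\setminus O_2$ meets every line in $1$ or $3$ points, so $H$ is a geometric hyperplane of $W(5,2)$ (or all of it, if $O_2=\emptyset$).

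Now the global count. $W(5,2)$ has $63$ points and $135$ generators, and each point lies on $15$ generators. Double counting incidences gives $15|O_4|=t_A$ and $15|O_2|=4t_B$ with $t_A+t_B=135$, hence $|O_2|=36-4|O_4|$. The case $O_2=\emptyset$ makes $O_4$ a $1$-ovoid, which is impossible for $n\geq3$ (cf.\ Lm.~\ref{lm: n=3, k=3}). Otherwise $O_2$ is the complement of a geometric hyperplane. The hyperplanes of $W(5,2)$ are known via the universal embedding (of dimension $15$): point-perps $v^\perp$, hyperbolic and elliptic quadrics $q^{-1}(0)$ (of size $35$ and $27$), and a few further classes. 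Their complements have sizes $32$, $28$, $36$, etc. Matching these against $|O_2|=36-4|O_4|$ leaves a short list: for example, the complement of the elliptic quadric with $O_4=\emptyset$, the complement of a perp with $|O_4|=1$, and so on. Each remaining case has to be refuted.

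The main obstacle is this last step. In each surviving configuration I would exhibit an explicit Lagrangian $L$ (a ``certificate'') that meets $O_2$ in a number of points other than $0$ or $4$, or that meets both $O_2$ and $O_4$. In the quadric case, for instance, I would look for a Lagrangian meeting $q^{-1}(0)$ in a plane, so that $O_2\cap L$ has size $4$ but is arranged incompatibly with $O_4=\emptyset$ and with the sign product rule. Failing that, I would use the sign constraints across two Lagrangians sharing a line to derive a contradiction with the reduced $n=2$ classification ($k^\pm_v\in\{1,3\}$ forces $f^\pm_v$ to be of the form $f_{\alpha,c}$ or its complement). Only the final case check is computational. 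The key structural steps are the type A/B dichotomy and the hyperplane property of $V^\circ\setminus O_2$.
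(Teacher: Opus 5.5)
Your reduction is sound and runs parallel to the paper's. You have the Parseval dichotomy into type A/B Lagrangians. You show that $V^\circ\setminus O_2$ meets every totally isotropic line in $1$ or $3$ points, which is the paper's additivity argument for the indicator of $B=O_2$. And you have the count $|O_2|=36-4|O_4|$.

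One correction: $15$ is the universal embedding dimension of the \emph{dual} polar space $DW(5,2)$. The universal embedding of $W(5,2)$ itself is $7$-dimensional. By Lm.~\ref{lm: compatible characters} the indicator of $O_2$ is of the form $\alpha+cq_W$, so there are no further hyperplane classes. Hence $O_2$ is the complement of $a^\perp$, of a hyperbolic quadric, or of an elliptic quadric, with $|O_4|=1,2,0$ respectively. The hyperbolic case, and indeed any $|O_4|\geq2$, does fall to a single Lagrangian through a point of $O_4$ that meets $O_2$; this is the paper's anticommutation bound.

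The gap is in the two cases that remain:
\begin{itemize}
\item $O_4=\emptyset$ with $O_2=V\setminus q^{-1}(0)$ for $q$ elliptic;
\item $O_4=\{a\}$ with $O_2=V\setminus a^\perp$.
\end{itemize}
Here the certificates you propose cannot exist. In both configurations every Lagrangian meets $O_2$ in $0$ or $4$ points, in the required shape $L\setminus U$ with $\dim U=2$, and no Lagrangian meets both $O_2$ and $O_4$. In the first case $q|_L$ is a nonzero linear form on every $L$, so in particular no Lagrangian meets the elliptic quadric in a plane. In the second case $L\cap O_2=\emptyset$ iff $a\in L$.

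So every local condition you derived is satisfied. What remains is the sign rule $\sum_{v\in O_2\cap L}r_v=1$, with $\widetilde{a}_v=(-1)^{r_v}$, for each type-B Lagrangian. This is an inhomogeneous linear system over $\zz_2$. It is inconsistent iff some \emph{odd} family of type-B Lagrangians covers every point of $O_2$ an even number of times.

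Your two fallbacks do not supply such a family:
\begin{itemize}
\item Two Lagrangians sharing a line form an even family, so they only yield homogeneous relations.
\item The reduced $n=2$ classification at points $v\in O_2$ gives no contradiction. The constraints from the Lagrangians through such a $v$ are always solvable, e.g.\ with $r\equiv 1+r_v$ on the $O_2$-representatives of the cosets in $v^\perp/\langle v\rangle$.
\end{itemize}

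The missing idea is this global parity certificate, and the paper supplies it explicitly. In each case it takes five Lagrangians, each meeting $O_2$ in four points, such that each of the ten points involved occurs exactly twice. Summing the five sign rules then gives $0\equiv5$, a contradiction.

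For the elliptic case there is an alternative: localise at some $0\neq v\in q^{-1}(0)$, where $k^\pm_v=2$. The nonsingular points of the induced form on $v^\perp/\langle v\rangle$ form a Petersen graph under commutation, and its $5$-cycles give such certificates. In the case $O_2=V\setminus a^\perp$, however, every point-star is consistent, so a non-local certificate is unavoidable.
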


\begin{proof}
    Let $f:\pV\ra\{0,1\}$ be the associated frame function of weight $k=4$. By Lm.~\ref{lm: Fourier coefficients}, we have $a_0=4$ and $a_v\in\{-4,-2,0,2,4\}$. Parseval's identity gives $\sum_{0\neq v\in L}a^2_v=k(2^n-k)=16$, which leaves two possibilities for the distribution of non-zero Fourier coefficients $\{a_v\}_{0\neq v\in L}$ for every $L\in\Lag(V)$: a single non-zero coefficient with $|a_v|=4$, or four non-zero coefficients with $|a_v|=2$. Let $A=\{0\neq v\in V\mid |a_v|=4\}$ and $B=\{0\neq v\in V\mid |a_v|=2\}$. Since there are $\prod^n_{i=1}(2^i+1)=135$ Lagrangian subspaces and every non-zero vector $v\in V$ lies in $\prod_{i=1}^{n-1}(2^i+1)=15$ of them \cite{AaronsonGottesman2004,Taylor1992},
    \begin{align}\label{eq: incidence count}
        15|A|+\frac{15}{4}|B|=135
        \quad\Longleftrightarrow\quad
        4|A|+|B|=36\; .
    \end{align}
    Clearly, for any two elements $a,a'\in A$, $\omega(a,a')=1$. These labels thus correspond to a set of mutually anti-commuting Pauli operators, whose cardinality is bounded by $|A|\leq2n+1=7$ (see e.g. Lm.~8 and Cor.~4 in Ref.~\cite{SarkarVanDenBerg2021}). Next, note that $B\subseteq\bigcap_{a\in A}\{b\in V^\circ\mid\omega(a,b)=1\}$ is an intersection of affine hyperplanes, and thus contains at most $2^{6-r}$ elements where $r=\mathrm{dim\ span}(A)$. Since the elements in $A$ are pairwise nonorthogonal, the restriction of $\omega$ to $\mathrm{span}(A)$ has Gram matrix with $0$'s on the diagonal and $1$'s in every off-diagonal entry. It follows that it has rank $|A|$ if $|A|$ is even and $|A|-1$ if $|A|$ is odd. Now assume that $|A|\geq 2$, then $r\geq 2$, hence, $B$ contains at most $16$ elements. Yet, this is incompatible with Eq.~(\ref{eq: incidence count}) thus ruling out $|A|=2,3,4$. Moreover, for $|A|\geq 5$, $r\geq 4$, hence, $B$ contains at most $4$ elements, yet from Eq.~(\ref{eq: incidence count}) $|B|\geq 8$ for $|A|\leq 7$. This leaves us with just two cases: $|A|\in\{0,1\}$.

    Let $\ta_v=\frac{1}{2}a_v\in\{\pm 1\}$ for every $v\in B$, and let $L\in\Lag(V)$ with $|B\cap L|=4$. Then $f_L(\chi)=\frac{1}{8}\sum_{v\in L}(-1)^{\chi(v)}a_v=\frac{1}{2}+\frac{1}{4}\sum_{v\in B\cap L}(-1)^{\chi(v)}\ta_v$ is Boolean and we have $\sum_{v\in B\cap L}(-1)^{\chi(v)}\ta_v=\pm 2$ for all $\chi\in L^*$. This condition is equivalent to
    \begin{align}\label{eq: bent relation}
        \prod_{v\in B\cap L}(-1)^{\chi(v)}\ta_v
        =(-1)^{\chi(\sum_{v\in B\cap L}v)}\prod_{v\in B\cap L}\ta_v
        =-1 \quad\quad\forall\chi\in L^*\; ,
    \end{align}
    and thus both $\sum_{v\in B\cap L}v=0$ and $\prod_{v\in B\cap L}\ta_v=-1$ (equivalently, $\sum_{v\in B\cap L}r_v=1$ for $\ta_v=(-1)^{r_v}$). We first extract structural information from the first condition, and then derive a contradiction from the second.

    If $|A|=0$, then $|B\cap L|=4$ for every Lagrangian $L\in\Lag(V)$, and $|B|=36$ by Eq.~(\ref{eq: incidence count}). Write $O:=V^\circ\backslash B$, so that $|O|=63-36=27$ and $|O\cap L|=3$ for every $L\in\Lag(V)$. We show that $O$ is an elliptic quadric.
    
    First, $O$ meets every Lagrangian subspace in a totally isotropic line. Indeed, the seven non-zero elements of $L$ sum to zero, hence so do the three elements of $O\cap L$, by the first condition in Eq.~(\ref{eq: bent relation}); yet, three distinct non-zero elements of $L$ summing to zero are necessarily of the form $\{v,w,v+w\}$, that is, they constitute a totally isotropic line.
    
    Next, define $q:V\to\zz_2$ by $q(0):=0$, $q|_O:=0$ and $q|_B:=1$, and let $v\neq w\in V^\circ$ with $\omega(v,w)=0$. Then $\{v,w,v+w\}$ is a totally isotropic line and thus contained in some $L\in\Lag(V)$. Since two distinct lines of the projective plane $L$ meet in exactly one point, $|\{v,w,v+w\}\cap(O\cap L)|\in\{1,3\}$, hence, $|\{v,w,v+w\}\cap B|\in\{0,2\}$ is even and therefore
    \begin{align*}
        q(v)+q(w)+q(v+w)
        =0=\omega(v,w)\; .
    \end{align*}
    In other words, $q$ is additive on commuting pairs, and it follows by Lm.~\ref{lm: compatible characters} (equivalently, by part 1 of the proof of Lm.~\ref{lm: k=2}) that $q=\alpha+cq_W$ for a linear functional $\alpha:V\to\zz_2$ and some $c\in\zz_2$. If $c=0$, then $q$ is linear and $|O|=|q^{-1}(0)|-1\in\{31,63\}$. If $c=1$, then $q$ is a quadratic refinement of $\omega$ and $|O|=|q^{-1}(0)|-1=2^{2n-1}\pm 2^{n-1}-1\in\{35,27\}$, according to whether $q$ has Witt index $i_W(q)=n$ or $i_W(q)=n-1$. Since $|O|=27$, this implies $c=1$ and $i_W(q)=n-1$. Consequently, $O=q^{-1}(0)\backslash\{0\}$ is an elliptic quadric $Q^-(5,2)$, and up to a symplectic transformation we may assume that $q$ is of the canonical form $q(v)=\sum_{i=1}^{3}a_ib_i+a_1+b_1$ in Eq.~(\ref{eq: canonical form quadratic refinement}).

    If $|A|=1$ with $a\in A$, we have $B=\{v\in V\mid\omega(v,a)=1\}$, hence, $B\cap L=\{v\in L\mid \omega(v,a)=1\}$ is an affine hyperplane in every Lagrangian subspace $L\in\Lag(V)$ with $a\notin L$, and thus $|B\cap L|=4$ (and $|B\cap L|=0$ otherwise).

    In both cases, $|A|=0$ and $|A|=1$, the first condition of Eq.~(\ref{eq: bent relation}) therefore holds, since there $B\cap L$ is an affine subspace and thus $\sum_{v\in B\cap L}v=0$. In order to arrive at a contradiction, we instead find a set of Lagrangian subspaces of odd cardinality such that every element of $B$ appears in an even number of them (we will highlight this below by colouring equal elements across the respective Lagrangians). In this case,
    \begin{align*}
        \sum_{i=1}^m\sum_{v\in B\cap L_i}r_v=0 \mod 2\; ,
    \end{align*}
    contradicting Eq.~(\ref{eq: bent relation}). To finish the proof, we provide explicit sets of such Lagrangians in the remaining two cases.\\
    
    For the elliptic quadratic $q(v)=\sum_{i=1}^3a_ib_i+a_1+b_1$ (cf. Eq.~(\ref{eq: canonical form quadratic refinement})) with $|A|=0$, let
    \begin{align*}
        L_1&=\langle e_1+f_1,\ e_3,\ f_2\rangle \\
        L_2&=\langle e_1+f_1,\ e_2+f_3,\ e_3+f_2+f_3\rangle \\
        L_3&=\langle e_1+f_1+f_2,\ e_2+f_1+f_2+f_3,\ e_3+f_2+f_3\rangle \\
        L_4&=\langle e_1+e_3+f_2+f_3,\ e_2+e_3+f_2,\ f_1+f_2+f_3\rangle \\
        L_5&=\langle e_1+e_3+f_3,\ e_2+e_3+f_2+f_3,\ f_1+f_2+f_3\rangle\; ,
    \end{align*}
    and note that these define Lagrangian subspaces, whose elements read
    \begin{align*}
        L_1&=\{0,e_1+f_1,e_3,e_1+e_3+f_1,f_2,e_1+f_1+f_2,e_3+f_2,e_1+e_3+f_1+f_2\} \\
        L_2&=\{0,e_1+f_1,e_2+f_3,e_1+e_2+f_1+f_3,e_3+f_2+f_3,e_1+e_3+f_1+f_2+f_3,e_2+e_3+f_2,e_1+e_2+e_3+f_1+f_2\} \\
        L_3&=\{0,e_1+f_1+f_2,e_2+f_1+f_2+f_3,e_1+e_2+f_3,e_3+f_2+f_3,e_1+e_3+f_1+f_3,e_2+e_3+f_1,e_1+e_2+e_3+f_2\} \\
        L_4&=\{0,e_1+e_3+f_2+f_3,e_2+e_3+f_2,e_1+e_2+f_3,f_1+f_2+f_3,e_1+e_3+f_1,e_2+e_3+f_1+f_3,e_1+e_2+f_1+f_2\} \\
        L_5&=\{0,e_1+e_3+f_3,e_2+e_3+f_2+f_3,e_1+e_2+f_2,f_1+f_2+f_3,e_1+e_3+f_1+f_2,e_2+e_3+f_1,e_1+e_2+f_1+f_3\}\; ,
    \end{align*}
    and whose intersection with $B$ is given by
    \begin{align*}
        B\cap L_1
        &=\{
        \textcolor{cH}{e_1+f_1},\
        \textcolor{cE}{e_1+e_3+f_1},\
        \textcolor{cJ}{e_1+f_1+f_2},\
        \textcolor{cB}{e_1+e_3+f_1+f_2}
        \}\\
        B\cap L_2
        &=\{
        \textcolor{cH}{e_1+f_1},\
        \textcolor{cC}{e_1+e_2+f_1+f_3},\
        \textcolor{cI}{e_3+f_2+f_3},\
        \textcolor{cF}{e_2+e_3+f_2}
        \}\\
        B\cap L_3
        &=\{
        \textcolor{cJ}{e_1+f_1+f_2},\
        \textcolor{cG}{e_1+e_2+f_3},\
        \textcolor{cI}{e_3+f_2+f_3},\
        \textcolor{cA}{e_2+e_3+f_1}
        \}\\
        B\cap L_4
        &=\{
        \textcolor{cF}{e_2+e_3+f_2},\
        \textcolor{cG}{e_1+e_2+f_3},\
        \textcolor{cD}{f_1+f_2+f_3},\
        \textcolor{cE}{e_1+e_3+f_1}
        \}\\
        B\cap L_5
        &=\{
        \textcolor{cA}{e_2+e_3+f_1},\
        \textcolor{cB}{e_1+e_3+f_1+f_2},\
        \textcolor{cC}{e_1+e_2+f_1+f_3},\
        \textcolor{cD}{f_1+f_2+f_3}
        \}\; .
    \end{align*}
    
    Finally, for the case $|A|=1$, where without loss of generality $e_1\in A$, consider the following five Lagrangians
    \begin{align*}
        L_1&=\langle e_1+e_2+f_2+f_3,\ e_3+f_2+f_3,\ f_1+f_2\rangle\\
        L_2&=\langle e_1+f_2+f_3,\ e_2+f_1+f_2,\ e_3+f_1\rangle \\
        L_3&=\langle e_1+f_3,\ e_2+f_3,\ e_3+f_1+f_2\rangle \\
        L_4&=\langle e_2+f_2,\ e_3+f_3,\ f_1\rangle \\
        L_5&=\langle e_3,\ f_1,\ f_2\rangle\; ,
    \end{align*}
    whose elements are given explicitly as follows
    \begin{align*}
        L_1&=\{0,e_1+e_2+f_2+f_3,e_3+f_2+f_3,e_1+e_2+e_3,f_1+f_2,e_1+e_2+f_1+f_3,e_3+f_1+f_3,e_1+e_2+e_3+f_1+f_2\} \\
        L_2&=\{0,e_1+f_2+f_3,e_2+f_1+f_2,e_1+e_2+f_1+f_3,e_3+f_1,e_1+e_3+f_1+f_2+f_3,e_2+e_3+f_2,e_1+e_2+e_3+f_3\} \\
        L_3&=\{0,e_1+f_3,e_2+f_3,e_1+e_2,e_3+f_1+f_2,e_1+e_3+f_1+f_2+f_3,e_2+e_3+f_1+f_2+f_3,e_1+e_2+e_3+f_1+f_2\} \\
        L_4&=\{0,e_2+f_2,e_3+f_3,e_2+e_3+f_2+f_3,f_1,e_2+f_1+f_2,e_3+f_1+f_3,e_2+e_3+f_1+f_2+f_3\} \\
        L_5&=\{0,e_3,f_1,e_3+f_1,f_2,e_3+f_2,f_1+f_2,e_3+f_1+f_2\}\; ,
    \end{align*}
    and whose intersection with $B$ is given by
    \begin{align*}
        B\cap L_1
        &=\{
        \textcolor{cA}{f_1+f_2},\
        \textcolor{cC}{e_1+e_2+f_1+f_3},\
        \textcolor{cD}{e_3+f_1+f_3},\
        \textcolor{cB}{e_1+e_2+e_3+f_1+f_2}
        \} \\
        B\cap L_2
        &=\{
        \textcolor{cJ}{e_2+f_1+f_2},\
        \textcolor{cC}{e_1+e_2+f_1+f_3},\
        \textcolor{cF}{e_3+f_1},\
        \textcolor{cH}{e_1+e_3+f_1+f_2+f_3}
        \} \\
        B\cap L_3
        &=\{
        \textcolor{cG}{e_3+f_1+f_2},\
        \textcolor{cH}{e_1+e_3+f_1+f_2+f_3},\
        \textcolor{cI}{e_2+e_3+f_1+f_2+f_3},\
        \textcolor{cB}{e_1+e_2+e_3+f_1+f_2}
        \} \\
        B\cap L_4
        &=\{
        \textcolor{cE}{f_1},\
        \textcolor{cJ}{e_2+f_1+f_2},\
        \textcolor{cD}{e_3+f_1+f_3},\
        \textcolor{cI}{e_2+e_3+f_1+f_2+f_3}
        \} \\
        B\cap L_5
        &=\{
        \textcolor{cE}{f_1},\
        \textcolor{cF}{e_3+f_1},\
        \textcolor{cA}{f_1+f_2},\
        \textcolor{cG}{e_3+f_1+f_2}
        \}\; .
    \end{align*}
    In all cases the elements in these Lagrangian subspaces that are also elements of $B$ appear exactly twice (as highlighted by the colouring of these elements in the respective cases). Consequently,
    \begin{align*}
        \sum_{i=1}^5\sum_{v\in B\cap L_i}r_v=0 \mod 2\; ,
    \end{align*}
    contradicting Eq.~(\ref{eq: bent relation}). We thus conclude that for $n=3$ no Boolean-valued frame function of weight $k=4$ exists.
\end{proof}

Next, we apply an induction argument to lift the results for $n=3$ (in Lm.~\ref{lm: k=2, symplectic}, Lm.~\ref{lm: n=3, k=3} and Lm.~\ref{lm: k=4, symplectic}) to rule out frame functions not corresponding to linear functionals or quadratic refinements as in Lm.~\ref{lm: k=2, symplectic}.

\begin{lemma}\label{lm: weights for symplectic Boolean-valued frame functions}
    Let $f:\StabSS\ra\{0,1\}$ be a frame function for $n\geq 3$. Then $k\in\{0,1,2,2^n-2,2^n-1,2^n\}$.
\end{lemma}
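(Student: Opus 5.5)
We argue by induction on $n$, reducing to smaller systems with Lm.~\ref{lm: reduced stab theory}; as remarked after that lemma, it applies verbatim in the symplectic setting with $\lambda_U\equiv 0$.

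\emph{Base case $n=3$.} We must show $k\notin\{3,4,5\}$. Lm.~\ref{lm: n=3, k=3} rules out $k=3,5$ and Lm.~\ref{lm: k=4, symplectic} rules out $k=4$. Hence $k\in\{0,1,2,6,7,8\}$, which is the claimed set for $2^3=8$.

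\emph{Inductive step.} Let $n>3$ and assume the statement for $n-1$. Let $f$ be a frame function of weight $k$. By complementation $\overline f=1-f$ we may assume $k\leq 2^{n-1}$. For every $0\neq v\in V$, Lm.~\ref{lm: reduced stab theory} yields frame functions $f^\pm_v$ on $\cS^\mathrm{symp}_{n-1}$ of weights $k^\pm_v=f(\pi^\pm_v)$ with $k^+_v+k^-_v=k$. By hypothesis,
\begin{align*}
k^\pm_v\in\{0,1,2,2^{n-1}-2,2^{n-1}-1,2^{n-1}\}.
\end{align*}
Since $n-1\geq 3$, the small values $\{0,1,2\}$ and the large values $\{2^{n-1}-2,2^{n-1}-1,2^{n-1}\}$ are separated. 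So the admissible sums $k$ are either at most $4$, or at least $2^{n-1}-2$ (large plus small, or large plus large).

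\emph{Ruling out the intermediate weights.} Two families of weights remain to exclude. The first is $k\in\{3,4\}$. The second is weights near $2^{n-1}$, namely $2^{n-1}-2\leq k\leq 2^{n-1}$ (after complementation, up to $2^{n-1}+2$).

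For $k\in\{3,4\}$ the splitting is forced to $(1,2)$ or $(2,2)$ and their reversals for every $v$, since a large value would exceed $k$. So each $|a_v|=|k^+_v-k^-_v|$ is $1$ for $k=3$ and $0$ for $k=4$, at every $v\neq0$. For $k=4$ all nonzero Fourier coefficients vanish. Then Eq.~(\ref{eq: inverse FT}) gives $f_L(\chi)=4/2^n\notin\{0,1\}$, a contradiction, as at the end of Lm.~\ref{lm: k=4}. For $k=3$, Parseval, Eq.~(\ref{eq: Parseval identity}), requires $\sum_{0\neq v\in L}a_v^2=3(2^n-3)$. This is incompatible with $a_v^2=1$ on the $2^n-1$ nonzero vectors of $L$, since $3(2^n-3)\neq 2^n-1$ for $n>3$.

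For weights near $2^{n-1}$, the value of $k$ forces each pair $\{k^+_v,k^-_v\}$ into a small list. When $k=2^{n-1}$, the options are $\{0,2^{n-1}\}$, $\{1,2^{n-1}-1\}$ or $\{2,2^{n-1}-2\}$, so $|a_v|\in\{2^{n-1},2^{n-1}-2,2^{n-1}-4\}$. Parseval gives $\sum_{0\neq v\in L}a_v^2=2^{2n-2}$, so each Lagrangian contains only $O(1)$ nonzero coefficients, each of size roughly $2^{n-1}$. For the other weights near $2^{n-1}$, the same reasoning gives $|a_v|$ of order $2^{n-1}$ for every $v\neq0$. Parseval would then need of order $2^{n}\cdot 2^{2n-2}$ on the left, far more than $k(2^n-k)\approx 2^{2n-2}$, a contradiction.

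\emph{Case $k=2^{n-1}$ (expected main obstacle).} Here some $a_v$ may vanish, since a splitting $(2^{n-1}/2,2^{n-1}/2)$ is not excluded a priori. Only $v$ with both halves in the admissible set contribute. With $n\geq4$ we have $2^{n-2}\notin\{0,1,2,2^{n-1}-2,2^{n-1}-1\}$ except in degenerate small cases, so every $v\neq0$ has $a_v\neq0$ with $|a_v|\geq 2^{n-1}-4$. Parseval then fails, because $(2^n-1)(2^{n-1}-4)^2$ exceeds $2^{2n-2}$ for $n\geq4$.

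\emph{Fallback.} If this counting leaves a residual case, we would instead use an ovoid argument, as in Lm.~\ref{lm: n=3, k=3}: the nonzero-coefficient set would meet every Lagrangian in exactly one point, making it a $1$-ovoid of $W(2n-1,2)$, which does not exist for $n\geq3$. Carefully enumerating the admissible pairs $(k^+_v,k^-_v)$ and checking the Parseval inequalities uniformly in $n\geq4$ is where the bookkeeping effort lies.
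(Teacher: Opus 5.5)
Your proposal is correct and follows essentially the same route as the paper. The base case $n=3$ comes from Lm.~\ref{lm: n=3, k=3} and Lm.~\ref{lm: k=4, symplectic}, and restriction to $\pi^\pm_v$ via Lm.~\ref{lm: reduced stab theory} leaves $k\in\{3,4,2^{n-1}-2,2^{n-1}-1,2^{n-1}\}$, each excluded by $a_v=k^+_v-k^-_v$ together with Parseval (or Fourier inversion for $k=4$). The only thing to tighten is that your ``of order'' estimates should be the explicit bounds $|a_v|=2^{n-1}-2$ and $|a_v|\geq 2^{n-1}-3$, respectively; these already violate Parseval at $n=4$ ($15\cdot 36>60$ and $15\cdot 25>63$), so no residual case arises and the ovoid fallback is unnecessary.
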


\begin{proof}
    The assertion holds for $n=3$ by Lm.~\ref{lm: k=2, symplectic}, Lm.~\ref{lm: n=3, k=3} and Lm.~\ref{lm: k=4, symplectic}. Now, let $f:\StabSS\ra\{0,1\}$ be frame function for $n\geq 4$. Let $0\neq v\in V$ and consider the eigenspaces corresponding to the vectors $\pi^\pm_v:=\frac{1}{2}(e_0\pm e_v)$, which act as projectors onto elements $\pi_{L,\chi}\in\StabSS$ with $\langle e_v,\pi_{L,\chi}\rangle=\pm 1$, equivalently, onto all elements $\pi_{L,\chi}$ with $v\in L$ and $(-1)^{\chi(v)}=\pm 1$. Noting that Lm.~\ref{lm: reduced stab theory} applies verbatim at the symplectic level (with $\StabS$ replaced by $\StabSS$ and $\Pi^\pm_v\StabS\cong\cS^\mathrm{stab}_{n-1}$ by $\pi^\pm_v\StabSS\cong\cS^\mathrm{symp}_{n-1}$), $f$ restricts to Boolean-valued frame functions on $\pi^\pm_v\StabSS$ of weights $k^\pm_v=f(\pi^\pm_v)$. Since $k=k^+_v+k^-_v$, the inductive hypothesis implies that we need to consider weights $k\in\{0,1,2,3,4,2^{n-1}-2,2^{n-1}-1,2^{n-1}\}$ (since, up to complementation, it is sufficient to consider weights $k\leq 2^{n-1}$). We rule out the cases $k\in\{3,4,2^{n-1}-2,2^{n-1}-1,2^{n-1}\}$ one by one, using again the relations $a_v=k^+_v-k^-_v$ and $\sum_{0\neq v\in L}a^2_v=k(2^n-k)$ for every Lagrangian subspace $L\in\Lag(V)$.

    $\mathbf{k=3.}$ In this case, the only possibilities are $(k^+_v,k^-_v)=(1,2)$, or $(k^+_v,k^-_v)=(2,1)$ and thus $a_v=\pm 1$ for all $0\neq v\in V$, which yields a contradiction with Parseval's identity $\sum_{0\neq v\in L}a^2_v=2^n-1<3(2^n-3)$ for $n\geq 3$.

    $\mathbf{k=4.}$ In this case, the only possibility is $(k_v^+,k_v^-)=(2,2)$, and thus $a_v=k_v^+-k_v^-=0$ for every $0\neq v\in V$. However, in this case for every Lagrangian $L\in\Lag(V)$ and every character $\chi\in L^*$ we obtain
    \begin{align*}
        f(\pi_{L,\chi})
        =\frac{1}{2^n}\sum_{v\in L}(-1)^{\chi(v)}a_v
        =\frac{4}{2^n}\; .
    \end{align*}
    which for $n\geq 4$ is clearly not Boolean-valued. Consequently, no weight-$4$ frame function exists for $n\geq 4$.

    $\mathbf{k=2^{n-1}-2.}$ In this case, Parseval's identity, Eq.~(\ref{eq: Parseval identity}), yields
    \begin{align*}
        \sum_{0\neq v\in L}a^2_v
        =k(2^n-k)
        =(2^{n-1}-2)(2^n-(2^{n-1}-2))
        =4^{n-1}-4\; .
    \end{align*}
    On the other hand, using the inductive hypothesis and $k^+_v+k^-_v=k$, the only possibilities are $(k^+_v,k^-_v)=(0,2^{n-1}-2)$ or $(k^+_v,k^-_v)=(2^{n-1}-2,0)$ and thus $a_v=k^+_v-k^-_v=\pm(2^{n-1}-2)$. But since every Lagrangian subspace has $2^n-1$ non-zero elements, and each contributes $a^2_v=(2^{n-1}-2)^2$, one computes
    \begin{align*}
        \sum_{0\neq v\in L}a^2_v
        =(2^n-1)(2^{n-1}-2)^2\; .
    \end{align*}
    Clearly, the two expressions differ for $n\geq 4$, hence, this case is impossible.

    $\mathbf{k=2^{n-1}-1.}$ In this case, the possible pairs are $(k^+_v,k^-_v)\in\{(0,2^{n-1}-1),(1,2^{n-1}-2),(2^{n-1}-2,1),(2^{n-1}-1,0)\}$, hence, $a_v\in\{\pm(2^{n-1}-1),\pm(2^{n-1}-3)\}$ and thus $a^2_v\geq (2^{n-1}-3)^2$. Yet, this contradicts Parseval's identity for $n\geq 4$,
    \begin{align*}
        (2^n-1)(2^{n-1}-3)^2
        \leq\sum_{0\neq v\in L}a^2_v
        =(2^{n-1}-1)(2^{n-1}+1)
        =4^{n-1}-1\; .
    \end{align*}

    $\mathbf{k=2^{n-1}.}$ In this case, Parseval's identity, Eq.~(\ref{eq: Parseval identity}), yields
    \begin{align*}
        \sum_{0\neq v\in L}a^2_v
        =k(2^n-k)
        =4^{n-1}\; .
    \end{align*}
    Yet, the only possibilities are $(k^+_v,k^-_v)\in\{(0,2^{n-1}),(1,2^{n-1}-1),(2,2^{n-1}-2),(2^{n-1}-2,2),(2^{n-1}-1,1),(2^{n-1},0)\}$, and thus $a_v=k^+_v-k^-_v\in\{\pm 2^{n-1},\pm (2^{n-1}-2),\pm(2^{n-1}-4)\}$, hence, $a^2_v\geq(2^{n-1}-4)^2$. Consequently, summing non-zero elements in a Lagrangian subspace, one arrives at
    \begin{align*}
        \sum_{0\neq v\in L}a^2_v
        \geq(2^n-1)(2^{n-1}-4)^2\; .
    \end{align*}
    This inequality is strict for $n\geq 4$, hence, this case is also excluded.
\end{proof}

\begin{proof}[Proof of Thm.~\ref{thm: symplectic Boolean frame functions}]
    By Lm.~\ref{lm: weights for symplectic Boolean-valued frame functions}, the only weights for Boolean-valued frame functions are $0,1,2,2^n-2,2^n-1,2^n$. The edge cases correspond with constant frame functions, $k=1$ (and by complementation $k=2^n-1$) with families of compatible characters $\{\chi_L\}_{L\in\Lag(V)}$ corresponding either to a linear functional $\alpha:V\ra\zz_2$ or a quadratic refinement $q:V\ra\zz_2$ by Lm.~\ref{lm: compatible characters}, and $k=2$ (and by complementation $k=2^n-2$) with linear characters supported on the level sets of quadratic refinements of Witt index $n-1$ by Lm.~\ref{lm: noncontextual property in symplectic stabiliser poset} and Lm.~\ref{lm: k=2, symplectic}.

    Finally, note that the classification above uses finite additivity only with respect to the contexts $c_U$ with $U\in\Iso(V)$, that is, with respect to $\ConPP\subset\ConSS$: this is immediate for Parseval's identity in Eq.~(\ref{eq: Parseval identity}), Lm.~\ref{lm: Fourier coefficients} and Lm.~\ref{lm: reduced stab theory} (see remark following its proof), as well as for Lm.~\ref{lm: k=2, symplectic}--Lm.~\ref{lm: weights for symplectic Boolean-valued frame functions}. Conversely, the frame functions $f_{\alpha,c}$ and $f_{q,\alpha}$ are additive on \emph{every} maximal context of $\ConSS$, by the computations following Lm.~\ref{lm: compatible characters} and by Lm.~\ref{lm: noncontextual property in symplectic stabiliser poset}, respectively. Consequently, (nonconstant) Boolean-valued frame functions over $\ConPP$ and over $\ConSS$ coincide.
\end{proof}

In summary, the context posets $\ConP$ and $\ConPP$, which as posets are both isomorphic (to $\Iso(V)$), are distinguished by the concrete identifications of their Boolean algebras along inclusions (that is, as diagrams). In general, these correspond with different choices of families $\{\lambda_U\}_{U\in\Iso(V)}$ parametrising a given $2$-cocycle $\beta_W$; for the cases $\ConP$ and $\ConPP$, their respective cohomology classes are distinguished by $[\beta_W]\neq 0$ and $[\beta_0]=0$, respectively. This distinction has been identified as criterion for contextuality in Ref.~\cite{Raussendorf2019,RaussendorfEtAl2017,OkayTyhurstRaussendorf2018,OkaySheinbaum2019,RaussendorfEtAl2023} - in the sense of the existence of valuations (see Sec.~\ref{sec: NC properties} and App.~\ref{app: frame functions}). Yet, Thm.~\ref{thm: symplectic Boolean frame functions} shows that this is not the only source of contextual behaviour of the $n$-qubit Pauli group, in the sense of Kochen and Specker \cite{Specker1960,KochenSpecker1967}. In fact, its reformulation in terms of context connections in Ref.~\cite{Frembs2024,Frembs2025} highlights that the nonexistence of valuations is a test of contextuality, but its existence not one of noncontextuality. Instead, noncontextuality also implies the existence of other noncontextual properties as defined in Def.~\ref{def: noncontextual property}. By analysing those, Thm.~\ref{thm: symplectic Boolean frame functions} proves that the cohomology-reduced symplectic theory, too, is contextual. As a consequence, the cohomological characterisation in Ref.~\cite{Raussendorf2019,RaussendorfEtAl2017,OkayTyhurstRaussendorf2018,OkaySheinbaum2019,RaussendorfEtAl2023}, too, is generally a test of contextuality only, but not of noncontextuality.

\section{On Gleason's theorem for stabiliser theory}\label{app: Gleason}

We relate our characterisation of stabiliser frame functions to Gleason's theorem \cite{Gleason1957} for the restricted context poset $\ConP,\ConS\subset\CH$ and the abstract poset $\ConSS$. Recall that Gleason's theorem asserts that every frame function $f:\PH\ra\R_+$ for $\dim(\cH)\geq 3$ is represented by a positive Hermitian operator $a\in\LHsa$, via the Born rule
\begin{align*}
    f(p)=\tr[pa]\; .
\end{align*}
For the restricted stabiliser poset, $\ConS$, one instead has the following result.

\begin{lemma}\label{lm: stabiliser Gleason}
    Let $f:\StabS\ra\R_+$ be a frame function. Then there exists a Hermitian operator $T\in\cL_\mathrm{sa}(\C^{2^n})$ such that
    \begin{align*}
        f(\psi)
        =\tr[T\Pi_\psi]
        \quad\quad\forall\psi\in\StabS\; .
    \end{align*}
\end{lemma}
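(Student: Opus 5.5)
The plan is to construct $T$ directly from the Fourier coefficients $a_v$ introduced in Eq.~(\ref{eq: L-restricted Fourier coefficients}), and then verify the Born rule on each Lagrangian context. The derivation of well-definedness of $a_v$ never used that $f$ is Boolean-valued. It only used finite additivity on the coarse-grained projectors $\Pi^\pm_v$ (via Lm.~\ref{lm: character restriction}). So for a general frame function $f:\StabS\ra\R_+$ we still have a function $a:V\ra\R$, given by
\begin{align*}
    a_v=f(\Pi^+_v)-f(\Pi^-_v)\quad (v\neq0)\; ,\qquad a_0=f(\one)=k\; ,
\end{align*}
and it is independent of the Lagrangian $L\ni v$ used to compute it. The candidate operator is
\begin{align*}
    T:=\frac{1}{2^n}\sum_{v\in V}a_vW_v\; .
\end{align*}
It is Hermitian because each $W_v$ in Eq.~(\ref{eq: Weyl operators}) is Hermitian and the $a_v$ are real.

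To verify the Born rule, fix $\psi\in\StabS$, so $\Pi_\psi=\Pi_{L,\chi}$ for some $L\in\Lag(V)$ and $\chi\in L^*$. Using Eq.~(\ref{eq: symplectic codespace projectors}) and the trace orthogonality $\tr[W_vW_w]=2^n\delta_{vw}$ of Weyl operators, one computes
\begin{align*}
    \tr[T\Pi_{L,\chi}]=\frac{1}{2^n}\sum_{v\in L}(-1)^{\chi(v)+\lambda_L(v)}a_v\; .
\end{align*}
By the inverse Fourier transform in Eq.~(\ref{eq: inverse FT}), the right-hand side equals $f_L(\chi)=f(\Pi_{L,\chi})$. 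Equation~(\ref{eq: inverse FT}) is just Fourier inversion on $L^*$ combined with $a_v=(-1)^{\lambda_L(v)}b_{v,L}$, and neither step uses the Boolean-valued assumption.

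The main obstacle is making sure that $a_v$ is genuinely context-independent, since everything else is finite Fourier analysis. Concretely, one must check that $\sum_{\chi:\,\chi(v)+\lambda_L(v)=0}f_L(\chi)$ equals $f(\Pi^+_v)$ for every Lagrangian $L\ni v$. By Lm.~\ref{lm: character restriction}, applied to $\tU=\langle v\rangle\subset L$ with compatible choices $\lambda_{\langle v\rangle}=\lambda_L|_{\langle v\rangle}$, we have $\Pi^+_v=\sum\Pi_{L,\chi}$ over exactly those $\chi$. Additivity of $f$ then gives the claim. Here $f(\Pi^\pm_v)$ is understood as the additive extension of $f$ to the event algebra, which is well-defined by the frame function property on the contexts $C_U$.

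A secondary point is the sign convention: the rephasing freedom in $\lambda_L$ noted in the footnote to Eq.~(\ref{eq: Weyl operators}) must be applied consistently. Any valid choice leaves $\Pi_{L,\chi}$ unchanged, so the computation above does not depend on it. Positivity of $T$ is not claimed, and it generally fails; that is where the statement departs from Gleason's theorem.
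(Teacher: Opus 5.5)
Your proposal is correct and matches the paper's proof: the paper also sets $T=\frac{1}{2^n}\sum_{v\in V}a_vW_v$ with the coefficients $a_v$ of Eq.~(\ref{eq: L-restricted Fourier coefficients}), and verifies $\tr[T\Pi_{L,\chi}]=f(\Pi_{L,\chi})$ by trace orthogonality of the Weyl operators and Fourier inversion on $L^*$. Your explicit check that $a_v$ is context-independent without the Boolean assumption spells out what the paper takes over from App.~\ref{app: classification}, and the paper additionally notes that $T$ is unique, which the stated lemma does not require.
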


\begin{proof}
    Let $V=\zz^{2n}_2$, and for every $L\in\Lag(V)$, set $f_L:=f|_L$, $b_{v,L}:=\hf_L(v)$ and $a_v:=(-1)^{\lambda_L(v)}b_{v,L}$ as in Eq.~(\ref{eq: L-restricted Fourier coefficients}). Now, define $T=\frac{1}{2^n}\sum_{v\in V}a_vW_v$, and note that Eq.~(\ref{eq: symplectic codespace projectors}), character orthogonality and Fourier inversion imply
    \begin{equation*}
        \tr[T\Pi_{L,\chi}]
        =\frac{1}{2^{2n}}\sum_{v\in V}\sum_{w\in L}(-1)^{\chi(w)+\lambda_L(w)}a_v\tr[W_vW_w]
        =\frac{1}{2^n}\sum_{v\in L}(-1)^{\chi(v)+\lambda_L(v)}a_v 
        =\frac{1}{2^n}\sum_{v\in L}(-1)^{\chi(v)}b_{v,L} 
        =f(\Pi_{L,\chi})\; .
    \end{equation*}
    Moreover, $T$ is unique since $\aP_n$ is a basis of $\cL_\mathrm{sa}(\C^{2^n})$.
\end{proof}

Note that unlike Gleason's theorem \cite{Gleason1957}, the operator $T$ in Lm.~\ref{lm: stabiliser Gleason} is generally not a density matrix, in particular, it is generally not positive. For instance, the operator $T=\frac{1}{2}(1+\sum_{0\neq v\in q^{-1}(0)}(-1)^{s(v)}W_v)$ corresponding to the frame function of Prop.~\ref{prop: noncontextual MP-square} is not positive, but has eigenvalues $\frac{1}{2}(1\pm\sqrt{5})$. However, it is positive on stabiliser states, hence, an operator $T$, which corresponds to a frame function of weight $k\neq 0$, defines an element $\frac{T}{k}$ of the $\Lambda$-polytope,
\begin{align*}
    \Lambda_n
    :=\{X\in\cL_\mathrm{sa}(\C^{2^n})\mid\tr[X]=1,\ \tr[X\Pi_\psi]\geq 0\ \forall\psi\in\StabS\}\; ,
\end{align*}
defined in Ref.~\cite{ZurelOkayRaussendorf2020} as the state space of a nonnegative hidden-variable representation which gives rise to a classical simulation algorithm for $n$-qubit quantum systems. In particular, the above operator $T$ (corresponding to the Boolean-valued frame function of Prop.~\ref{prop: noncontextual MP-square}) is Hermitian and has $\tr[T]=2$, hence, $\frac{T}{2}\in\Lambda_2$.

From the perspective of the $\Lambda$-polytope, Thm.~\ref{thm: stabiliser frame functions} thus rules out a certain class of operators.

\begin{corollary}\label{cor: Boolean stabiliser Gleason}
    Let $T\in\cL_\mathrm{sa}(\C^{2^n})$ for $n\geq 3$ be such that $\tr[T\Pi_\psi]\in\{0,1\}$ for all $\psi\in\StabS$, then $T\in\{0,\one\}$.
\end{corollary}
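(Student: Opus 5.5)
The plan is to turn $T$ into a Boolean-valued frame function on $\StabS$ and then apply Thm.~\ref{thm: stabiliser frame functions}. Define $f(\psi):=\tr[T\Pi_\psi]$ for $\psi\in\StabS$; by hypothesis $f(\psi)\in\{0,1\}$. For any maximal context of $\ConS$, i.e.\ any orthonormal stabiliser basis $\{\psi_i\}_{i=1}^{2^n}$, we have $\sum_i\Pi_{\psi_i}=\one$. Hence
\begin{align*}
    \sum_i f(\psi_i)=\tr[T]=:k\; ,
\end{align*}
which is independent of the context. Extending $f$ additively to coarse-grained projectors, for instance $f(\Pi_{U,\chi}):=\tr[T\Pi_{U,\chi}]$, gives a well-defined finitely additive map. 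Well-definedness holds because the value is computed from the operator itself rather than from any particular decomposition, and Lm.~\ref{lm: character restriction} guarantees consistency under coarse-graining. So $f$ is a Boolean-valued frame function of weight $k$, and $k$ is automatically a nonnegative integer.

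By Thm.~\ref{thm: stabiliser frame functions}, for $n\geq 3$ this $f$ is constant: either $f\equiv 0$ (so $k=0$) or $f\equiv 1$ (so $k=2^n$). It remains to lift this back to $T$. Lm.~\ref{lm: stabiliser Gleason} states that the Hermitian operator representing a stabiliser frame function is unique, since it is determined by its Pauli coefficients $a_v$. For completeness, note that $\tr[T\Pi_{L,\chi}]$ for all $\chi\in L^*$ determines, by Fourier inversion as in Eq.~(\ref{eq: inverse FT}), every coefficient $\tr[TW_v]$ with $v\in L$. Every $v$ lies in some Lagrangian, so all Pauli coefficients of $T$ are determined.

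In the case $f\equiv0$, both $T$ and $0$ represent $f$, so $T=0$. In the case $f\equiv1$, $\tr[\one\,\Pi_\psi]=1$ for every stabiliser state, so $T=\one$. Thus $T\in\{0,\one\}$.

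The only point needing care is that Thm.~\ref{thm: stabiliser frame functions} requires additivity only over contexts $C_U$ with $U\in\Iso(V)$, which trace-linearity supplies at once. The main potential obstacle is the uniqueness step: that stabiliser projectors span $\cL_\mathrm{sa}(\C^{2^n})$. The Fourier-inversion argument above settles this directly, so I expect no real difficulty.
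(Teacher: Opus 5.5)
Your proposal is correct and matches the paper's proof: define $f_T(\psi)=\tr[T\Pi_\psi]$, note it is a frame function of weight $\tr[T]$ because any stabiliser basis sums to $\one$, apply Thm.~\ref{thm: stabiliser frame functions} to get $f_T\equiv c$ with $c\in\{0,1\}$, and conclude $T=c\one$ by the uniqueness in Lm.~\ref{lm: stabiliser Gleason}. Your Fourier-inversion argument, which recovers every $\tr[TW_v]$ from a Lagrangian containing $v$, simply spells out the uniqueness that the paper cites.
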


\begin{proof}
    Let $f_T(\psi):=\tr[T\Pi_\psi]$. For any stabiliser basis $B$ we have $\sum_{\psi\in B}f_T(\psi)=\sum_{\psi\in B}\tr[T\Pi_\psi]=\tr[T]$, hence, $f_T$ is a frame function. If $f_T(\psi)\in\{0,1\}$ for all $\psi\in\StabS$, then $f_T=c$ for $c\in\{0,1\}$ by Thm.~\ref{thm: stabiliser frame functions}. Hence, $\tr[(T-c\one)\Pi_\psi]=0$ for all $\psi\in\StabS$ and thus $T=c\one$ by uniqueness in Lm.~\ref{lm: stabiliser Gleason}.
\end{proof}

Cor.~\ref{cor: Boolean stabiliser Gleason} generalises the fact that $\Lambda_n$ contains no globally outcome-deterministic ontic states $X\in\cL_1(\C^{2^n})$ for all stabiliser measurements and $n\geq 2$ - which holds as a consequence of standard contextuality arguments - to all operators $X\in\cL_\mathrm{sa}(\C^{2^n})$ without the restriction to unit trace whenever $n\geq 3$.\\

Finally, we note that a similar Gleason-type representation exists also in the symplectic case.

\begin{lemma}\label{lm: symplectic Gleason}
    Let $f:\StabSS\ra\R_+$ be a frame function. Then there exists a unique abstract operator $t\in\R[V]$ such that
    \begin{align}\label{eq: symplectic Gleason}
        f(\pi_{L,\chi})
        =\langle t,\pi_{L,\chi}\rangle\; .
    \end{align}
\end{lemma}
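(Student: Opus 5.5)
The plan is to mirror the proof of Lm.~\ref{lm: stabiliser Gleason}, with $\lambda_U\equiv 0$ and $W_v$ replaced by the abstract basis vectors $e_v$. For each $L\in\Lag(V)$ I set $f_L(\chi):=f(\pi_{L,\chi})$ and define the Fourier coefficients $b_{v,L}:=\hf_L(v)=\sum_{\chi\in L^*}(-1)^{\chi(v)}f_L(\chi)$ for $v\in L$. Then I put $a_v:=b_{v,L}$ for any Lagrangian $L\ni v$, and finally
\begin{align*}
    t:=\frac{1}{2^n}\sum_{v\in V}a_v e_v\in\R[V]\; .
\end{align*}
The proof has three steps: well-definedness of $a_v$, the reproducing identity Eq.~(\ref{eq: symplectic Gleason}), and uniqueness.

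\textbf{Well-definedness.} Split $L^*$ according to $\chi(v)$. Then
\begin{align*}
    b_{v,L}=\sum_{\chi(v)=0}f(\pi_{L,\chi})-\sum_{\chi(v)=1}f(\pi_{L,\chi})\; .
\end{align*}
By Lm.~\ref{lm: character restriction}, applied with $\tU=\langle v\rangle\subset L$ and $\lambda\equiv0$, the sum $\sum_{\chi(v)=c}\pi_{L,\chi}$ equals $\pi_{\langle v\rangle,c}$. Since $f$ is finitely additive on contexts, this gives $b_{v,L}=f(\pi^+_v)-f(\pi^-_v)$, which does not depend on $L$. For $v=0$ we get $a_0=f(e_0)$, the weight. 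This step needs only additivity on the contexts $c_U$ of $\ConPP$, and is the one point that genuinely uses the frame-function property.

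\textbf{Reproducing identity.} Use $\langle e_v,e_w\rangle=2^n\delta_{vw}$ together with Eq.~(\ref{eq: symplectic projectors}):
\begin{align*}
    \langle t,\pi_{L,\chi}\rangle=\frac{1}{2^{2n}}\sum_{v\in V}\sum_{w\in L}(-1)^{\chi(w)}a_v\langle e_v,e_w\rangle=\frac{1}{2^n}\sum_{w\in L}(-1)^{\chi(w)}b_{w,L}=f_L(\chi)\; .
\end{align*}
The last equality is Fourier inversion on $L^*$, exactly as in Eq.~(\ref{eq: inverse FT}).

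\textbf{Uniqueness.} Suppose $t'$ also satisfies Eq.~(\ref{eq: symplectic Gleason}), and write $t-t'=\sum_v c_ve_v$. Then $\langle t-t',\pi_{L,\chi}\rangle=\sum_{w\in L}(-1)^{\chi(w)}c_w$ vanishes for all $\chi\in L^*$. Inverting the Fourier transform on each $L$ gives $c_w=0$ for every $w\in L$. Every $v\in V$ lies in some Lagrangian, since $\langle v\rangle$ is isotropic and extends to a maximal isotropic subspace, so $c\equiv0$. Equivalently, the vectors $\pi_{L,\chi}$ span $\R[V]$.

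I do not expect a real obstacle. The only care needed is that $\lambda_U\equiv 0$ is used consistently, so no sign twist like $(-1)^{\lambda_L(v)}$ enters the definition of $a_v$, and that the normalisation $2^n$ of $\langle e_v,e_w\rangle$ is tracked through the reproducing identity.
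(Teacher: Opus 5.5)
Your proposal is correct and is essentially the paper's proof: the same vector $t=\frac{1}{2^n}\sum_{v\in V}a_ve_v$ with $a_v=b_{v,L}$, made independent of $L$ via $f(\pi^+_v)-f(\pi^-_v)$, and Fourier inversion on $L^*$ for the reproducing identity. For uniqueness, the paper argues that the $\pi_{L,\chi}$ span $\R[V]$ because $e_v=\pi^+_v-\pi^-_v$ with each $\pi^\pm_v$ a sum of Lagrangian atoms; this is the dual form of your Fourier-inversion argument on a Lagrangian containing $v$.
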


\begin{proof}
    The proof is analogous to that of Lm.~\ref{lm: stabiliser Gleason}, with $t=\frac{1}{2^n}\sum_{v\in V}a_ve_v$, the Hilbert-Schmidt inner product replaced by $\langle\cdot,\cdot\rangle$, and with $\lambda_L=0$ for all $L\in \Lag(V)$ and thus $a_v=b_{v,L}$. Uniqueness follows since the vectors $\pi_{L,\chi}$ span $\R[V]$, for $e_v=\pi^+_v-\pi^-_v$ and each $\pi^\pm_v$ is a sum of Lagrangian atoms by Lm.~\ref{lm: character restriction}.
\end{proof}

Indeed, the vectors $t_g$ and $t_{q,\alpha}$ in Eq.~(\ref{eq: weight-1 frame function operator}) and Eq.~(\ref{eq: q-level operators}), respectively, are of the form in Eq.~(\ref{eq: symplectic Gleason}).

\section{On the relation with Cameron-Liebler sets and qudit stabiliser theory}\label{app: CL correspondence}

The abstract symplectic stabiliser theory considered above admits a natural interpretation in finite affine geometry. We briefly make this connection explicit and relate Thm.~\ref{thm: symplectic Boolean frame functions} to the Cameron--Liebler problem studied by Guo and Wan in Ref.~\cite{GuoWan2023}, and discuss the generalisation to the qudit case.\\

\textbf{Classification of binary affine-symplectic Cameron--Liebler sets.} Let $(V=\mathbb F_2^{2n},\omega)$ and denote by
$\cO_n$ the set of maximal totally isotropic flats, that is, the set of affine Lagrangians
\begin{align*}
   \cO_n
   =\{a+L\mid L\in\Lag(V),\ a\in V\}\; ,
\end{align*}
where $a+L$ denotes a coset of a Lagrangian subspace $L$. For every $L\in\Lag(V)$, nondegeneracy of $\omega$ induces an
isomorphism $V/L \stackrel{\sim}{\longrightarrow} L^*$ given by $a+L\longmapsto \chi_a^L$ with $\chi_a^L(v):=\omega(a,v)$. Consequently, there is a bijection
\begin{equation}\label{eq: affine bijection}
   \Phi:S_n^{\rm symp}\stackrel{\sim}{\longrightarrow}\cO_n\; ,\qquad
   \Phi(\pi_{L,\chi_a^L})=a+L\; \; ,
\end{equation}
and this bijection preserves the relevant incompatibility relation. Indeed, for $L,M\in\Lag(V)$,
\begin{align*}
   (a+L)\cap(b+M)\neq\emptyset
   \Longleftrightarrow
   a-b\in L+M
   \Longleftrightarrow
   \omega(a-b,v)=0\quad\forall\,v\in L\cap M\; ,
\end{align*}
where we used $(L+M)^\perp=L^\perp\cap M^\perp=L\cap M$. Together with Eq.~\eqref{eq: stabiliser orthogonality}, this yields
\begin{equation}\label{eq: orthogonality vs disjointness}
   \pi_{L,\chi_a^L}\perp\pi_{M,\chi_b^M}
   \iff
   (a+L)\cap(b+M)=\emptyset\; .
\end{equation}
A set of $2^n$ pairwise orthogonal elements of $S_n^{\rm symp}$ therefore corresponds to $2^n$ pairwise disjoint affine Lagrangians. Since each contains $2^n$ points and $|V|=2^{2n}$, such a set partitions $V$. Thus maximal contexts in $C(P_n^{\rm symp})$ correspond precisely to maximal totally isotropic spreads in the terminology of Ref.~\cite{GuoWan2023}.

Let $M$ denote the point--flat incidence matrix, defined for $x\in V$ and $F\in\cO_n$ by
\begin{align*}
   M_{x,F}=
   \begin{cases}
      1&x\in F\\
      0&x\notin F
   \end{cases}\; .
\end{align*}
Ref.~\cite{GuoWan2023} calls a subset $\cL\subseteq\cO_n$ a \emph{Cameron--Liebler set with parameter $x$} if its characteristic vector satisfies
\begin{align*}
   \chi_\cL\in\mathrm{Im}(M^\mathsf T)\; ,\qquad
   |\cL|
   =x\prod_{i=1}^n(2^i+1)\; .
\end{align*}
By Ref.~\cite[Thm.~4.5]{GuoWan2023}, in the symplectic case this is equivalent to
\begin{equation}\label{eq: x}
   |\cL\cap\cS|=x\qquad
   \text{for every maximal totally isotropic spread }\cS\; .
\end{equation}
It follows immediately from Eqs.~(\ref{eq: affine bijection})--(\ref{eq: x}) that Boolean frame functions are precisely Cameron--Liebler sets.

\begin{proposition}\label{prop: frame function - Cameron-Liebler correspondence}
    The correspondence
    \begin{align*}
       f\longmapsto
       \cL_f:=
       \{\Phi(\pi)\mid \pi\in S_n^{\rm symp},\,f(\pi)=1\}
    \end{align*}
    defines a bijection between Boolean-valued frame functions $f:S_n^{\rm symp}\to\{0,1\}$ of weight $k$ and Cameron--Liebler sets of maximal totally isotropic flats with parameter $x=k$.
\end{proposition}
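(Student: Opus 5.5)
The plan is to show that $\Phi$ carries the defining conditions of a Boolean-valued frame function exactly onto the defining conditions of a Cameron--Liebler set, in the form of Eq.~(\ref{eq: x}). The argument has three steps.

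First, I will check that $\Phi$ in Eq.~(\ref{eq: affine bijection}) is well defined and bijective. For a fixed $L\in\Lag(V)$, the map $a+L\mapsto\chi^L_a$ is an isomorphism $V/L\to L^*$. Injectivity holds because $\omega(a,\cdot)|_L=0$ forces $a\in L^\perp=L$. Surjectivity then follows by counting, since $|V/L|=2^n=|L^*|$. Ranging over all $L$ gives a bijection between $\StabSS=\{\pi_{L,\chi}\}$ and $\cO_n$. Consequently $f\mapsto\cL_f$ is a bijection between $\{0,1\}$-valued functions on $\StabSS$ and subsets of $\cO_n$, with inverse $\cL\mapsto f_\cL:=\chi_\cL\circ\Phi$.

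Second, I will match the contexts. By Eq.~(\ref{eq: orthogonality vs disjointness}), $\Phi$ sends orthogonal pairs to disjoint pairs of flats. The maximal contexts of $\ConSS$ are sets of $2^n$ mutually orthogonal atoms, so they correspond to sets of $2^n$ pairwise disjoint affine Lagrangians. Each such flat has $2^n$ points and $|V|=2^{2n}$, so these sets are exactly the partitions of $V$ into flats, i.e.\ the maximal totally isotropic spreads. Conversely, every spread has pairwise disjoint members, hence pulls back to $2^n$ mutually orthogonal atoms and therefore to a maximal context. This gives a bijection between $\cC_\mathrm{max}(\pV)$ and the set of spreads.

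Third, I will compare the counting conditions. A Boolean-valued frame function of weight $k$ is, by the equivalent formulation after Def.~\ref{def: noncontextual property}, exactly a map on atoms that selects $k$ atoms in every maximal context. Finite additivity on coarse-grainings is then automatic, because elements of $\pV$ are sums of atoms inside Boolean algebras. Under the bijection of the second step, selecting $k$ atoms in every maximal context becomes $|\cL_f\cap\cS|=k$ for every spread $\cS$. By Eq.~(\ref{eq: x}), which is Ref.~\cite[Thm.~4.5]{GuoWan2023}, this is precisely the condition for $\cL_f$ to be a Cameron--Liebler set with parameter $x=k$.

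The main obstacle is the second step. One must be sure that the frame-function condition is imposed on all maximal contexts of $\ConSS$, and not only on those of $\ConPP$. Spreads built from different Lagrangians correspond only to the larger poset $\ConSS$, so the correspondence requires the frame function to be taken over $\ConSS$. For $n\geq3$, Thm.~\ref{thm: symplectic Boolean frame functions} shows that the two notions coincide anyway. A secondary point to handle is that every maximal context contains exactly $2^n$ atoms. This is true by construction in Eq.~(\ref{eq: symplectic stab contexts}), which is what makes the counting in the second step valid.
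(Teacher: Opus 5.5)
Your proposal is correct and follows essentially the same argument as the paper: since $\Phi$ turns orthogonality into disjointness, the maximal contexts of $\cC(\cP^{\mathrm{symp}}_n)$ are exactly the maximal totally isotropic spreads, and the frame-function condition then becomes Eq.~(\ref{eq: x}) via Guo--Wan's Thm.~4.5. You additionally make explicit points the paper leaves implicit, namely the bijectivity of $\Phi$ via $V/L\cong L^*$, the converse direction from spreads back to maximal contexts, and the fact that the frame-function condition must be imposed over the full symplectic stabiliser poset rather than only over the isotropic-subspace contexts.
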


\begin{proof}
    By Eq.~\eqref{eq: orthogonality vs disjointness}, maximal stabiliser contexts correspond exactly to maximal totally isotropic spreads, and, using Eq.~(\ref{eq: x}) (which holds by Ref.~\cite[Thm.~4.5]{GuoWan2023}), the frame-function condition $\sum_{\pi\in C}f(\pi)=k$ for every maximal context $C\in\ConSS$ is equivalent to $|\cL_f\cap\cS|=k$ for every spread $\cS$.
\end{proof}

In other words, for every $L\in\Lag(V)$ its $2^n$ cosets form a spread; hence, a weight-$k$ frame function selects $k$ cosets for each of the $|\Lag(V)|=\prod_{i=1}^n(2^i+1)$ Lagrangian subspaces, and consequently $|\cL_f|=k\prod_{i=1}^n(2^i+1)$. We may therefore restate Thm.~\ref{thm: symplectic Boolean frame functions} as a complete classification of binary affine-symplectic Cameron--Liebler sets.

\begin{corollary}\label{cor: binary-CL}
    Let $n\ge3$ and let $\cL$ be a Cameron-Liebler set of maximal totally isotropic flats with parameter $x$. Then $x\in\{0,1,2,2^n-2,2^n-1,2^n\}$. Up to complementation, and under the correspondence in Eq.~(\ref{eq: affine bijection}), every nontrivial such set is the support of one of the frame functions $f_{\alpha,c}$ or $f_{q,\alpha}$ of Thm.~\ref{thm: symplectic Boolean frame functions}.
    
    In particular, there are no other Cameron--Liebler sets of maximal totally isotropic flats over $\mathbb F_2$ for $n\ge3$.
\end{corollary}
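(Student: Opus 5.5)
The plan is to transport Thm.~\ref{thm: symplectic Boolean frame functions} through the bijection of Prop.~\ref{prop: frame function - Cameron-Liebler correspondence}. Let $\cL$ be a Cameron--Liebler set of maximal totally isotropic flats with parameter $x$. By Ref.~\cite[Thm.~4.5]{GuoWan2023}, Eq.~(\ref{eq: x}) holds: $\cL$ meets every maximal totally isotropic spread $\cS$ in exactly $x$ flats. Pulling back along $\Phi$ in Eq.~(\ref{eq: affine bijection}), define $f_\cL(\pi):=1$ if $\Phi(\pi)\in\cL$ and $0$ otherwise. By Eq.~(\ref{eq: orthogonality vs disjointness}), spreads are exactly the images of maximal contexts of $\ConSS$. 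Hence $f_\cL$ is a Boolean-valued frame function of weight $x$ on $\StabSS$, additive with respect to $\ConSS$ and a fortiori with respect to $\ConPP$. This pull-back is the inverse of the map in Prop.~\ref{prop: frame function - Cameron-Liebler correspondence}.

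Next I would apply Lm.~\ref{lm: weights for symplectic Boolean-valued frame functions} to $f_\cL$, which for $n\geq3$ gives $x\in\{0,1,2,2^n-2,2^n-1,2^n\}$. For $x\in\{0,2^n\}$ the frame function is constant, so $\cL=\emptyset$ or $\cL=\cO_n$; these are the trivial sets. Otherwise Thm.~\ref{thm: symplectic Boolean frame functions} says $f_\cL$ is one of $f_{\alpha,c}$, $f_{q,\alpha}$ or a complement of one. Complementation $f\mapsto1-f$ corresponds to $\cL\mapsto\cO_n\setminus\cL$, which is again Cameron--Liebler with parameter $2^n-x$ by Eq.~(\ref{eq: x}). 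So up to complementation, $\cL=\cL_{f}$ for $f\in\{f_{\alpha,c},f_{q,\alpha}\}$, that is, $\cL$ is the image under $\Phi$ of the support of that frame function.

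For the final sentence of the corollary, every $f_{\alpha,c}$ and $f_{q,\alpha}$ is additive on all of $\ConSS$, by the computations after Lm.~\ref{lm: compatible characters} and by Lm.~\ref{lm: noncontextual property in symplectic stabiliser poset}. Their supports therefore really are Cameron--Liebler sets, and the list is exhaustive and realised.

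The only nontrivial input beyond the paper's own results is the equivalence Eq.~(\ref{eq: x}) between the incidence-matrix definition and the spread-intersection property, which I would cite from Ref.~\cite{GuoWan2023}. The main point to check is the bookkeeping that $\Phi$ is a bijection respecting orthogonality and disjointness, so that spreads and maximal contexts match exactly. In particular, every spread consists of $2^n$ pairwise disjoint flats, and these correspond to $2^n$ pairwise orthogonal vectors, which form a maximal context of $\ConSS$.
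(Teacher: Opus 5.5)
Your proposal is correct and is essentially the paper's own argument: it transports Thm.~\ref{thm: symplectic Boolean frame functions} and the weight restriction of Lm.~\ref{lm: weights for symplectic Boolean-valued frame functions} through the bijection $\Phi$ of Prop.~\ref{prop: frame function - Cameron-Liebler correspondence}, uses Ref.~\cite[Thm.~4.5]{GuoWan2023} to pass between the incidence-matrix definition and the spread-intersection property, and identifies maximal contexts of $\ConSS$ with spreads by counting. Your explicit handling of complementation ($\cL\mapsto\cO_n\setminus\cL$, with parameter $2^n-x$) and of realisability, via additivity of $f_{\alpha,c}$ and $f_{q,\alpha}$ on all of $\ConSS$, spells out what the paper leaves implicit when it says the theorem can be restated as this classification.
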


Together with the $n=2$ classification above and the elementary $n=1$ case, this completes the binary classification in all ranks. For example, write $\alpha(v)=\omega(a_\alpha,v)$. The families corresponding to $f_{\alpha,0}$ are precisely the point-pencils
\begin{align*}
   \mathcal P_{a_\alpha}
   =\{a_\alpha+L\mid L\in\Lag(V)\}\; .
\end{align*}
If $q$ is a quadratic refinement of $\omega$, the family corresponding to the valuation $f_q$ may equivalently be written
\begin{align*}
   \cQ_q
   :=
   \{a+L\in\cO_n
      \mid q\text{ is constant on }a+L\},
\end{align*}
since $q(a+v)=q(a)+q(v)+\omega(a,v)$ is independent of $v\in L$ precisely when $\omega(a,\cdot)|_L=q|_L$. Thus parameter-one Cameron--Liebler sets consist of point-pencils and the quadratic-refinement families $\cQ_q$. Similarly, for a quadratic refinement $q$ of Witt index $n-1$, the parameter-two family associated with $f_{q,\alpha}$ in Thm.~\ref{thm: symplectic Boolean frame functions} is
\begin{align*}
    \cL_{q,\alpha}
    =\{a+L\mid\omega(a,\cdot)|_{L_q}
    =\alpha|_{L_q}\}\; ,\qquad
    L_q=L\cap q^{-1}(0),
\end{align*}
and its complement gives the parameter $2^n-2$ case.\\

\textbf{Odd-prime qudits and the general Cameron--Liebler classification problem.} The preceding correspondence places the binary result studied in this work within its natural generalisation to qudit stabiliser theory. Let now $V=\mathbb F_p^{2n}$ for an odd prime $p$. In this case, one may choose the standard Weyl representation (cf. Eq.~(\ref{eq: Weyl operators})) such that the multiplication cocycle vanishes on isotropic subspaces: since $2$ is invertible in $\mathbb F_p$,
\begin{align*}
    W_vW_w=\zeta^{\frac12\omega(v,w)}W_{v+w}\; ,\qquad \zeta=e^{2\pi i/p}\; ,
\end{align*}
hence, $W_vW_w=W_{v+w}$ whenever $\omega(v,w)=0$. Consequently, pure $n$-qudit stabiliser states are naturally parametrised by affine Lagrangian subspaces $a+L\subset V$, with $L\in\Lag(V)$ \cite{Gross2006,deBeaudrap2013}. Equivalently, the character $\chi\in L^*$ labelling a joint stabiliser eigenspace is uniquely represented by a coset $a+L$ through $\chi(v)=\omega(a,v)$ for $v\in L$. Orthogonality of stabiliser states is then equivalent to disjointness of the corresponding affine Lagrangians. Thus, unlike for qubits, where the nontrivial Pauli cocycle distinguishes the signed stabiliser theory from the abstract symplectic theory, for odd $p$ the Cameron--Liebler description applies directly to the stabiliser states themselves.

In particular, Prop.~\ref{prop: frame function - Cameron-Liebler correspondence} extends verbatim with $\mathbb F_2$ replaced by $\mathbb F_p$: Boolean-valued stabiliser frame functions of weight $k$ are in bijection with Cameron--Liebler sets of maximal totally isotropic flats with parameter $x=k$. Hence, a complete characterisation of noncontextual properties for odd-prime stabiliser theory is equivalent to the corresponding Cameron--Liebler classification problem.  Guo and Wan formulate this problem over arbitrary finite fields and establish several equivalent characterisations and partial classification results \cite{GuoWan2023}; Thm.~\ref{thm: symplectic Boolean frame functions} solves the binary case (via Cor.~\ref{cor: binary-CL}), yet a general classification of affine symplectic Cameron--Liebler sets is not presently known.

We end by commenting on the special rigidity of the binary case. In this case, the Fourier coefficients associated with a Boolean frame function are integers subject to strong parity restrictions, and restriction to a Pauli eigenspace produces only two lower-dimensional frame functions. Moreover, characteristic two admits quadratic refinements $q(v+w)=q(v)+q(w)+\omega(v,w)$, whose two Witt types provide the decisive structure in the weight-one and weight-two classifications. None of these features persists in the same form for odd $p$: Fourier coefficients are sums of $p$-th roots of unity, restriction along a nonzero Weyl operator decomposes into $p$ eigenspaces, and a nonzero alternating symplectic form cannot arise as the polarisation of an ordinary quadratic form in odd characteristic. The resulting Boolean incidence problem therefore has substantially more local degrees of freedom. This is consistent with the broader theory of Boolean degree-one functions on finite classical association schemes, where polar-space classification problems are known to admit considerably richer behaviour and complete classifications are generally difficult \cite{FilmusIhringer2019}.

\end{document}